\numberwithin{equation}{section}
\numberwithin{figure}{section}
\theoremstyle{plain}
\newtheorem*{thm*}{Theorem}
\theoremstyle{plain}
\newtheorem{thm}{Theorem}[section]
\newtheorem{lem}[thm]{Lemma}
\theoremstyle{definition}
\newtheorem{defn}[thm]{Definition}
\newtheorem*{defn*}{Definition}
\newtheorem{rem}[thm]{Remark}
\tikzset{
  big arrow/.style={
    decoration={markings,mark=at position 1 with {\arrow[scale=1.5,#1]{>}}},
    postaction={decorate},
    shorten >=0.4pt},
  big arrow/.default=black}
\numberwithin{equation}{section}
	\definecolor{myblue}{rgb}{.72,.83,.97}
	\definecolor{myred}{rgb}{0.9, 0.44, 0.37}
\begin{document}
\begin{titlepage}
\begin{center}
\vspace{4cm}
{\Huge\bfseries   D$_4$-flops  of the E$_7$-model \\  }
\vspace{2cm}
{%
\LARGE  Mboyo Esole$^{\spadesuit}$ and Sabrina Pasterski$^\dagger$\\}
\vspace{1cm}

{\large $^{\spadesuit}$ Department of Mathematics, Northeastern University}\par
{  360 Huntington Avenue, Boston, MA 02115, USA}\par
 \scalebox{.95}{Email: \quad {\tt  j.esole@northeastern.edu }}\par
\vspace{.3cm}
{\large $^\dagger$ Center for the Fundamental Laws of Nature,  Harvard University}\par
{ 17 Oxford Street, Cambridge, MA 02138, USA}\par
 \scalebox{.95}{Email:\quad {\tt  spasterski@fas.harvard.edu }}\par
\vspace{3cm}
{ \bf{Abstract:}}\\
\end{center}
{\date{\today\  \currenttime}}

We study  the geography of  crepant resolutions of E$_7$-models.  
An E$_7$-model is a Weierstrass model corresponding to the output of  Step 9 of Tate's algorithm characterizing the  Kodaira fiber of type III$^*$  over the generic point of a smooth  prime divisor. The dual graph of the Kodaira fiber of type III$^*$ is the affine Dynkin diagram of type $\widetilde{\text{E}}_7$.  
A Weierstrass model of type  E$_7$  is conjectured to have eight distinct crepant resolutions whose flop diagram is a Dynkin diagram of type E$_8$. 
We construct explicitly four of these eight crepant resolutions forming a  sub-diagram of type D$_4$. We explain how the flops between these four crepant resolutions can be understood using the flops between the crepant resolutions of  two well-chosen suspended pinch points. 
\vfill 

\noindent{Keywords: Elliptic fibrations, Crepant morphisms, Resolution of singularities, Weierstrass models}

\end{titlepage}

\tableofcontents

\newpage 

\section{Introduction}
This paper aims to explore the  geography of crepant resolutions of  E$_7$-models given by 
singular Weierstrass models corresponding to Step 9 of Tate's algorithm \cite{Tate,Bershadsky:1996nh,Katz:2011qp}.   
In M-theory and F-theory compactifications, such geometries produce E$_7$ gauge theories (respectively in five- and  six-dimensional spacetime with eight supersymmetric generators), with 
matter transforming in the adjoint and the fundamental representation of E$_7$ of dimension 56. We will denote that representation as $\bf{56}$ in the rest of the paper. 
    Flops between distinct crepant resolutions are then interpreted as phase transitions between distinct Coulomb chambers of the ${\cal N}=1$ five-dimensional gauge theory.   
        The intersection polynomial is not invariant under flops and has to be computed in each of the crepant resolutions.

The last few years have seen a deep improvement of our understanding of the geography of crepant resolutions of Weierstrass models.  
 Crepant resolutions of a singular Weierstrass model are relative minimal models (in the sense of Mori) over the Weierstrass model and are connected to each other by a finite sequence of flops. 
            One significant incomplete problem at the boundary between birational geometry and string geometry  is the explicit construction of all the crepant resolutions of a given Weierstrass model coming from Tate's algorithm \cite{Anderson:2017zfm}. 
The geography of these crepant resolutions is  the study of the network of flops connecting  them. 
 It is natural to ask what is the geography of the minimal models corresponding to a given Weierstrass model:  
 How many such minimal models are there? 
 What is the graph of their flops?\footnote{ We attach a graph to the collection of minimal models of a Weierstrass model such that  the nodes of the graph are in bijection with the minimal models and two nodes are connected by an edge when the two corresponding minimal models are connected by a flop. 
 Such a graph is called the {\em graph of flops of the minimal models}. 
 If the elliptic fibration is a Calabi-Yau threefold, the graph of flops corresponds to the incidence graph of the distinct chambers of the Coulomb branch of the five-dimensional gauge theory obtained by a compactification of M-theory on the elliptic fibration.  
 }

    The systematic investigation of the geography of minimal models of Weierstrass models started with the study of the crepant resolutions of the  SU($5$)-model and has been completed for $G$-models with $G$ a simple complex Lie group of low-rank such as SU($n$) (for $n=2,3,4,5$) \cite{ESY1,ESY2},  G$_2$,  Spin($7$), Spin($8$) \cite{G2}, and F$_4$ \cite{F4}.    
Semi-simple cases and cases with non-trivial Mordell--Weil groups have also been investigated recently \cite{Esole:2017hlw,SU2G2, SO4}. 
There are still some important omissions. 
With the exception of infinite series, E$_5$=Spin($10$), E$_6$ and E$_7$  are the two crucial cases left for which the details of the crepant resolutions defining each chamber are still a mystery.  The E$_6$-model  corresponds to a Kodaira fiber IV$^{*\text{s}}$  covered by Step 8 of Tate's algorithm while the E$_7$-model corresponds to the Kodaira fiber III$^*$ and step 9 of Tate's algorithm.

We would like to start  a detailed exploration of the   minimal models of the E$_7$-model.  In this context, minimal models are  crepant resolutions over the Weierstrass model.  
We would like to  explicitly construct each minimal model over the Weierstrass model as a  projective variety defined by a crepant resolution of the Weierstrass model of an E$_7$-model and study the geography of these different crepant resolutions.  
     An E$_7$ Weierstrass  model is conjectured to have eight distinct crepant resolutions whose flop diagram is a Dynkin diagram of type E$_8$. However, to this day, these crepant resolutions have not been identified.   
    We will construct four of the eight conjectured minimal models of an E$_7$-model and show that their flops define a Dynkin diagram of type D$_4$.
    In our construction, the birational geometry of the suspended pinch point will be used to model the flops of the minimal models we discuss.

\subsection{Defining the E$_7$-model} 

An E$_7$-model is an elliptic fibration $Y\longrightarrow B$ over a smooth base $B$ with a choice of a  smooth prime divisor $S$ in the base  $B$  such that the  fiber over the generic point of $S$ is of Kodaira type III$^*$ 
 and all singular fibers over generic points of the discriminant locus away from $S$ are irreducible (of Kodaira type  II or I$_1$). 
  The name ``E$_7$-model'' stems from the fact that the  dual graph of a Kodaira fiber \cite{Kodaira} of type III$^*$ is the affine Dynkin diagram of type $\widetilde{\text{E}}_7$. 
E$_7$-models are  typically given by singular Weierstrass models using Step 9 of Tate's algorithm, which can be  traced to Proposition 4 of N\'eron's thesis \cite{Neron}.

Let $B$ be a smooth variety  of dimension two or higher over the complex numbers. 
Given a line bundle $\mathscr{L}$, we define the projective bundle of lines $\pi: X_0=\mathbb{P}_B[\mathscr{O}_B\oplus \mathscr{L}^{\otimes 2}\oplus \mathscr{L}^{\otimes 3}]\to B$. 
The projective bundle $X_0$  is the ambient space for a Weierstrass model \cite{Esole:2017csj}. 
Let $\mathscr{O}_{X_0}(1)$ be the dual of the tautological line bundle of $X_0$, a Weierstrass model is by definition the zero scheme of a section of the line bundle $\mathscr{O}_{X_0}(3)\otimes \pi^* \mathscr{L}^{\otimes 6}$. 
Throughout this paper, we work over the complex numbers $\mathbb{C}$.   Given sections $f_i$ of lines bundles $\mathscr{L}_i$, we denote by $V(f_1, \ldots, f_r)$ their vanishing scheme defined as the zero scheme $f_1=f_2=\cdots=f_r=0$.
If we denote the relative projective coordinates of $X_0$ as $[z:x:y]$, a Weierstrass model is the following vanishing locus 
$$
V(y^2z-x^3- f xz^2 -g z^3),
$$
where $f$ is a section of $\mathscr{L}^{\otimes 4}$ and $g$ is  a section of $\mathscr{L}^{\otimes 6}$.
The discriminant  and the $j$-invariant are given by the following expressions 
 $$
\Delta = 4 f^3 +27 g^2, \quad j=1728 \frac{4f^3}{\Delta}.
$$
The divisor $\Delta$ is a  section  of the line bundle  $\mathscr{L}^{\otimes 12}$.  
The locus of points of $B$ over which the fiber is singular is $V(\Delta)$. The fibers of a smooth Weierstrass model are all irreducible of type I$_0$ (smooth elliptic curve), I$_1$ (nodal elliptic curve), and II (cuspidal elliptic curve). 
Reducible fibers appear only after resolving the singularities of a singular Weierstrass model up to codimension-two. 

The general Weierstrass equation is \cite{Formulaire}
\begin{equation}
y^2z +a_1 x y z + a_3 y z^2- x^3 -a_2 x^2 z -a_{4}x z^2 -a_6 z^3=0. 
\end{equation}
 Let $S=V(s)$ be a smooth and irreducible Cartier divisor in a smooth variety $B$.  
We denote the  generic point of $S$ and its residue field by $\eta$ and $\kappa$, respectively. 
 We define a valuation $v_S$ such that $v_S(f)=n$ when a rational function $f$ has a zero of multiplicity $n$ if $n\geq 0$ or a pole of multiplicity $-n$ if $n< 0$. 

An E$_7$-model is  characterized by  Step  9 of Tate's algorithm and corresponds to  type (c7) in N\'eron's classification \cite{Neron}. Following  Proposition 4 of \cite{Neron}, an E$_7$-model is characterized by the  following restrictions on the valuations of the coefficients of the Weierstrass equation: 
\begin{equation}
v_S(a_1)\geq 1, \quad v_S(a_2)\geq 2, \quad v_S(a_3)\geq 3, \quad v_S(a_4)=3, \quad v_S(a_6)\geq 5.
\end{equation}
After completing the square in $y$ and the cube in $x$, an elliptic fibration with generic fiber of Kodaira type III$^*$  over $S=V(s)$ can always be written as the following Weierstrass model 
\begin{equation}
y^2z = x^3 + a_{4, 3} s^3 x z^2 + a_{6, 5+\beta} s^{5+\beta} z^3, \quad \beta\in \mathbb{Z}_{\geq 0},
\end{equation}
where $a_{i,j}$ is a section of $\mathscr{L}^{\otimes i} \otimes \mathscr{S}^{-\otimes j}$ where $\mathscr{S}=\mathscr{O}_B(S)$. 
Such an elliptic fibration is called an E$_7$-model. 
In this paper, we focus on the case $\beta=0$ and, to  ease the notation, we will write the Weierstrass model as follows 
\begin{equation}\label{eq:E7}
y^2z = x^3 + a s^3 x z^2 + b s^5 z^3,
\end{equation}
where $S=V(s)$ is a smooth Cartier divisor defined by the zero locus of a section of the line bundle $\mathscr{S}$,  $a$ is a section of $\mathscr{L}^{\otimes 4} \otimes \mathscr{S}^{-\otimes 3}$, 
and $b$ is a section of $\mathscr{L}^{\otimes 6} \otimes \mathscr{S}^{-\otimes 5}$. 
We assume that $a$ and $b$ have zero valuation along $S$ and $V(a)$ and $V(b)$ are smooth divisors in $B$ intersecting transversally. 

The discriminant locus of this Weierstrass model is the vanishing scheme of $\Delta$ with 
\begin{equation}
\Delta=s^9 (4a^3 + 27 b^2 s).
\end{equation}
The reduced discriminant locus consists of two prime divisors, namely 
\begin{equation}
S=V(s)\quad  \text{ and }\quad 
\Delta'= V(4a^3 + 27b^2 s).
\end{equation}
The divisor $\Delta'$ has cuspidal singularities at $V(a,b)$ that worsen to triple point singularities over $V(a,b,s)$. The divisors $S$ and $\Delta'$ do not intersect transversally as their intersection scheme consists of triple points  $(s,a^3)$. 
Since the generic point of $\Delta'$ is regular, we can still apply Tate's algorithm along $\Delta'$. 
The fiber over the generic point of $S$ has Kodaira type  III$^*$ and the generic fiber over $\Delta'$ is of Kodaira type I$_1$. 
The collision of singularities III$^*$+I$_1$ is not in the list of Miranda as the two fibers have distinct $j$-invariant. 
We do expect a degeneration of the fiber III$^*$ over the  the codimension-two locus $V(s,a)$ of the base $B$ and further at the codimension-three loci $V(s,a,b)$. 

\subsection{E$_7$ facts and conjectures}
In this section, we recall some  facts and conjectures about the  E$_7$-model that are relevant to appreciate the questions addressed in this paper. 
\begin{enumerate}
 \item {\bf Topological invariants.} 
 The Hodge numbers and the Euler characteristic of a crepant resolution are invariant under flops and can therefore be computed in any crepant resolution. 
 The Euler characteristic over a base of arbitrary dimension of an E$_7$-model has been computed in \cite{Euler}. The Hodge numbers and the Euler characteristic of an E$_7$-model in the special case of an elliptically fibered Calabi-Yau threefold that is obtained by a crepant resolution of a Weierstrass model are independent of the choice of crepant resolution \cite{Batyrev.Betti} and are given in \cite{Euler}. 
 The key point is that the Euler characteristic is  
$$\chi=-60K^2 -98K\cdot S - 42S^2=-60 K^2 +196 (1-g)+56 S^2.$$
 The characteristic numbers of an E$_7$-model are computed in \cite{EK.Invariant}. See also \cite{EKY,EFY}.

\item {\bf Singular fibers of fibral divisors and the minuscule  representation $\bf{56}$.}  The fiber of type III$^*$ degenerates over the collision of $S$ and $\Delta'$. The irreducible rational curves forming the singular fiber over $S\cap \Delta'$  have geometric weights in the representation $\mathbf{56}$ of E$_7$, which is a  minuscule fundamental representation.
 Thus, the matter representation associated with an E$_7$-model is the direct sum of the  adjoint ($\bf{133}$)  and the fundamental representation ($\mathbf{56}$) of E$_7$. 
 \item {\bf Non-Higgsable cluster.}
 The representation $\mathbf{56}$ does not occur when  $S$ and $\Delta'$ do not intersect. This is famously illustrated by the non-Higgsable cluster corresponding to the local Calabi-Yau threefold defined over the quasi-projective surface given by the total space of the line bundle $\mathscr{O}_{\mathbb{P}^1}(-8)$. See \cite{Morrison:2012np,DelZotto:2017pti,Bhardwaj:2018yhy}.

\item {\bf Loss of flatness.}  When the base of the fibration is at least of dimension three, a  crepant resolution of the E$_7$-model does not give a flat fibration over the base as there are codimension-three points  over which the fiber contains a full quadric surface as discussed in the partial toric resolutions of \cite{Candelas:2000nc}.  We will see that this happens exactly over the intersection of $S$ with the  singular locus of $\Delta'$, namely over the codimension-three locus $V(s,a,b)$ in the base $B$. 
\item {\bf The geography of crepant resolutions.}  The authors of \cite{Box} conjectured that there are eight crepant resolutions of the Weierstrass model of an E$_7$-model and the graph of their flops is a Dynkin diagram of type E$_8$. 
This is based on studying the hyperplane arrangement defined by the weights of the representation $\mathbf{56}$ inside the dual fundamental Weyl chamber of E$_7$. 
The crepant resolutions were not constructed explicitly. See also \cite{Diaconescu:1998cn}.
\item {\bf The fiber structure.}  
 The authors of \cite{Box} also conjectured  that the crepant resolution corresponding to the $\alpha_i$ of E$_8$ is such that the  generic fiber over $S$  degenerates over $V(a)$ 
 to a  non-Kodaira fiber whose dual graph is  the same as the affine Dynkin diagram of type $\widetilde{\text{E}}_8$ with the node $\alpha_i$ contracted to a point. 
 This has not been verified geometrically in more than one chamber. 
 \end{enumerate}

  \subsection{Summary of results}
  The key results of this paper are the following: 
 \begin{enumerate}
\item  We  give four distinct crepant resolutions of the Weierstrass model of the E$_7$-model. 
We show that the graph of the flops between these four crepant resolutions is a 
D$_4$-Dynkin subdiagram of the expected E$_8$ flop-diagram as illustrated in Figure \ref{fig:IE756}. 
More specifically the resolutions correspond to the nodes  $\alpha_4$, $\alpha_5$, $\alpha_ 6$, and $\alpha_8$ of the Dynkin diagram of type E$_8$ of  Figure \ref{fig:IE756}.  
For that reason, we call these resolutions Y$_4$, Y$_5$, Y$_6$, and Y$_8$, respectively. 
We show that the  D$_4$-Dynkin diagram of flops can be  understood by the flops of two suspended pinch points as illustrated on Figure \ref{Fig:A3A4D4}. 

\item We show that in the resolution Y$_i$ ($i=4,5,6,8$), the generic fiber over $V(s,a)$ corresponds to the Dynkin diagram of type $\widetilde{\text{E}}_8$ with the node $\alpha_i$ contracted to a point. 
\item For each resolution, we identify the new curves that appear over $V(s,a)$. They are extreme rays of the  resolution and their geometric weights are  always in the minuscule representation $\mathbf{56}$. 
In particular, we find the five extreme rays corresponding to the weights  $\varpi_{23}$, $\varpi_{26}$, $\varpi_{29}$, $\varpi_{30}$, $\varpi_{32}$ as illustrated on  Figure \ref{fig:IE756}. 
 \item We also show that in each of the four crepant resolutions studied in this paper, the divisor $D_6$ corresponding to the root $\alpha_6$ has a fiber that jumps in dimension over the codimension-three locus $V(s,a,b)$. All other fibers are always one  dimensional over any point of the base. 

\item We compute the triple intersection numbers of the fibral divisors and in this way give a geometric derivation of the Chern-Simons levels and of the superpotential of an E$_7$-model in the four chambers studied in this paper. 
\item We compute the number of hypermultiplets in an M-theory compactification on a Calabi-Yau threefold that is an E$_7$-model by comparing the triple intersection numbers with the one-loop corrected superpotential of a five-dimensional supergravity theory with Lie group E$_7$ and matter transforming in the adjoint and the fundamental representation $\bf{56}$ \cite{IMS}.
As expected, the number of adjoints is the genus of the curve supporting the fiber III$^*$ and 
the number of fundamentals is given by the number of intersection points between the two irreducible components 
of the reduced discriminant locus:
 $$n_{\bf{133}}=g, \quad n_{\bf{56}}=\frac{1}{2} S\cdot [a]=\frac{1}{2}\Big(8 (1-g)+S^2\Big).$$
But this intersection is not transverse as $S$ and $\Delta'$ intersect along the scheme $(s, a^3)$ which consists 
of triple points supported on $V(s,a)$. 
 The factor of $1/2$ indicates that  each intersection point contributes one half-hypermultiplet,   which is possible because the representation $\bf{56}$ is pseudo-real. 
 When $S$ is a rational curve, $n_{\bf{133}}=0$. If $S$ is a rational curve  of self-intersection $S^2=-8$, we get n$_{\bf{56}}=0$ while when it is  a rational curve of self-intersection $S^2=-7$ we get n$_{\bf{56}}=1/2$, which corresponds to a unique half-hypermultiplet.
 \end{enumerate}

For convenience, we collect our results in the following pages.

\clearpage

\begin{figure}[b]
\begin{center}

 \begin{tikzpicture}[scale=.9]
 \node[draw,circle,thick,scale=1] (0) at (0,0){$Ch_1$};
				\node[draw,circle,thick,scale=1] (1) at (1.2*1.8,0){$Ch_2$};
				\node[draw,circle,thick,scale=1] (2) at (2.4*1.8,0){$Ch_3$};
				\node[draw,circle,thick,scale=1, fill=lightgray] (3) at (3.6*1.8,0){$Ch_4$};
				\node[draw,circle,thick,scale=1,fill=lightgray] (4) at (4.8*1.8,0){$Ch_5$};
				\node[draw,circle,thick,scale=1,fill=lightgray] (5) at (6*1.8,0){$Ch_6$};
				\node[draw,circle,thick,scale=1] (6) at (7.2*1.8,0){$Ch_7$};
				\node[draw,circle,thick,scale=1,fill=lightgray] (7) at (4.8*1.8,1.2*1.8){$Ch_8$};
				\draw[thick] (0)-- node[above] {$\varpi_{19}$} (1)--node[above] {$\varpi_{20}$} (2)--node[above] {$\varpi_{23}$} (3)--node[above] {$\varpi_{26}$} (4)--node[above] {$\varpi_{29}$} (5)--node[above] {$\varpi_{32}$} (6);
				\draw[thick]  (4)--   node[right] {$\varpi_{30}$}  (7);
					\end{tikzpicture}
 \end{center}
 \caption{Adjacency  graph of the chambers of the hyperplane arrangement I$(E_7, \mathbf{56})$. Each chamber is simplicial. 
   A weight $\varpi$ between two nodes indicates that the corresponding  chambers  are separated by the hyperplane $\varpi^\bot$.  For example, one goes from Ch$_1$ to Ch$_2$ by crossing the hyperplane $\varpi_{19}^\bot$. 
 The colored chambers forming a subgraph of type D$_4$ are those corresponding to the nef-cone of the crepant resolutions constructed in this paper. 
    \label{fig:IE756}}
\end{figure}
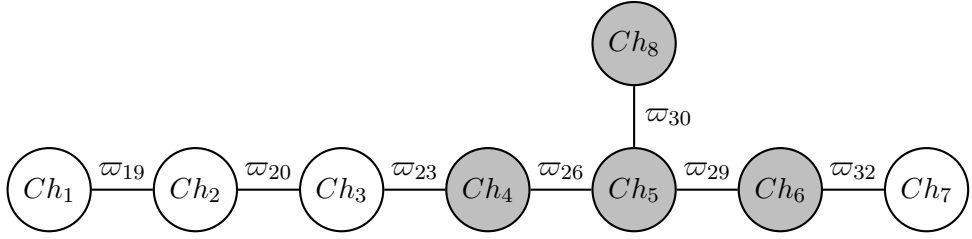

\begin{table}[hbt]
\begin{center}
\scalebox{.9}{
\begin{tabular}{|c|c|}
\hline
\begin{tikzpicture}[scale=.8]
				\node at (1.4,2){$Ch_1$};
				\node[draw,circle,thick,scale=1,label=below:{\scalebox{1.2}{$\varpi_{19}$}}] (1) at (1.2,0){$-$};
				\node[draw,circle,thick,scale=1,label=below:{\scalebox{1.2}{$\varpi_{20}$}}] (2) at (2.4,0){$-$};
				\node[draw,circle,thick,scale=1,label=below:{\scalebox{1.2}{$\varpi_{23}$}}] (3) at (3.6,0){$-$};
				\node[draw,circle,thick,scale=1,label=below:{\scalebox{1.2}{$\varpi_{26}$}}] (4) at (4.8,0){$-$};
				\node[draw,circle,thick,scale=1,label=below:{\scalebox{1.2}{$\varpi_{29}$}}] (5) at (6,0){$-$};
				\node[draw,circle,thick,scale=1,label=below:{\scalebox{1.2}{$\varpi_{32}$}}] (6) at (7.2,0){$-$};
				\node[draw,circle,thick,scale=1, label=above:{\scalebox{1.2}{$\varpi_{30}$}}] (8) at (4.8,1.6){$-$};
				\draw[thick] (1)--(2)--(3)--(4)--(5)--(6);
				\draw[thick]  (4)--(8);
					\end{tikzpicture}& 			
					\begin{tikzpicture}[scale=.9]
				\node at (1.4,2){$Ch_2$};
				\node[draw,circle,thick,scale=1,label=below:{\scalebox{1.2}{$\varpi_{19}$}}] (1) at (1.2,0){$+$};
				\node[draw,circle,thick,scale=1,label=below:{\scalebox{1.2}{$\varpi_{20}$}}] (2) at (2.4,0){$-$};
				\node[draw,circle,thick,scale=1,label=below:{\scalebox{1.2}{$\varpi_{23}$}}] (3) at (3.6,0){$-$};
				\node[draw,circle,thick,scale=1,label=below:{\scalebox{1.2}{$\varpi_{26}$}}] (4) at (4.8,0){$-$};
				\node[draw,circle,thick,scale=1,label=below:{\scalebox{1.2}{$\varpi_{29}$}}] (5) at (6,0){$-$};
				\node[draw,circle,thick,scale=1,label=below:{\scalebox{1.2}{$\varpi_{32}$}}] (6) at (7.2,0){$-$};
				\node[draw,circle,thick,scale=1, label=above:{\scalebox{1.2}{$\varpi_{30}$}}] (8) at (4.8,1.6){$-$};
				\draw[thick] (1)--(2)--(3)--(4)--(5)--(6);
				\draw[thick]  (4)--(8);
					\end{tikzpicture}	
					\\	\hline			
					\begin{tikzpicture}[scale=.9]
				\node at (1.4,2){$Ch_3$};
				\node[draw,circle,thick,scale=1,label=below:{\scalebox{1.2}{$\varpi_{19}$}}] (1) at (1.2,0){$+$};
				\node[draw,circle,thick,scale=1,label=below:{\scalebox{1.2}{$\varpi_{20}$}}] (2) at (2.4,0){$+$};
				\node[draw,circle,thick,scale=1,label=below:{\scalebox{1.2}{$\varpi_{23}$}}] (3) at (3.6,0){$-$};
				\node[draw,circle,thick,scale=1,label=below:{\scalebox{1.2}{$\varpi_{26}$}}] (4) at (4.8,0){$-$};
				\node[draw,circle,thick,scale=1,label=below:{\scalebox{1.2}{$\varpi_{29}$}}] (5) at (6,0){$-$};
				\node[draw,circle,thick,scale=1,label=below:{\scalebox{1.2}{$\varpi_{32}$}}] (6) at (7.2,0){$-$};
				\node[draw,circle,thick,scale=1, label=above:{\scalebox{1.2}{$\varpi_{30}$}}] (8) at (4.8,1.6){$-$};
				\draw[thick] (1)--(2)--(3)--(4)--(5)--(6);
				\draw[thick]  (4)--(8);
					\end{tikzpicture}& 	
					\begin{tikzpicture}[scale=.9]
				\node at (1.4,2){$Ch_4$};
				\node[draw,circle,thick,scale=1,label=below:{\scalebox{1.2}{$\varpi_{19}$}}] (1) at (1.2,0){$+$};
				\node[draw,circle,thick,scale=1,label=below:{\scalebox{1.2}{$\varpi_{20}$}}] (2) at (2.4,0){$+$};
				\node[draw,circle,thick,scale=1,label=below:{\scalebox{1.2}{$\varpi_{23}$}}] (3) at (3.6,0){$+$};
				\node[draw,circle,thick,scale=1,label=below:{\scalebox{1.2}{$\varpi_{26}$}}] (4) at (4.8,0){$-$};
				\node[draw,circle,thick,scale=1,label=below:{\scalebox{1.2}{$\varpi_{29}$}}] (5) at (6,0){$-$};
				\node[draw,circle,thick,scale=1,label=below:{\scalebox{1.2}{$\varpi_{32}$}}] (6) at (7.2,0){$-$};
				\node[draw,circle,thick,scale=1, label=above:{\scalebox{1.2}{$\varpi_{30}$}}] (8) at (4.8,1.6){$-$};
				\draw[thick] (1)--(2)--(3)--(4)--(5)--(6);
				\draw[thick]  (4)--(8);
					\end{tikzpicture}
					\\ \hline
						\begin{tikzpicture}[scale=.9]
				\node at (1.4,2){$Ch_5$};
				\node[draw,circle,thick,scale=1,label=below:{\scalebox{1.2}{$\varpi_{19}$}}] (1) at (1.2,0){$+$};
				\node[draw,circle,thick,scale=1,label=below:{\scalebox{1.2}{$\varpi_{20}$}}] (2) at (2.4,0){$+$};
				\node[draw,circle,thick,scale=1,label=below:{\scalebox{1.2}{$\varpi_{23}$}}] (3) at (3.6,0){$+$};
				\node[draw,circle,thick,scale=1,label=below:{\scalebox{1.2}{$\varpi_{26}$}}] (4) at (4.8,0){$+$};
				\node[draw,circle,thick,scale=1,label=below:{\scalebox{1.2}{$\varpi_{29}$}}] (5) at (6,0){$-$};
				\node[draw,circle,thick,scale=1,label=below:{\scalebox{1.2}{$\varpi_{32}$}}] (6) at (7.2,0){$-$};
				\node[draw,circle,thick,scale=1, label=above:{\scalebox{1.2}{$\varpi_{30}$}}] (8) at (4.8,1.6){$-$};
				\draw[thick] (1)--(2)--(3)--(4)--(5)--(6);
				\draw[thick]  (4)--(8);
					\end{tikzpicture}&					\begin{tikzpicture}[scale=.9]
				\node at (1.4,2){$Ch_6$};
				\node[draw,circle,thick,scale=1,label=below:{\scalebox{1.2}{$\varpi_{19}$}}] (1) at (1.2,0){$+$};
				\node[draw,circle,thick,scale=1,label=below:{\scalebox{1.2}{$\varpi_{20}$}}] (2) at (2.4,0){$+$};
				\node[draw,circle,thick,scale=1,label=below:{\scalebox{1.2}{$\varpi_{23}$}}] (3) at (3.6,0){$+$};
				\node[draw,circle,thick,scale=1,label=below:{\scalebox{1.2}{$\varpi_{26}$}}] (4) at (4.8,0){$+$};
				\node[draw,circle,thick,scale=1,label=below:{\scalebox{1.2}{$\varpi_{29}$}}] (5) at (6,0){$+$};
				\node[draw,circle,thick,scale=1,label=below:{\scalebox{1.2}{$\varpi_{32}$}}] (6) at (7.2,0){$-$};
				\node[draw,circle,thick,scale=1, label=above:{\scalebox{1.2}{$\varpi_{30}$}}] (8) at (4.8,1.6){$-$};
				\draw[thick] (1)--(2)--(3)--(4)--(5)--(6);
				\draw[thick]  (4)--(8);
					\end{tikzpicture}
					\\\hline

					\begin{tikzpicture}[scale=.9]
				\node at (1.4,2){$Ch_7$};
				\node[draw,circle,thick,scale=1,label=below:{\scalebox{1.2}{$\varpi_{19}$}}] (1) at (1.2,0){$+$};
				\node[draw,circle,thick,scale=1,label=below:{\scalebox{1.2}{$\varpi_{20}$}}] (2) at (2.4,0){$+$};
				\node[draw,circle,thick,scale=1,label=below:{\scalebox{1.2}{$\varpi_{23}$}}] (3) at (3.6,0){$+$};
				\node[draw,circle,thick,scale=1,label=below:{\scalebox{1.2}{$\varpi_{26}$}}] (4) at (4.8,0){$+$};
				\node[draw,circle,thick,scale=1,label=below:{\scalebox{1.2}{$\varpi_{29}$}}] (5) at (6,0){$+$};
				\node[draw,circle,thick,scale=1,label=below:{\scalebox{1.2}{$\varpi_{32}$}}] (6) at (7.2,0){$+$};
				\node[draw,circle,thick,scale=1, label=above:{\scalebox{1.2}{$\varpi_{30}$}}] (8) at (4.8,1.6){$-$};
				\draw[thick] (1)--(2)--(3)--(4)--(5)--(6);
				\draw[thick]  (4)--(8);
					\end{tikzpicture}& 
					\begin{tikzpicture}[scale=.9]
				\node at (1.4,2){$\displaystyle{Ch_8}$};
				\node[draw,circle,thick,scale=1,label=below:{\scalebox{1.2}{$\varpi_{19}$}}] (1) at (1.2,0){$+$};
				\node[draw,circle,thick,scale=1,label=below:{\scalebox{1.2}{$\varpi_{20}$}}] (2) at (2.4,0){$+$};
				\node[draw,circle,thick,scale=1,label=below:{\scalebox{1.2}{$\varpi_{23}$}}] (3) at (3.6,0){$+$};
				\node[draw,circle,thick,scale=1,label=below:{\scalebox{1.2}{$\varpi_{26}$}}] (4) at (4.8,0){$+$};
				\node[draw,circle,thick,scale=1,label=below:{\scalebox{1.2}{$\varpi_{29}$}}] (5) at (6,0){$-$};
				\node[draw,circle,thick,scale=1,label=below:{\scalebox{1.2}{$\varpi_{32}$}}] (6) at (7.2,0){$-$};
				\node[draw,circle,thick,scale=1, label=above:{\scalebox{1.2}{$\varpi_{30}$}}] (8) at (4.8,1.6){$+$};
				\draw[thick] (1)--(2)--(3)--(4)--(5)--(6);
				\draw[thick]  (4)--(8);
					\end{tikzpicture}
\\
\hline					\end{tabular}}
					\end{center}
					\caption{The eight chambers of I(E$_7$, $\bf{56})$. Each chamber is uniquely defined by the signs taken by the seven linear functions $\langle \varpi_i, \phi\rangle$ for $i=\{19, 20,23,26,29,32,30\}$. Together, they define a sign vector for the hyperplane arrangement. These weights satisfy a partial order whose Hasse diagram is a Dynkin diagram of type E$_7$ as illustrated in Figure \ref{Fig:SV}. 
					\label{Table:Ch}}

\end{table}
\clearpage

\newpage

\begin{table}[b!]
\begin{center}
\newcolumntype{C}{ >{\centering\arraybackslash} m{8cm} }
\scalebox{1.4}{\begin{tabular}{|c|C|}
\hline
\text{Chamber 4}
 & 
\scalebox{.8}{
\begin{tikzpicture}
				\node[draw,circle,thick,scale=1,fill=black,label=below:{\scalebox{1.2}{ $\alpha_0$}}] (0) at (0,0){$1$};
				\node[draw,circle,thick,scale=1,label=below:{\scalebox{1.2}{$\alpha_1$}}] (1) at (1.2,0){$2$};
				\node[draw,circle,thick,scale=1,label=below:{\scalebox{1.2}{$\alpha_2$}}] (2) at (2.4,0){$3$};
				\node[draw,circle,thick,scale=1,label=below:{\scalebox{1.2}{$\alpha_3$}}] (3) at (3.6,0){$4$};
				\node[draw,circle,thick,scale=1,label=below:{\scalebox{1.2}{$\alpha_5$}}] (5) at (6,0){$6$};
				\node[draw,circle,thick,scale=1,label=below:{\scalebox{1.2}{$\alpha_6$}}] (6) at (7.2,0){$4$};
				\node[draw,circle,thick,scale=1, label=below:{\scalebox{1.2}{$\alpha_7$}}] (7) at (8.4,0){$2$};
				\node[draw,circle,thick,scale=1, label=above:{\scalebox{1.2}{$\alpha_8$}}] (8) at (6,1.6){$3$};
				\draw[thick] (0)--(1)--(2)--(3)--(5)--(6)--(7);
				\draw[thick]  (5)--(8);
					\end{tikzpicture}	
			}

\\
\hline 
Chamber 5 & 
\scalebox{.8}{
\begin{tikzpicture}
				\node[draw,circle,thick,scale=1,fill=black,label=below:{\scalebox{1.2}{ $\alpha_0$}}] (0) at (0,0){$1$};
				\node[draw,circle,thick,scale=1,label=below:{\scalebox{1.2}{$\alpha_1$}}] (1) at (1.2,0){$2$};
				\node[draw,circle,thick,scale=1,label=below:{\scalebox{1.2}{$\alpha_2$}}] (2) at (2.4,0){$3$};
				\node[draw,circle,thick,scale=1,label=below:{\scalebox{1.2}{$\alpha_3$}}] (3) at (3.6,0){$4$};
				\node[draw,circle,thick,scale=1,label=below:{\scalebox{1.2}{$\alpha_4$}}] (4) at (4.8,0){$5$};
			 				\node[draw,circle,thick,scale=1,label=below:{\scalebox{1.2}{$\alpha_6$}}] (6) at (7.2,0){$4$};
				\node[draw,circle,thick,scale=1, label=below:{\scalebox{1.2}{$\alpha_7$}}] (7) at (8.4,0){$2$};
				\node[draw,circle,thick,scale=1, label=above:{\scalebox{1.2}{$\alpha_8$}}] (8) at (6,1.6){$3$};
				\draw[thick] (0)--(1)--(2)--(3)--(4)--(6)--(7);
				\draw[thick]  (6,0)--(8);
					\end{tikzpicture}	
					}
					\\
				\hline
Chamber 6 & 
		\scalebox{.8}{
		\begin{tikzpicture}
				\node[draw,circle,thick,scale=1,fill=black,label=below:{\scalebox{1.2}{ $\alpha_0$}}] (0) at (0,0){$1$};
				\node[draw,circle,thick,scale=1,label=below:{\scalebox{1.2}{$\alpha_1$}}] (1) at (1.2,0){$2$};
				\node[draw,circle,thick,scale=1,label=below:{\scalebox{1.2}{$\alpha_2$}}] (2) at (2.4,0){$3$};
				\node[draw,circle,thick,scale=1,label=below:{\scalebox{1.2}{$\alpha_3$}}] (3) at (3.6,0){$4$};
				\node[draw,circle,thick,scale=1,label=below:{\scalebox{1.2}{$\alpha_4$}}] (4) at (4.8,0){$5$};
				\node[draw,circle,thick,scale=1,label=below:{\scalebox{1.2}{$\alpha_5$}}] (5) at (6,0){$6$};
								\node[draw,circle,thick,scale=1, label=below:{\scalebox{1.2}{$\alpha_7$}}] (7) at (8.4,0){$2$};
				\node[draw,circle,thick,scale=1, label=above:{\scalebox{1.2}{$\alpha_8$}}] (8) at (6,1.6){$3$};
				\draw[thick] (0)--(1)--(2)--(3)--(4)--(5)--(7);
				\draw[thick]  (5)--(8);
					\end{tikzpicture}}
					\\
					\hline & \\
								Chamber 8 & 	
					\scalebox{.8}{
					\begin{tikzpicture}
				\node[draw,circle,thick,scale=1,fill=black,label=below:{\scalebox{1.2}{ $\alpha_0$}}] (0) at (0,0){$1$};
				\node[draw,circle,thick,scale=1,label=below:{\scalebox{1.2}{$\alpha_1$}}] (1) at (1.2,0){$2$};
				\node[draw,circle,thick,scale=1,label=below:{\scalebox{1.2}{$\alpha_2$}}] (2) at (2.4,0){$3$};
				\node[draw,circle,thick,scale=1,label=below:{\scalebox{1.2}{$\alpha_3$}}] (3) at (3.6,0){$4$};
				\node[draw,circle,thick,scale=1,label=below:{\scalebox{1.2}{$\alpha_4$}}] (4) at (4.8,0){$5$};
				\node[draw,circle,thick,scale=1,label=below:{\scalebox{1.2}{$\alpha_5$}}] (5) at (6,0){$6$};
				\node[draw,circle,thick,scale=1,label=below:{\scalebox{1.2}{$\alpha_6$}}] (6) at (7.2,0){$4$};
				\node[draw,circle,thick,scale=1, label=below:{\scalebox{1.2}{$\alpha_7$}}] (7) at (8.4,0){$2$};
							\draw[thick] (0)--(1)--(2)--(3)--(4)--(5)--(6)--(7);
				
					\end{tikzpicture}}
					\\
					\hline

	\end{tabular}	
	}
	\caption{Degeneration of the E$_7$-fiber over $V(s,a)$.  These fibers should be compared with the affine Dynkin diagram of type $\widetilde{\text{E}}_8$ in Figure \ref{Fig:E8}.
	The degenerated fiber is a Kodaira fiber of type II$^*$ with a node contracted to a point. 
	In each chamber, the node that is contracted is different. 
	These fibers are computed directly from explicit crepant resolutions of singularities. 
	 \label{Figure:NK}
	 }	
		\end{center}						
					
	\end{table}

\clearpage

\clearpage

\begin{table}[htb]
\begin{center}
{\setlength{\extrarowheight}{30pt} \begin{tabular}{|l|}
\hline
$
 \begin{aligned}
6{\cal F}_1(\phi) &=2(1-n_A) \left(4 \phi _1^3+3 \phi _2 \phi _1^2-6 \phi _2^2 \phi _1+4 \phi _2^3+4 \phi _3^3+4 \phi _4^3+4 \phi _5^3+4 \phi _6^3+4 \phi _7^3\right. \\
&\quad \left.-3 \phi _2 \phi _3^2+3 \phi _4 \phi _5^2+6 \phi _5 \phi _6^2-3 \phi _3^2 \phi _4-6 \phi _4^2 \phi _5-9 \phi _5^2 \phi _6-3 \phi _3^2 \phi _7\right)\\
&\quad -2n_F \phi _6 \left(6 \phi _1^2-6 \phi _2 \phi _1+6 \phi _2^2+6 \phi _3^2+6 \phi _4^2\right. \\
& \quad \left. +6 \phi _5^2+5 \phi _6^2+6 \phi _7^2-6 \phi _2 \phi _3-6 \phi _3 \phi _4-6 \phi _4 \phi _5-6 \phi _5 \phi _6-6 \phi _3 \phi _7\right)\\[1ex]
\end{aligned}
$
\\
\hline 
$
\begin{aligned}
6{\cal F}_2(\phi) & = 
2(1-n_A) \left(4 \phi _1^3+3 \phi _2 \phi _1^2-6 \phi _2^2 \phi _1+4 \phi _2^3+4 \phi _3^3+4 \phi _4^3+4 \phi _5^3+4 \phi _6^3\right. \\
& \quad \left. 
+4 \phi _7^3-3 \phi _2 \phi _3^2
+3 \phi _4 \phi _5^2
+6 \phi _5 \phi _6^2-3 \phi _3^2 \phi _4-6 \phi _4^2 \phi _5-9 \phi _5^2 \phi _6-3 \phi _3^2 \phi _7\right) \\
& \quad -2 n_F\left(\phi _1^3+3 \phi _6 \phi _1^2+3 \left(\phi _6-2 \phi _2\right) \phi _6 \phi _1    +2 \phi _6 \left(3 \phi _2^2-3 \phi _3 \phi _2+\right.\right. \\
& \quad \left. \left. 3 \phi _3^2+2 \phi _6^2+3 \phi _7^2+3 \left(\phi _4^2-\phi _5 \phi _4+\phi _5^2\right)-3 \phi _5 \phi _6-3 \phi _3 (\phi _4+\phi _7)      \right)\right)\\[1ex]
\end{aligned}$\\
\hline
$ \begin{aligned}
6{\cal F}_3(\phi)  &=2 \left(1-n_A\right)
(4 \phi _1^3+3 \phi _2 \phi _1^2-6 \phi _2^2 \phi _1+4 \phi _2^3+4 \phi _3^3+4 \phi _4^3+4 \phi _5^3+4 \phi _6^3\\
&\quad + 4 \phi _7^3-3 \phi _2 \phi _3^2+3 \phi _4 \phi _5^2+6 \phi _5 \phi _6^2-3 \phi _3^2 \phi _4-6 \phi _4^2 \phi _5-9 \phi _5^2 \phi _6-3 \phi _3^2 \phi _7)\\
& \quad 
-2n_F\big(\phi _2^3-3 \phi _1 \phi _2^2+3 \phi _6 \phi _2^2+3 \phi _1^2 \phi _2+3 (\phi _6-2 \phi _3) \phi _6 \phi _2\\
& \quad +3 \phi _6(2 \phi _3^2-2 (\phi _4+\phi _7) \phi _3+\phi _6^2+2 \phi _7^2+2 (\phi _4^2-\phi _5 \phi _4+\phi _5^2)-2 \phi _5 \phi _6)\big) \\[1ex]
\end{aligned}
$
\\
\hline
$
 \begin{aligned}
6{\cal F}_4(\phi) & =2 (1-n_A) (4 \phi _1^3+3 \phi _2 \phi _1^2-6 \phi _2^2 \phi _1+4 \phi _2^3+4 \phi _3^3+4 \phi _4^3+4 \phi _5^3\\
& \quad+4 \phi _6^3+4 \phi _7^3-3 \phi _2 \phi _3^2+3 \phi _4 \phi _5^2+6 \phi _5 \phi _6^2-3 \phi _3^2 \phi _4-6 \phi _4^2 \phi _5-9 \phi _5^2 \phi _6-3 \phi _3^2 \phi _7  ) \\
&\quad -2 n_F (\phi _3^3-3 \phi _2 \phi _3^2+3 \phi _6 \phi _3^2+3 \phi _2^2 \phi _3+3 \phi _6^2 \phi _3-6 \phi _4 \phi _6 \phi _3-6 \phi _6 \phi _7 \phi _3+2 \phi _6^3\\
&\quad -3 \phi _1 \phi _2^2-6 \phi _5 \phi _6^2+6 \phi _6 \phi _7^2+3 \phi _1^2 \phi _2+6 \phi _4^2 \phi _6+6 \phi _5^2 \phi _6-6 \phi _4 \phi _5 \phi _6  )\\[1ex]
\end{aligned}
$
\\
\hline
$
 \begin{aligned}
6{\cal F}_5(\phi) & = 
2   (1-n_A)    (4 \phi _1^3+3 \phi _2 \phi _1^2-6 \phi _2^2 \phi _1+4 \phi _2^3+4 \phi _3^3+4 \phi _4^3+4 \phi _5^3+4 \phi _6^3\\
& \quad+4 \phi _7^3-3 \phi _2 \phi _3^2+3 \phi _4 \phi _5^2+6 \phi _5 \phi _6^2-3 \phi _3^2 \phi _4-6 \phi _4^2 \phi _5-9 \phi _5^2 \phi _6-3 \phi _3^2 \phi _7    )   \\
&\quad -2n_F    (\phi _4^3-3 \phi _3 \phi _4^2+3 \phi _6 \phi _4^2+3 \phi _3^2 \phi _4+3 \phi _6^2 \phi _4-6 \phi _5 \phi _6 \phi _4+\phi _6^3+\phi _7^3-{3\phi_1\phi_2^2-3\phi_2\phi_3^2-6\phi_5\phi_6^2}\\
& \quad+3  (-\phi _3+\phi _4+\phi _6    ) \phi _7^2+3 \phi _1^2 \phi _2+3 \phi _2^2 \phi _3+6 \phi _5^2 \phi _6+3    (   (\phi _3-\phi _4    ){}^2+\phi _6^2-2 \phi _4 \phi _6    ) \phi _7    )\\[1ex]
\end{aligned}
$\\
\hline
$ \begin{aligned}
6{\cal F}_6(\phi)  &=2  (1-n_A)   (4 \phi _1^3+3 \phi _2 \phi _1^2-6 \phi _2^2 \phi _1+4 \phi _2^3+4 \phi _3^3+4 \phi _4^3+4 \phi _5^3+4 \phi _6^3\\
&\quad+4 \phi _7^3-3 \phi _2 \phi _3^2+3 \phi _4 \phi _5^2+6 \phi _5 \phi _6^2-3 \phi _3^2 \phi _4-6 \phi _4^2 \phi _5-9 \phi _5^2 \phi _6-3 \phi _3^2 \phi _7   )  \\  
 & \quad -2   n_F  (\phi _5^3-3 \phi _4 \phi _5^2+3 \phi _6 \phi _5^2+3 \phi _4^2 \phi _5-3 \phi _6^2 \phi _5+2 \phi _7^3-3 \phi _1 \phi _2^2\\
 & \quad -3 \phi _2 \phi _3^2-3 \phi _3 \phi _4^2+3     (\phi _5-\phi _3   ) \phi _7^2+3 \phi _1^2 \phi _2+3 \phi _2^2 \phi _3\\
& \quad+3 \phi _3^2 \phi _4+3     (\phi _3^2-2 \phi _4 \phi _3+2 \phi _4^2+\phi _5^2+2 \phi _6^2-2 \phi _4 \phi _5-2 \phi _5 \phi _6   ) \phi _7   ) \\[1ex]
\end{aligned}
$
\\
\hline
$ \begin{aligned}
6 {\cal F}_7(\phi)  & = 2(1-n_A)  (4 \phi _1^3+3 \phi _2 \phi _1^2-6 \phi _2^2 \phi _1+4 \phi _2^3+4 \phi _3^3+4 \phi _4^3+4 \phi _5^3+4 \phi _6^3\\
& \quad +4 \phi _7^3-3 \phi _2 \phi _3^2+3 \phi _4 \phi _5^2+6 \phi _5 \phi _6^2-3 \phi _3^2 \phi _4-6 \phi _4^2 \phi _5-9 \phi _5^2 \phi _6-3 \phi _3^2 \phi _7  ) \\
&\quad 
-6 n_F (\phi _7^3-\phi _3 \phi _7^2+ (\phi _3^2-2 \phi _4 \phi _3+2  (\phi _4^2-\phi _5 \phi _4+\phi _5^2+\phi _6^2-\phi _5 \phi _6  )  ) \phi _7\\
&\quad -\phi _1 \phi _2^2-\phi _2 \phi _3^2-\phi _3 \phi _4^2-\phi _4 \phi _5^2-\phi _5 \phi _6^2+\phi _1^2 \phi _2 +\phi _2^2 \phi _3+\phi _3^2 \phi _4+\phi _4^2 \phi _5+\phi _5^2 \phi _6  )\\[1ex]
\end{aligned}
$
\\
\hline
$
 \begin{aligned}
6{\cal F}_8(\phi) &= 
2   (1-n_A)(4 \phi _1^3+3 \phi _2 \phi _1^2-6 \phi _2^2 \phi _1+4 \phi _2^3+4 \phi _3^3+4 \phi _4^3+4 \phi _5^3+4 \phi _6^3+4 \phi _7^3\\
&\quad -3 \phi _2 \phi _3^2+3 \phi _4 \phi _5^2+6 \phi _5 \phi _6^2-3 \phi _3^2 \phi _4-6 \phi _4^2 \phi _5-9 \phi _5^2 \phi _6-3 \phi _3^2 \phi _7  ) 
\\
&\quad -2 n_F   (2 \phi _4^3-3 \phi _3 \phi _4^2+3 \phi _3^2 \phi _4+6 \phi _6^2 \phi _4-6 \phi _5 \phi _6 \phi _4-3 \phi _1 \phi _2^2-3 \phi _2 \phi _3^2\\
&\quad -6 \phi _5 \phi _6^2-3    (\phi _3-2 \phi _4  ) \phi _7^2+3 \phi _1^2 \phi _2+3 \phi _2^2 \phi _3+6 \phi _5^2 \phi _6+3 \phi _3    (\phi _3-2 \phi _4  ) \phi _7  ) \\[1ex]
\end{aligned}
$
\\
\hline
\end{tabular}}
\end{center}
\caption{Prepotential in the eight chambers of the Coulomb branch of E$_7$ with $n_A$ hypermultiplet charged in the  adjoint representation ($\bf{133}$) and 
$n_F$ hypermultiplets  charged in the fundamental representation ($\mathbf{56}$). \label{Table:Prepotential}}
\end{table}

\clearpage

\section{Preliminaries}

\subsection{$G$-models and Coulomb phases} 
Let $\varphi: X\longrightarrow B$ be an elliptic fibration defined over the complex numbers, so that $\varphi$ is a proper morphism between complex quasi-projective varieties. We will assume that the base $B$ of the fibration is a smooth complex variety and denote the discriminant locus by $\Delta$. Under mild assumptions, $\Delta$ is a Cartier divisor. 

Inspired by  the physics of F-theory, we attach to  the elliptic fibration $\varphi:Y\longrightarrow B$  a unique reductive complex Lie group $G$ and a representation $\mathbf{R}$ of its Lie algebra $\mathfrak{g}=\text{Lie}(G)$. 
The  Lie algebra $\mathfrak{g}$ is uniquely defined by the dual graphs of the singular fibers of $\varphi$ over the generic points of its discriminant locus $\Delta$, and $\varphi$ is referred to as a {\em $G$-model}.\footnote{If $\Delta_i$ are the irreducible components of the reduced discriminant and  $p_i$ is the generic point of $\Delta_i$, the Langlands dual of the dual graph of the fiber over $p_i$ has an affine Dynkin type $\widetilde{\mathfrak{g}}_i$ and 
the Lie algebra $\mathfrak{g}$ is the direct sum $\mathfrak{g}=\bigoplus_i\mathfrak{g}_i$. By definition, the Kodaira type associated to a component $\Delta_i$ is the type of the  geometric fiber over $p_i$. 
 The dual graph of a Kodaira fiber is always an affine Dynkin diagram of type ADE. 
 We assume that  at least one component $\Delta_i$ of the reduced discriminant has a reducible  fiber over its generic point, that is, a Kodaira fiber of type different from type II and type I$_1$.}  
 
The irreducible components of the singular fibers over codimension-two points of the base determine a finite set of weights belonging to the representation $\mathbf{R}$. 
Determining $\mathbf{R}$ from these weights is particularly straightforward when $\mathbf{R}$ is a (quasi)-minuscule representation, as in that case all the (nonzero) weights of $\mathbf{R}$ are then by definition in the same Weyl orbit. 
The group $G$ is such that its first homotopy group is isomorphic to the Mordell--Weil group of $\varphi$, and $\mathbf{R}$ is a representation  of $G$. 
Several aspects of the  geometry of the elliptic fibration $\varphi$ are controlled by the triple  $(\mathfrak{g},G,\mathbf{R})$. 

 M-theory compactified on a Calabi--Yau threefold given by a $G$-model results in an $\mathcal{N}=1$  five-dimensional  gauged supergravity theory with gauge group $G$.   In the language of five-dimensional supersymmetric gauge theories with eight supercharges, each distinct crepant resolution of the Weierstrass model corresponds to a different chamber of the Coulomb branch of the theory. 
  String dualities suggest that the graph of flops between distinct minimal models is isomorphic to the adjacency graph of the chambers of the hyperplane arrangement  I($\mathfrak{g},\mathbf{R}$) whose  hyperplanes are the kernels of the weights of the representation $\mathbf{R}$ restricted to the dual fundamental Weyl chamber of $\mathfrak{g}$.
  This picture  requires that the elliptic fibration is a  Calabi-Yau threefold, but we expect that the hyperplane arrangement is valid in a larger setup than the Calabi-Yau threefold case and even relevant to study  varieties more general than elliptic fibrations.  For example, it can be extended to the case of
 $\mathbb{Q}$-factorial terminal singularities that are partial resolutions of 
  threefolds with cDV singularities.

  \subsection{Geography of minimal models: decomposition of the relative movable cone into relative nef-cones.} 

  The description of the Coulomb branch of a five-dimensional gauge theory with gauge algebra $\mathfrak{g}$ and a representation $\mathbf{R}$ in terms of a hyperplane arrangement  I($\mathfrak{g},\mathbf{R}$) is strikingly similar to the point of view of birational geometers who rely on several cones to describe the birational geometry of projective varieties in the same birational class. In particular,  we refer to   Section 12.2 of Matsuki~\cite{Matsuki} and to Kawamata's seminal paper~\cite{Kawamata} for a description of the relative movable cone and its decomposition theorem into relative nef-cones.

Let $\varphi: Y\longrightarrow B$ be an elliptic fibration. 
Let $f:Y\rightarrow W$ be the birational map to its Weierstrass model. 
Then $f$ contracts all fibral divisors not touching the zero section of the fibration. 
In the case of an elliptic surface, that is enough to see that the Weierstrass model is the canonical model of $Y$. 
 When an elliptic fibration has a trivial Mordell-Weil group, the fibral divisors $D_m$ and the zero  section generate the relative N\'eron-Severi cone  N$^1(X/B)$. 
 
 Given  a crepant resolution $f:X\longrightarrow W$ of a Weierstrass model, the relative  Picard number $\rho(X/W)$ gives the rank of the gauge group and  N$^1(X/W)$  
is generated by the fibral divisors $D_m$ not touching the zero section. 
Each relative one-cycle $\varpi(C)$ defines a map N$^1(X/W)$ $\rightarrow\mathbb{Z}^{\rho(X/W)}$ via the intersection.  
The geometric weights of a relative 1-cycle $C$ is a vector $\varpi(C)\in \mathbb{Z}^{\rho(X/W)}$ given by the negative of the intersection numbers with the fibral divisors not touching the section of the elliptic fibration \cite{Aspinwall:1996nk,G2}: 
$$\varpi_m(C)=-\int_X {D}_i \cdot C,$$
where $D_i$ are the fibral divisors not touching the section of the elliptic fibration. 
Each relative one-cycle defines a hyperplane in N$^1(X/W)$: 
$$
\varpi^\bot(C)=\{ D\in \text{N}^1(X/W) |    \int_X (D\cdot C)=0\}.  
$$
Each divisor $D_i$ is a fibration over $S$ and we denote its generic fiber by $C_i$.  
The curves $C_i$ generate  an open convex cone in $\text{N}^1(X/W)$. 
We denote the open dual cone in  $\text{N}^1(X/W)$ by $\mathfrak{D}$: 
$$
\mathfrak{D}=\{ D \in \text{N}^1(X/W) |\quad \forall i  \int_X (D \cdot C_i)\geq 0 \}=\bigcap_{i=1}^{\rho(X/W)}  (C_i)_{\geq 0}.
$$
This cone corresponds to the dual Weyl chamber of $\mathfrak{g}$. Geometrically, we expect $\mathfrak{D}$ to be the cone of relative movable divisors $\overline{\text{Mov}}$($X/W$).

A pseudo-isomorphism is a birational map that is an isomorphism in codimension-one. 
A small $\mathbb{Q}$-factorial modification (SQM) is a pseudo-isomorphism between $\mathbb{Q}$-factorial projective varieties.  
     Given  a SQM between  two varieties $X_1$ and $X_2$, we get naturally an associated identifcation of the of the N\'eron-Severi groups N$^1(X_1)$ and N$^1(X_2)$ via pullback and pushforward.

 Any nef-divisor is movable and the partition theorem implies that  any movable divisor becomes nef after a finite number of flops. 
For any two $\mathbb{Q}$-factorial varieties Y$_1$ and Y$_2$ related by a birational map that is an isomorphism in codimension-one, the  cones N$^1(Y_1/W)$ and N$^1(Y_2/W)$ can be canonically identified by pushforward and pullback. We also identify in this way some other important subcones  of N$^1(Y/W)$ and N$^1(Y/W)$ such as the relative movable cones Mov($Y/W$), the ample cones Amp($Y/W$), and the big cones Big($Y/W$)
$$
\overline{\text{Amp}}(Y/S)\subset \overline{\text{Mov}}(Y/S)\subset \overline{\text{Big}}(Y/S)\subset \text{N}^1(Y/S).
$$
The closure of the ample cone is the nef-cone. 
In  ideal cases, we expect that the relative movable cone decomposes as a union of relative nef-cones of all the minimal models in the same birational orbit: 
$$
\overline{\text{Mov}}(Y/S)=\bigsqcup_i    \overline{\text{Amp}}(Y_i/S),
$$
where the union is over all minimal models in the same birational class of $Y$ and the interiors of the  cones $\overline{\text{Amp}}(Y_i/S)$ are disjoint.

 Since the pullback of an ample divisor is movable, the identification of N$^1(X_1)$ and N$^1(X_2)$ also embeds the ample cone of $X_2$  onto a subcone of the movable cone of $X_1$. The partition theorem states that  if we fix a given minimal model $X$, the nef-cones of the other minimal models $X_i$  embedded into subcones of the movable cone of $X$ will provide a partition of the movable cone of $X$.

The hyperplane arrangement   I($\mathfrak{g},\mathbf{R}$) describes the decomposition  of the closed  relative movable cone  into relative nef-cones corresponding to each individual distinct crepant resolution. 
\begin{center}
\begin{tabular}{|c|c|c|}
\hline 
Elliptic fibration & Hyperplane arrangement & Cones \\
\hline 
Weierstrass model & Closed dual Weyl chamber  of $\mathfrak{g}$ & $\overline{\text{Mov}}(Y/S)$  \\
\hline 
Crepant resolution & Chambers of  I($\mathfrak{g},\mathbf{R}$) & $\overline{\text{Amp}}(Y_i/S)$  \\
\hline 
\end{tabular}
\end{center}

  \section{The hyperplane arrangement I($\text{E}_7, \mathbf{56}$) \label{sec:IE756}}
In this section, we review the construction of the root systems E$_7$, E$_8$, and the fundamental representation $\bf{56}$ of E$_7$. 
We then study the chamber structure of the hyperplane arrangement I($\text{E}_7, \bf{56}$). 

\subsection{E$_7$  root system and Dynkin diagram}
The Lie algebra of type E$_7$ has dimension $133$ and rank $7$: it has $126$ distinct roots and its Cartan subalgebra has dimension 7. 
Its Weyl group has order $2^{10}\cdot 3^4 \cdot 5 \cdot 7$ \cite[Plate VI]{Bourbaki.GLA46}.
The Coxeter number of the Lie algebra of type E$_7$ is $18$. The determinant of the Cartan matrix of the Lie algebra of type E$_7$ is $2$. Hence, the fundamental group of the root system of type E$_7$ and the quotient of the weight lattice modulo the root lattice are both isomorphic to $\mathbb{Z}/2\mathbb{Z}$. 

A choice of positive simple roots is
$$
\begin{array}{l}
\alpha_1= \frac{1}{2} (1, -1, -1, -1, -1, -1, -1,1),\\
\alpha_2=(-1, 1, 0 , 0 , 0, 0, 0, 0), \quad
\alpha_3=(0, -1, 1, 0, 0, 0, 0,0), \quad 
\alpha_4=(0, 0,-1, 1, 0, 0, 0, 0),\\ 
\alpha_5=(0, 0,0, -1, 1, 0, 0, 0), \quad
\alpha_6=(0, 0,0,0,  -1, 1, 0, 0), \quad
\alpha_7=(1,1,\   0,\   0, 0, 0, 0, 0).
\end{array}
$$
The Cartan matrix of E$_7$ is then 
\begin{equation}
\begin{array}{c}
\alpha_1 \\
\alpha_2 \\
\alpha_3 \\
\alpha_4 \\
\alpha_5 \\
\alpha_6 \\
\alpha_7 
\end{array}
\left(
\begin{array}{ccccccc}
 2 & -1 & 0 & 0 & 0 & 0 & 0 \\
 -1 & 2 & -1 & 0 & 0 & 0 & 0 \\
 0 & -1 & 2 & -1 & 0 & 0 & -1 \\
 0 & 0 & -1 & 2 & -1 & 0 & 0 \\
 0 & 0 & 0 & -1 & 2 & -1 & 0 \\
 0 & 0 & 0 & 0 & -1 & 2 & 0 \\
 0 & 0 & -1 & 0 & 0 & 0 & 2 \\
\end{array}
\right)
\end{equation}
The $i$th row of this Cartan matrix gives the coordinates of the simple root $\alpha_i$ in the basis of simple fundamental weights. 
In Bourbaki's tables,  our simple roots ($\alpha_1,\alpha_2,\alpha_3,\alpha_4,\alpha_5,\alpha_6,\alpha_7)$ are 
denoted ($\alpha_1,\alpha_3,\alpha_4,\alpha_5,\alpha_6,\alpha_7,\alpha_2)$, respectively.

\begin{figure}[htb]
\begin{center}
\scalebox{.9}{
\begin{tikzpicture}
				\node[draw,circle,thick,scale=1,fill=black,label=below:{\scalebox{1.2}{ $\alpha_0$}}] (0) at (0,0){$1$};
				\node[draw,circle,thick,scale=1,label=below:{\scalebox{1.2}{$\alpha_1$}}] (1) at (1.2,0){$2$};
				\node[draw,circle,thick,scale=1,label=below:{\scalebox{1.2}{$\alpha_2$}}] (2) at (2.4,0){$3$};
				\node[draw,circle,thick,scale=1,label=below:{\scalebox{1.2}{$\alpha_3$}}] (3) at (3.6,0){$4$};
				\node[draw,circle,thick,scale=1,label=below:{\scalebox{1.2}{$\alpha_4$}}] (4) at (4.8,0){$3$};
				\node[draw,circle,thick,scale=1,label=below:{\scalebox{1.2}{$\alpha_5$}}] (5) at (6,0){$2$};
				\node[draw,circle,thick,scale=1,label=below:{\scalebox{1.2}{$\alpha_6$}}] (6) at (7.2,0){$1$};
				\node[draw,circle,thick,scale=1, label=above:{\scalebox{1.2}{$\alpha_7$}}] (7) at (3.6,1.2){$2$};
				\draw[thick] (0)--(1)--(2)--(3)--(4)--(5)--(6);
				\draw[thick]  (3)--(7);
					\end{tikzpicture}}
					\caption{Affine  Dynkin diagram of type $\widetilde{\text{E}}_7$. 
					Removing the black node gives the Dynkin diagram of type E$_7$. 
					The numbers inside the nodes are the multiplicities of the Kodaira fiber of type III$^*$ and  the Dynkin labels of the highest root.  Thus,  the Coxeter number of E$_7$ is $18$. \label{Fig:E7}
					The root $\alpha_1$ is the highest weight of the adjoint representation while $\alpha_6$ is the highest weight of the fundamental represenation $\mathbf{56}$. 
					}
\end{center}
\end{figure}
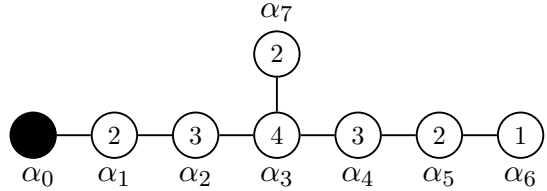

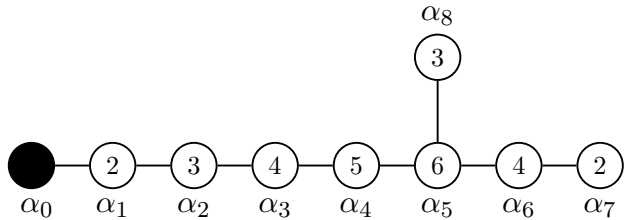
\begin{figure}[htb]
\begin{center}
\scalebox{.9}{
\def\arraystretch{1.5}%
\begin{tikzpicture}
				\node[draw,circle,thick,scale=1,fill=black,label=below:{\scalebox{1.2}{ $\alpha_0$}}] (0) at (0,0){$1$};
				\node[draw,circle,thick,scale=1,label=below:{\scalebox{1.2}{$\alpha_1$}}] (1) at (1.2,0){$2$};
				\node[draw,circle,thick,scale=1,label=below:{\scalebox{1.2}{$\alpha_2$}}] (2) at (2.4,0){$3$};
				\node[draw,circle,thick,scale=1,label=below:{\scalebox{1.2}{$\alpha_3$}}] (3) at (3.6,0){$4$};
				\node[draw,circle,thick,scale=1,label=below:{\scalebox{1.2}{$\alpha_4$}}] (4) at (4.8,0){$5$};
				\node[draw,circle,thick,scale=1,label=below:{\scalebox{1.2}{$\alpha_5$}}] (5) at (6,0){$6$};
				\node[draw,circle,thick,scale=1,label=below:{\scalebox{1.2}{$\alpha_6$}}] (6) at (7.2,0){$4$};
				\node[draw,circle,thick,scale=1, label=below:{\scalebox{1.2}{$\alpha_7$}}] (7) at (8.4,0){$2$};
				\node[draw,circle,thick,scale=1, label=above:{\scalebox{1.2}{$\alpha_8$}}] (8) at (6,1.6){$3$};
				\draw[thick] (0)--(1)--(2)--(3)--(4)--(5)--(6)--(7);
				\draw[thick]  (5)--(8);
					\end{tikzpicture}}
					\caption{Affine  Dynkin diagram of type $\widetilde{\text{E}}_8$. 
					Removing the black node gives the Dynkin diagram of type E$_8$. 
					The number in the nodes are the multiplicities of the Kodaira fiber of type II$^*$. 
					\label{Fig:E8}
					}
\end{center}
\end{figure}

\subsection{Root system of types E$_7$ and E$_8$, and representation $\mathbf{56}$ of E$_7$.}

We now give a quick description of the root system of type E$_7$ and  the weight system  $\mathbf{56}$ of E$_7$ in terms of the root system of type E$_8$. We follow  Borcherds's lecture notes on Lie groups~\cite{Borcherds}. 

The smallest non-trivial representation of E$_7$ is of dimension 56 and often called the fundamental representation.
The representation $\mathbf{56}$ of E$_7$ is  minuscule, self-dual, and pseudo-real. 
Its highest weight is the simple root $\alpha_6$  (see Figure \ref{Fig:E7}). 
The roots of E$_8$ form the unique  eight dimensional  even unimodular lattice. 
 The roots of E$_8$  are the vertices of the Gosset polytope ($\mathbf{4_{21}}$ in Coxeter's notation).  The roots of $\mathfrak{e}_7$ are the vertices of the polytope $\mathbf{2_{31}}$. 
The weights of $\mathbf{56}$ are the vertices of a  Delaunay polytope $\mathbf{3_{21}}$  also  called  the Hesse polytope by Conway and Sloane.

 The Lie algebra of type E$_8$ has the following decomposition under the maximal subalgebra E$_7\oplus A_1$:
\begin{equation}
\mathfrak{e}_8= (\mathfrak{e}_7\otimes \bf{1}) \oplus (\bf{1}\otimes \bf{su}_2)\oplus (\bf{56} \otimes \bf{2}). 
\end{equation}
The roots of E$_8$ are  the following $240$ vectors of $\mathbb{R}^8$ all located on a seven-dimensional  sphere of radius $\sqrt{2}$:
\begin{align}\begin{array}{ll}
 \frac{1}{2}(\pm 1, \pm 1,\pm 1, \pm 1,\pm 1, \pm 1,\pm 1, \pm 1) & (\text{with an even number of minus signs}) \\
(\pm 1, \pm 1, 0,0,0,0,0,0) & (\text{and permutations thereof}) 
\end{array}\end{align}
By definition, the root system  E$_7$ consists of the subsystem of  roots of  E$_8$ perpendicular to a chosen root $s$ of E$_8$.
 The Weyl group $W(E_8)$ preserves the scalar product $(r_1,r_2)$ between roots. 
All the roots of E$_8$ can be organized by their scalar products with respect to the chosen root $s$ of E$_8$. 
Since the root system E$_8$ is crystallographic and simply-laced, the scalar product of two roots can only take one of the following five values:  
$-2$, $-1$, $0$, $+1$, $+ 2$.  
Then, the following facts are directly proven by using the scalar product:
\begin{itemize}
\item There is a unique  root  $r$ such that $(s,r)=2$ (resp. $-2$),  namely $ s$  (resp. $-s$).  
\item There are $126$ roots perpendicular to $s$, they form  an  E$_7$ root system. 
\item There are 56 roots $r$ of $E_8$ with scalar product $(s,r)=1$ (resp. $-1$), we call them $\bf{56}_+$ (resp. $\bf{56}_-$). Each of $\mathbf{56}_\pm$  form  a weight system of the  irreducible representation $\mathbf{56}$  of E$_7$.
\item Translation by $-s$ gives an isometry  between the weight systems  $\mathbf{56}_+$ and $\mathbf{56}_-$ corresponding to the hyperplane reflection  induced by the root $s$.  
\item The involution  $r\to -r$ of E$_8$  exchanges $\mathbf{56}_+$ and $\mathbf{56}_-$. 
\end{itemize}

Using the scalar product of E$_8$, the weight system  $\mathbf{56}_+$ is composed of the closest roots to $s$. They are all on a sphere of radius $\sqrt{2}$ centered at the affine point $P_s$ defined by the vector $s$. 
The roots of E$_7$ are all on the plane perpendicular to the vector $s$ on a sphere of radius $2$ centered at the affine  point $P_s$ corresponding to the vector $s$. 
The vectors of the weight system $\mathbf{56}_-$ are on a sphere of radius $\sqrt{6}$ centered at the point $P_s$. 
 The root $-s$ is the further away at a distance $\sqrt{8}$.
 The root system E$_7$ defined by $(r,s)=0$ and the two representations $\mathbf{56}$ are on parallel hyperplanes.

Consider the root $s=(0,0,0,0,0,0,1,1)$ of E$_8$.
We can write an E$_7$ root system by listing the roots of E$_8$ perpendicular to $s$:
\begin{align}
\begin{array}{ll}
\pm \frac{1}{2} (\pm 1, \pm 1, \pm 1, \pm 1, \pm 1, \pm 1, 1,-1)  & (\text{with even number of minus signs}) \\
 (\pm 1, \pm 1, 0,0,0,0,0,0) &  (\text{and permutations of the first six coordinates thereof}) \\
  \pm (0,0,0,0,0,0,1,-1) & 
 \end{array}
\end{align}
We can also determine the weights of the representation $\mathbf{56}$ corresponding to $\bf{56}_+$ by enumerating the roots of E$_8$ with scalar product $+1$ with $s$:
\begin{align}\begin{array}{ll}
& \frac{1}{2} (\pm 1, \pm 1, \pm 1, \pm 1, \pm 1, \pm 1, 1,1)  \quad (\text{with even number of minus signs}) \\
 &(\pm 1, 0, 0,0,0,0,1,0)\quad \quad (\text{and permutations of the first six coordinates thereof}) \\
&(\pm 1, 0, 0,0,0,0,0,1)\quad \quad (\text{and permutations of the first six coordinates thereof})
 \end{array}\end{align}
The opposite of these vectors form the weight system $\bf{56}_-$. 
The representation $\bf{56}_\pm$ is invariant under the group $(\mathbb{Z}/2\mathbb{Z})^2$ generated by the involutions $r\mapsto \pm s-r$ and the reflection $\sigma_s: r\to -r + (s,r) s$. 

In the basis of fundamental weights, the weights of the representation $\bf{56}$ are listed with the corresponding Hasse diagram is given in Figure \ref{Fig:Hasse56}. 

\clearpage

\begin{figure}[htb]
\begin{minipage}[c]{0.3\textwidth}
\scalebox{.75}{
\begin{tikzpicture}[ 
x=1.2cm,y=1.1cm,rotate=0,transform shape, color=black]
\tikzstyle{weight}=[circle,thick,draw,minimum size=5mm,inner sep=1pt];
\tikzstyle{root}=[minimum size=0.5cm];
\tikzstyle{sign}=[circle,thick,draw,minimum size=5mm,inner sep=1pt];
\tikzstyle{positive}=[circle,thick,draw,minimum size=5mm,inner sep=1pt,fill=myred];
\tikzstyle{negative}=[circle,thick,draw,minimum size=5mm,inner sep=1pt,fill=myblue];
				\node[positive] (1) at (0,0) {$1$};
			         \node[positive] (2) at (0,-1) {$2$};
			         \node[positive] (3) at (0,-2) {$3$};
			         \node[positive] (4) at (0,-3) {$4$};
				\node[positive] (5) at (0,-4) {$5$};
			         \node[positive] (6) at (-1,-5) {$6$} ; \node[positive] (7) at (1,-5) {$7$};
			         			         \node[positive] (8) at (-1,-6) {$8$} ; \node[positive] (9) at (1,-6) {$9$};
						           \node[positive] (10) at (-1,-7) {\scalebox{.8}{$10$}} ; \node[positive] (11) at (1,-7) {\scalebox{.8}{$11$}};
			           \node[positive] (12) at (-1,-8) {\scalebox{.8}{$12$}} ; \node[positive] (13) at (1,-8) {\scalebox{.8}{$13$}};
			           \node[positive] (14) at (-2,-9) {\scalebox{.8}{$14$}} ; \node[positive] (15) at (0,-9) {\scalebox{.8}{$15$}}; ; \node[positive] (16) at (2,-9) {\scalebox{.8}{$16$}};
  \node[positive] (17) at (-2,-10) {\scalebox{.8}{$17$}} ; \node[positive] (18) at (0,-10) {\scalebox{.8}{$18$}}; ; \node[sign] (19) at (2,-10) {\scalebox{.8}{$19$}};						         
  \node[sign] (20) at (-2,-11) {\scalebox{.8}{$20$}} ; \node[positive] (21) at (0,-11) {\scalebox{.8}{$21$}}; ; \node[positive] (22) at (2,-11) {\scalebox{.8}{$22$}};						         
    \node[sign] (23) at (-2,-12) {\scalebox{.8}{$23$}} ; \node[positive] (24) at (0,-12) {\scalebox{.8}{$24$}}; ; \node[sign] (25) at (2,-12) {\scalebox{.8}{$25$}};					         
      \node[sign] (26) at (-2,-13) {\scalebox{.8}{$26$}} ; \node[sign] (27) at (0,-13) {\scalebox{.8}{$27$}}; ; \node[sign] (28) at (2,-13) {\scalebox{.8}{$28$}};						         
        \node[sign] (29) at (-2,-14) {\scalebox{.8}{$29$}} ; \node[sign] (30) at (0,-14) {\scalebox{.8}{$30$}} ; \node[sign] (31) at (2,-14) {\scalebox{.8}{$31$}};						                
          \node[sign] (32) at (-2,-15) {\scalebox{.8}{$32$}} ; \node[negative] (33) at (0,-15) {\scalebox{.8}{$33$}}; \node[sign] (34) at (2,-15) {\scalebox{.8}{$34$}};						            
				         
            \node[negative] (35) at (-2,-16) {\scalebox{.8}{$35$}} ; \node[negative] (36) at (0,-16) {\scalebox{.8}{$36$}}; \node[sign] (37) at (2,-16) {\scalebox{.8}{$37$}};	
              \node[sign] (38) at (-2,-17) {\scalebox{.8}{$38$}} ; \node[negative] (39) at (0,-17) {\scalebox{.8}{$39$}} ; \node[negative] (40) at (2,-17) {\scalebox{.8}{$40$}};	
                \node[negative] (41) at (-2,-18) {\scalebox{.8}{$41$}} ; \node[negative] (42) at (0,-18) {\scalebox{.8}{$42$}} ; \node[negative] (43) at (2,-18) {\scalebox{.8}{$43$}};

         \node[negative] (56) at (0,-27) {\scalebox{.8}{$56$}}; 
			         \node[negative] (55) at (0,-26) {\scalebox{.8}{$55$}}; 
			         \node[negative] (54) at (0,-25) {\scalebox{.8}{$54$}}; 
			         \node[negative] (53) at (0,-24) {\scalebox{.8}{$53$}}; 
				\node[negative] (52) at (0,-23) {\scalebox{.8}{$52$}}; 
			         \node[negative] (51) at (1,-22) {\scalebox{.8}{$51$}};  \node[negative] (50) at (-1,-22) {\scalebox{.8}{$50$}}; ;
			         			         \node[negative] (49) at (1,-21){\scalebox{.8}{$49$}};  \node[negative] (48) at (-1,-21)  {\scalebox{.8}{$48$}}; 
						           \node[negative] (47) at (1,-20) {\scalebox{.8}{$47$}} ; \node[negative] (46) at (-1,-20) {\scalebox{.8}{$46$}};
						       \node[negative] (44) at (-1,-19) {\scalebox{.8}{$44$}};      \node[negative] (45) at (1,-19) {\scalebox{.8}{$45$}} ;

\draw[thick]  (1)--   node[root,right] {$\alpha_{6}$}  (2);  
\draw[thick] (2)-- node[root,right] {$\alpha_{5}$} (3);
\draw[thick] (3)-- node[root,right] {$\alpha_{4}$} (4);      

\draw[thick] (4)-- node[root,right] {$\alpha_{3}$} (5);

\draw[thick] (5)-- node[root,right] {$\alpha_{7}$} (6);

\draw[thick] (5)-- node[root,right] {$\alpha_{2}$} (7);

\draw[thick] (6)-- node[root,above, near start] {$\alpha_{2}$} (9);

\draw[thick] (7)-- node[root,below, near end] {$\alpha_{1}$} (8);

\draw[thick] (7)-- node[root,right] {$\alpha_{7}$} (9);

\draw[thick] (8)-- node[root,right] {$\alpha_{7}$} (10);

\draw[thick] (9)-- node[root,below] {$\alpha_{1}$} (10);

\draw[thick] (9)-- node[root,right] {$\alpha_{3}$} (11);

\draw[thick] (10)-- node[root,right] {$\alpha_{3}$} (12);

\draw[thick] (11)-- node[root,right] {$\alpha_{1}$} (12);

\draw[thick] (11)-- node[root,right] {$\alpha_{4}$} (13);

\draw[thick] (12)-- node[root,left] {$\alpha_{4}$} (14);

\draw[thick] (12)-- node[root,above, near start] {$\alpha_{2}$} (15);

\draw[thick] (13)-- node[root,below, near start] {$\alpha_{1}$} (14);

\draw[thick] (13)-- node[root,right] {$\alpha_{5}$} (16);

\draw[thick] (14)-- node[root,right] {$\alpha_{5}$} (17);

\draw[thick] (14)-- node[root,right] {$\alpha_{2}$} (18);

\draw[thick] (15)-- node[root,right,near end] {$\alpha_{4}$} (18);

\draw[thick] (16)-- node[root,right, near start] {$\alpha_{1}$} (17);

\draw[thick] (16)-- node[root,right] {$\alpha_{6}$} (19);

\draw[thick] (17)-- node[root,right] {$\alpha_{6}$} (20);

\draw[thick] (17)-- node[root,right, near start] {$\alpha_{2}$} (21);

\draw[thick] (18)-- node[root,right,near end] {$\alpha_{5}$} (21);

\draw[thick] (18)-- node[root,right] {$\alpha_{3}$} (22);

\draw[ultra thick] (19)-- node[root,below, near end] {$\alpha_{1}$} (20);

\draw[ultra thick] (20)-- node[root,right] {$\alpha_{2}$} (23);

\draw[thick] (21)-- node[root,right] {$\alpha_{6}$} (23);

\draw[thick] (21)-- node[root,right] {$\alpha_{3}$} (24);

\draw[thick] (22)-- node[root,right] {$\alpha_{5}$} (24);

\draw[thick] (22)-- node[root,right] {$\alpha_{7}$} (25);

\draw[ultra thick] (23)-- node[root,right] {$\alpha_{3}$} (26);

\draw[thick] (24)-- node[root,right] {$\alpha_{6}$} (26);

\draw[thick] (24)-- node[root,right] {$\alpha_{4}$} (27);

\draw[thick] (24)-- node[root,right] {$\alpha_{7}$} (28);

\draw[thick] (25)-- node[root,right] {$\alpha_{5}$} (28);

\draw[ultra thick] (26)-- node[root,right] {$\alpha_{4}$} (29);

\draw[ultra thick] (26)-- node[root,below, near end] {$\alpha_{7}$} (30);

\draw[thick] (27)-- node[root,right, near start] {$\alpha_{6}$} (29);

\draw[thick] (27)-- node[root,right, near end] {$\alpha_{7}$} (31);

\draw[thick] (28)-- node[root,right,near start] {$\alpha_{6}$} (30);
\draw[thick] (28)-- node[root,right] {$\alpha_{4}$} (31);
\draw[ultra thick] (29)-- node[root,right] {$\alpha_{5}$} (32);
\draw[thick] (29)-- node[root,right] {$\alpha_{7}$} (33);
\draw[thick] (30)-- node[root,right] {$\alpha_{4}$} (33);
\draw[thick] (31)-- node[root,right] {$\alpha_{6}$} (33);

\draw[thick] (31)-- node[root,right] {$\alpha_{3}$} (34);

\draw[thick] (32)-- node[root,right] {$\alpha_{7}$} (35);

\draw[thick] (33)-- node[root,right] {$\alpha_{5}$} (35);

\draw[thick] (33)-- node[root,right] {$\alpha_{3}$} (36);

\draw[thick] (34)-- node[root,right] {$\alpha_{6}$} (36);

\draw[thick] (34)-- node[root,right] {$\alpha_{2}$} (37);

\draw[thick] (35)-- node[root,right] {$\alpha_{3}$} (39);

\draw[thick] (36)-- node[root,right] {$\alpha_{5}$} (39);

\draw[thick] (36)-- node[root,right] {$\alpha_{2}$} (40);

\draw[thick] (37)-- node[root,above, xshift=-1.8 cm, yshift=- .5cm] {$\alpha_{1}$} (38);

\draw[thick] (37)-- node[root,right] {$\alpha_{6}$} (40);

\draw[thick] (38)-- node[root,right] {$\alpha_{6}$} (41);

\draw[thick] (39)-- node[root,left,yshift=.2cm] {$\alpha_{4}$} (42);

\draw[thick] (39)-- node[root,right] {$\alpha_{2}$} (43);

\draw[thick] (40)-- node[root,xshift=-1.3cm, yshift=-.6cm] {$\alpha_{1}$} (41);

\draw[thick] (40)-- node[root,right] {$\alpha_{5}$} (43);
\draw[thick] (41)-- node[root,right] {$\alpha_{5}$} (44);

\draw[thick] (42)-- node[root,above, near start] {$\alpha_{2}$} (45);

\draw[thick] (43)-- node[root,below,near end] {$\alpha_{1}$} (44);

\draw[thick] (43)-- node[root,right] {$\alpha_{4}$} (45);

\draw[thick] (44)-- node[root,right] {$\alpha_{4}$} (46);

\draw[thick] (45)-- node[root,right] {$\alpha_{1}$} (46);

\draw[thick] (45)-- node[root,right] {$\alpha_{3}$} (47);

\draw[thick] (46)-- node[root,right] {$\alpha_{3}$} (48);

\draw[thick] (47)-- node[root,right] {$\alpha_{1}$} (48);

\draw[thick] (47)-- node[root,right] {$\alpha_{7}$} (49);

\draw[thick] (48)-- node[root,right] {$\alpha_{7}$} (50);

\draw[thick] (48)-- node[root,above] {$\alpha_{2}$} (51);

\draw[thick] (49)-- node[root,below, near end] {$\alpha_{1}$} (50);

\draw[thick] (50)-- node[root,right] {$\alpha_{2}$} (52);

\draw[thick] (51)-- node[root,right] {$\alpha_{7}$} (52);

\draw[thick] (52)-- node[root,right] {$\alpha_{3}$} (53);

\draw[thick] (53)-- node[root,right] {$\alpha_{4}$} (54);

\draw[thick] (54)-- node[root,right] {$\alpha_{5}$} (55);

\draw[thick] (55)-- node[root,right] {$\alpha_{6}$} (56);

					\end{tikzpicture}
					}
					\end{minipage}
					\hfill
					\begin{minipage}[c]{0.7\textwidth}
					\begin{center}
					\scalebox{.75}{
					$
  \begin{array}{c}
 \varpi_{1}\\
\varpi_{2}\\
\varpi_{3}\\
\varpi_{4}\\
\varpi_{5}\\
\varpi_{6}\\
\varpi_{7}\\
\varpi_{8}\\
\varpi_{9}\\
\varpi_{10}\\
\varpi_{11}\\
\varpi_{12}\\
\varpi_{13}\\
\varpi_{14}\\
\varpi_{15}\\
\varpi_{16}\\
\varpi_{17}\\
\varpi_{18}\\
\varpi_{19}\\
\varpi_{20}\\
\varpi_{21}\\
\varpi_{22}\\
\varpi_{23}\\
\varpi_{24}\\
\varpi_{25}\\
\varpi_{26}\\
\varpi_{27}\\
\varpi_{28}
\end{array}
 \left[
\begin{array}{ccccccc}
 0 & 0 & 0 & 0 & 0 & 1 & 0 \\
 0 & 0 & 0 & 0 & 1 & -1 & 0 \\
 0 & 0 & 0 & 1 & -1 & 0 & 0 \\
 0 & 0 & 1 & -1 & 0 & 0 & 0 \\
 0 & 1 & -1 & 0 & 0 & 0 & 1 \\
 0 & 1 & 0 & 0 & 0 & 0 & -1 \\
 1 & -1 & 0 & 0 & 0 & 0 & 1 \\
 -1 & 0 & 0 & 0 & 0 & 0 & 1 \\
 1 & -1 & 1 & 0 & 0 & 0 & -1 \\
 -1 & 0 & 1 & 0 & 0 & 0 & -1 \\
 1 & 0 & -1 & 1 & 0 & 0 & 0 \\
 -1 & 1 & -1 & 1 & 0 & 0 & 0 \\
 1 & 0 & 0 & -1 & 1 & 0 & 0 \\
 -1 & 1 & 0 & -1 & 1 & 0 & 0 \\
 0 & -1 & 0 & 1 & 0 & 0 & 0 \\
 1 & 0 & 0 & 0 & -1 & 1 & 0 \\
 -1 & 1 & 0 & 0 & -1 & 1 & 0 \\
 0 & -1 & 1 & -1 & 1 & 0 & 0 \\
 1 & 0 & 0 & 0 & 0 & -1 & 0 \\
 -1 & 1 & 0 & 0 & 0 & -1 & 0 \\
 0 & -1 & 1 & 0 & -1 & 1 & 0 \\
 0 & 0 & -1 & 0 & 1 & 0 & 1 \\
 0 & -1 & 1 & 0 & 0 & -1 & 0 \\
 0 & 0 & -1 & 1 & -1 & 1 & 1 \\
 0 & 0 & 0 & 0 & 1 & 0 & -1 \\
 0 & 0 & -1 & 1 & 0 & -1 & 1 \\
 0 & 0 & 0 & -1 & 0 & 1 & 1 \\
 0 & 0 & 0 & 1 & -1 & 1 & -1 \\
\end{array}
\right]
\quad 
 \begin{array}{c}
 \varpi_{29}\\
\varpi_{30}\\
\varpi_{31}\\
\varpi_{32}\\
\varpi_{33}\\
\varpi_{34}\\
\varpi_{35}\\
\varpi_{36}\\
\varpi_{37}\\
\varpi_{38}\\
\varpi_{39}\\
\varpi_{40}\\
\varpi_{41}\\
\varpi_{42}\\
\varpi_{43}\\
\varpi_{44}\\
\varpi_{45}\\
\varpi_{46}\\
\varpi_{47}\\
\varpi_{48}\\
\varpi_{49}\\
\varpi_{50}\\
\varpi_{51}\\
\varpi_{52}\\
\varpi_{53}\\
\varpi_{54}\\
\varpi_{55}\\
\varpi_{56}\\
\end{array}
\left[
\begin{array}{ccccccc}
 0 & 0 & 0 & -1 & 1 & -1 & 1 \\
 0 & 0 & 0 & 1 & 0 & -1 & -1 \\
 0 & 0 & 1 & -1 & 0 & 1 & -1 \\
 0 & 0 & 0 & 0 & -1 & 0 & 1 \\
 0 & 0 & 1 & -1 & 1 & -1 & -1 \\
 0 & 1 & -1 & 0 & 0 & 1 & 0 \\
 0 & 0 & 1 & 0 & -1 & 0 & -1 \\
 0 & 1 & -1 & 0 & 1 & -1 & 0 \\
 1 & -1 & 0 & 0 & 0 & 1 & 0 \\
 -1 & 0 & 0 & 0 & 0 & 1 & 0 \\
 0 & 1 & -1 & 1 & -1 & 0 & 0 \\
 1 & -1 & 0 & 0 & 1 & -1 & 0 \\
 -1 & 0 & 0 & 0 & 1 & -1 & 0 \\
 0 & 1 & 0 & -1 & 0 & 0 & 0 \\
 1 & -1 & 0 & 1 & -1 & 0 & 0 \\
 -1 & 0 & 0 & 1 & -1 & 0 & 0 \\
 1 & -1 & 1 & -1 & 0 & 0 & 0 \\
 -1 & 0 & 1 & -1 & 0 & 0 & 0 \\
 1 & 0 & -1 & 0 & 0 & 0 & 1 \\
 -1 & 1 & -1 & 0 & 0 & 0 & 1 \\
 1 & 0 & 0 & 0 & 0 & 0 & -1 \\
 -1 & 1 & 0 & 0 & 0 & 0 & -1 \\
 0 & -1 & 0 & 0 & 0 & 0 & 1 \\
 0 & -1 & 1 & 0 & 0 & 0 & -1 \\
 0 & 0 & -1 & 1 & 0 & 0 & 0 \\
 0 & 0 & 0 & -1 & 1 & 0 & 0 \\
 0 & 0 & 0 & 0 & -1 & 1 & 0 \\
 0 & 0 & 0 & 0 & 0 & -1 & 0 \\
\end{array}
\right]
$
}
\end{center}
\caption{Hasse diagram for the weights of the representation $\bf{56}$ of E$_7$. The node $i$ represents the weight $\varpi_i$ given above and written in the basis of fundamental weights. 
A  root $\alpha$ between two nodes $i$ and $j$ (with $i<j$) indicates that $\varpi_j=\varpi_i -\alpha$.
 The white nodes are those corresponding to the weights used to define the sign vector that characterizes the chambers of the hyperplane arrangement I($\bf{56}$,~E$_7$). In a given chamber of  I($\bf{56}$, E$_7$), each white node takes a specific sign (see section \eqref{sec:IE756}). 
The red (resp. blue) nodes correspond to weights in the positive (resp. negative) conical hull of positive  roots. In particular, a white node corresponds to a weight $\varpi$ such that the hyperplane $\varpi^\bot$ intersects the  interior of the dual fundamental Weyl chamber while that is not the case for a weight corresponding to a blue or red node. 
}
\label{Fig:Hasse56}
\end{minipage}
						\end{figure}

\clearpage

\subsection{Chamber structure of the hyperplane I(E$_7$,$\mathbf{56}$) \label{sec:IE756}}

The following Theorem was given in \cite{Box} without a formal proof. We give a proof here using the language of sign vectors in the spirit of \cite{EJJN1,EJJN2}. 
\begin{thm}
The hyperplane arrangement  I$(E_7, \mathbf{56})$ has eight chambers. Each chamber is simplicial.  The adjacency graph of the chambers is isomorphic to the Dynkin diagram  of type E$_8$. 
\end{thm}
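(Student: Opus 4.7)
The plan is to characterize each chamber of I$(E_7,\mathbf{56})$ by a short sign vector, extract the linear relations that constrain these signs, and enumerate the consistent patterns to recover the eight chambers and their adjacency structure listed in Table~\ref{Table:Ch}.

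First I would identify the weights of $\mathbf{56}$ whose hyperplanes $\varpi^\perp$ actually cut the interior of the dual fundamental Weyl chamber $\mathfrak{D}$. A weight $\varpi$ has this property iff neither $\varpi$ nor $-\varpi$ lies in the positive conical hull of the simple roots, i.e.\ its expansion in simple roots is sign-indefinite. Inspecting Figure~\ref{Fig:Hasse56}, these are precisely the seven white weights $\varpi_{19},\varpi_{20},\varpi_{23},\varpi_{26},\varpi_{29},\varpi_{32},\varpi_{30}$; every other weight (and its negative) pairs with $\phi\in\mathfrak{D}$ with a fixed sign and so plays no role. Each chamber is therefore labeled by a sign vector in $\{+,-\}^7$ recording the signs of these seven linear functionals.

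Next I would harvest the linear relations that constrain the admissible sign vectors, separating the local ones read off from the Hasse diagram from a crucial nonlocal identity. The Hasse covers give $\varpi_{19}-\varpi_{20}=\alpha_1,\ \varpi_{20}-\varpi_{23}=\alpha_2,\ \varpi_{23}-\varpi_{26}=\alpha_3,\ \varpi_{26}-\varpi_{29}=\alpha_4,\ \varpi_{26}-\varpi_{30}=\alpha_7,\ \varpi_{29}-\varpi_{32}=\alpha_5$; pairing with $\phi\in\mathfrak{D}$ turns these into sign implications along a partial order (a chain $\varpi_{19}<\varpi_{20}<\varpi_{23}<\varpi_{26}<\varpi_{29}<\varpi_{32}$ with a side branch $\varpi_{26}<\varpi_{30}$), so that $\epsilon_{20}=+$ forces $\epsilon_{19}=+$, and so on. A direct computation in the fundamental weight basis then produces the key nonlocal identity $\varpi_{29}+\varpi_{30}+\alpha_6=0$, which is invisible from the Hasse diagram because $\varpi_{29}$ and $\varpi_{30}$ are incomparable. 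Evaluating on $\phi\in\mathfrak{D}$ this forbids the simultaneous occurrence of $\epsilon_{29}=+$ and $\epsilon_{30}=+$.

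Combining the chain implications with this incompatibility leaves exactly the eight sign patterns of Table~\ref{Table:Ch}; I would then (a) exhibit a concrete $\phi\in\mathfrak{D}$ realizing each, walking from Ch$_1$ through the tree of chambers crossing one hyperplane at a time; (b) verify simpliciality by identifying for each chamber the seven facet-defining inequalities among the seven Weyl walls $\{\alpha_i=0\}$ and the seven weight walls $\{\varpi_j^\perp\}$, where the same relation $\varpi_{29}+\varpi_{30}+\alpha_6=0$ forces the wall $\alpha_6=0$ to collapse to a lower-dimensional face in the chambers with $\epsilon_{29}=\epsilon_{30}=-$, yielding the correct facet count of seven; and (c) compute the adjacency graph by recording the unique sign-flip connecting neighboring chambers, giving the edges Ch$_1$--Ch$_2$ (flip $\varpi_{19}$), Ch$_2$--Ch$_3$, Ch$_3$--Ch$_4$, Ch$_4$--Ch$_5$, Ch$_5$--Ch$_6$ (flip $\varpi_{29}$), Ch$_6$--Ch$_7$ (flip $\varpi_{32}$), together with the branch edge Ch$_5$--Ch$_8$ (flip $\varpi_{30}$). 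The result is a tree on eight vertices whose unique trivalent vertex is Ch$_5$, which is exactly the Dynkin diagram of type E$_8$.

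The main obstacle I anticipate is isolating and justifying the nonlocal identity $\varpi_{29}+\varpi_{30}+\alpha_6=0$. Without it, the chain-order constraints alone admit ten sign patterns rather than eight, and the surplus patterns $(+,+,+,+,+,-,+)$ and $(+,+,+,+,+,+,+)$ would ruin both the chamber count and the tree structure needed to match the E$_8$ Dynkin diagram. Establishing and exploiting this single hidden relation is what turns a routine sign-vector calculation into the stated theorem.
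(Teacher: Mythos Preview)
Your proposal is correct and follows essentially the same route as the paper: isolate the seven interior-wall weights $\varpi_{19},\varpi_{20},\varpi_{23},\varpi_{26},\varpi_{29},\varpi_{32},\varpi_{30}$, use the root differences to impose the monotonicity of signs along the partial order, and then invoke the single extra relation $\varpi_{29}+\varpi_{30}=-\alpha_6$ to cut the ten order-compatible patterns down to eight before reading off the E$_8$ adjacency. Your treatment is slightly more explicit than the paper's on two points---the count of ten patterns absent the $\alpha_6$ relation, and the mechanism by which that same relation makes the $\alpha_6$-wall redundant to yield seven facets---but the underlying argument is identical.
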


\begin{proof}
Since the minuscule representation of E$_7$ is self-dual, it is enough to consider only half of its weights to study the hyperplane arrangement I$(E_7, \mathbf{56})$. 
The weights that do not intersect the interior of the dual fundamental Weyl chamber are such that all their coefficients have the same sign when expressed in the basis of simple positive roots.  
After removing such weights, we are left (up to a sign) with seven weights, listed in Figure \ref{Fig:SV}.  They have an elegant structure as all can be derived from the smallest one by removing 
only one root, always different from $\alpha_6$.
~\\ 
~\\~\\
$	\begin{aligned}\begin{array}{llll}
\varpi_{19}&\quad (1, 1, 1, \frac{1}{2}, 0, \text{-}\frac{1}{2}, \frac{1}{2})\quad&\boxed{\ 1\ \ 0\ 0 \ 0 \ 0  \ $-1$ \ \ 0 } & \quad\quad\varpi_{19}\\
\varpi_{20}&\quad (0, 1, 1, \frac{1}{2}, 0, \text{-}\frac{1}{2}, \frac{1}{2}) \quad&\boxed{$-1$ \  \  1 \ 0 \ 0  \ 0 \ $-1$ \ \  0 } &\quad\quad \varpi_{19}-\alpha_1\\
\varpi_{23}&\quad (0, 0, 1, \frac{1}{2}, 0, \text{-}\frac{1}{2}, \frac{1}{2}) \quad&\boxed{\ 0 \  $-1$ \ 1 \ 0  \ 0 \ $-1$ \ \  0 } &\quad\quad \varpi_{19}-\alpha_1-\alpha_2\\
\varpi_{26}& \quad(0, 0, 0, \frac{1}{2}, 0, \text{-}\frac{1}{2}, \frac{1}{2}) \quad&\boxed{\  0 \ \  0 \ $-1$\ 1   \ 0 $\ -1$ \   1}&\quad\quad \varpi_{19}-\alpha_1-\alpha_2-\alpha_3\\
\varpi_{29}& \quad(0, 0, 0, \text{-}\frac{1}{2}, 0, \text{-}\frac{1}{2}, \frac{1}{2}) \quad&\boxed{\  0 \  \ 0 \ 0 \   $-1$  \ $1$ \ $-1$ \ $1$ }&\quad\quad \varpi_{19}-\alpha_1-\alpha_2-\alpha_3-\alpha_4\\
\varpi_{32}& \quad(0, 0, 0, \text{-}\frac{1}{2}, \text{-}1, \text{-}\frac{1}{2}, \frac{1}{2}) \quad&\boxed{\ 0 \   \ 0 \ 0 \ 0 \  $-1$\   0\   \ $1$ } &\quad\quad \varpi_{19}-\alpha_1-\alpha_2-\alpha_3-\alpha_4-\alpha_5\\
\varpi_{30}& \quad(0, 0, 0, \frac{1}{2}, 0,\text{-}\frac{1}{2},\text{-}\frac{1}{2}) \quad&\boxed{\ 0\  \ 0 \ 0 \ $1$  \ 0 \ $-1$ \ $-1$ }&\quad\quad \varpi_{19}-\alpha_1-\alpha_2-\alpha_3-\alpha_7 
\end{array}\end{aligned}
			$
		~	\\ 
		\\ 
\begin{figure}[b!]
\begin{center}
\begin{tikzpicture}[scale=.75]
			\tikzmath{\x1 = 1.75;};
				\node at (13.5,-7) {\scalebox{1}{
				$
				\begin{aligned}
\varpi_{19}=-\varpi_{38}&\quad \boxed{\ 1\ \ 0\ 0 \ 0 \ 0  \ $-1$ \ \ 0 } \\
\\
\varpi_{20}=-\varpi_{37}&\quad \boxed{$-1$ \  \  1 \ 0 \ 0  \ 0 \ $-1$ \ \  0 } && \\
\\
\varpi_{23}=-\varpi_{34}&\quad \boxed{\ 0 \  $-1$ \ 1 \ 0  \ 0 \ $-1$ \ \  0 }\\
\\
\varpi_{26}=-\varpi_{31}& \quad \boxed{\  0 \ \  0 \ $-1$\ 1   \ 0 \ $-1$ \   1}\\
\\
\varpi_{29}=-\varpi_{28}& \quad \boxed{\  0 \  \ 0 \ 0 \ $-1$  \ $1$ \ $-1$\  1 } & &\quad  \varpi_{30}=-\varpi_{27}& \  \boxed{\ 0\  \ 0 \ 0 \ 1  \ 0 \ $-1$ \ \  $-1$ }\\
\\
\varpi_{32}=-\varpi_{25}& \quad \boxed{\ 0 \   \ 0 \ 0 \ 0 \  $-1$\   0\   \ $1$ }
\end{aligned}
				$
							}};
				\node[draw,circle,thick,scale=1] (1) at (2,-1.2*\x1){$\varpi_{19}$};
				\node[draw,circle,thick,scale=1] (2) at (2,-2.4*\x1){$\varpi_{20}$};
				\node[draw,circle,thick,scale=1] (3) at (2,-3.6*\x1){$\varpi_{23}$};
				\node[draw,circle,thick,scale=1] (4) at (2,-4.8*\x1){$\varpi_{26}$};
				\node[draw,circle,thick,scale=1] (5) at (2,-6*\x1){$\varpi_{29}$};
				\node[draw,circle,thick,scale=1] (6) at (2,-7.2*\x1){$\varpi_{32}$};
				\node[draw,circle,thick,scale=1] (8) at (2.5*\x1,-6*\x1){$\varpi_{30}$};
				\draw[thick, ->] (1)--node[left] {$-\alpha_{1}$}(2);
				\draw[thick, ->] (2)--node[left] {$-\alpha_2$}(3);
				\draw[thick, ->](3)--node[left] {$-\alpha_3$}(4);
				\draw[thick, ->] (4)--node[left] {$-\alpha_4$}(5);
				\draw[thick, ->] (5)--node[left] {$-\alpha_5$}(6);
				\draw[thick,->] (4)--node[right] {$-\alpha_7$}(8);
					\end{tikzpicture}

\caption{Up to a sign, there are seven  weights of the representation $\mathbf{56}$ of E$_7$ such that each of them has a kernel   intersecting the interior of the dual fundamental Weyl chamber. 
The partial order of weights provide the Hasse diagram presented above and corresponding to a decorated Dynkin diagram of type E$_7$.  
We write $\varpi_i \xrightarrow{-\alpha_\ell} \varpi_j$ to indicate that  $\varpi_i-\alpha_\ell= \varpi_j$. 
The linear forms defined by these weights give the sign vector for the hyperplane arrangement I$(\text{E}_7, \mathbf{56})$.  
  \label{Fig:SV}}
\end{center}
\end{figure}
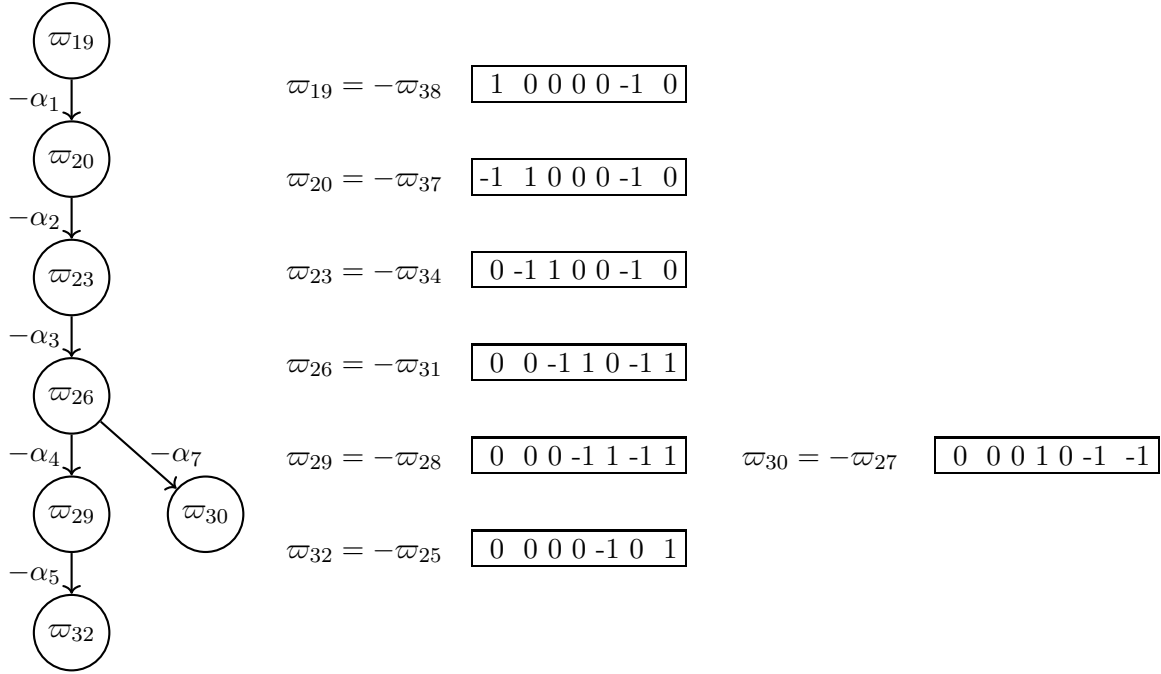\\
In the first column, the weights $\varpi$ are written in the basis of simple roots, while in the second column, they are written in the basis of simple fundamental weights. We denote by  $\phi$ an element of the coroot space. When writing $\phi$ in coordinates, we use the basis of simple coroots. Since, by definition, the simple coroots are dual to the fundamental weights,  the scalar product $\varpi\cdot \phi$ is defined using the identity matrix. 
We define a sign vector for the hyperplane arrangement I$(E_7, \mathbf{56})$ as follows:\footnote{Each  weight of the representation $\mathbf{56}$ has norm square $3/2$ and has scalar product $\pm 1/2$ with any other weight of $\mathbf{56}$. 
Our choice of signs for the entries of the sign vector is such that the  highest weight $\boxed{0\  0 \  0 \  0\  0\  1\  0}$ has a sign $(-1,-1,-1,-1,-1,-1,-1)$.} 
\begin{equation}
\phi\mapsto (\varpi_{19}\cdot \phi , \varpi_{20} \cdot \phi , \varpi_{23}\cdot \phi , \varpi_{26}\cdot \phi , \varpi_{29}\cdot \phi ,\varpi_{32}\cdot \phi,\varpi_{30}\cdot \phi ),
\end{equation}
which gives 
\begin{equation}\label{eq.SV}
\begin{aligned}
\sigma(\phi)= & (\phi_1-\phi_6, 
-\phi_1+\phi_2-\phi_6,
-\phi_2+\phi_3-\phi_6, \\
& \   -\phi_3+\phi_4-\phi_6+\phi_7,
-\phi_4+\phi_5-\phi_6+\phi_7,
-\phi_5+\phi_7,
\phi_4-\phi_6-\phi_7
). 
\end{aligned}
\end{equation}
{The open dual fundamental Weyl chamber of E$_7$ is defined as the cone  $(\alpha_i\cdot \phi)>0$ with  $i=\{1,2,3,4,5,6,7\}$.} 

Each chamber is uniquely defined by the signs of entries of the vector $\sigma(\phi)$ when evaluated at any interior point $\phi$ of the chamber. 
The weights have a partial order defined by adding positive simple roots: $\varpi_i\succ\varpi_j$ if $\varpi_i-\varpi_j$ is a nonnegative integer linear combination of positive roots. 
In particular, the partial order for the weights that are interior walls is (see Figure \ref{Fig:SV}): 
$$
\varpi_{19}\succ\varpi_{20}\succ\varpi_{23}\succ\varpi_{26}\succ\varpi_{29}\succ\varpi_{32}, \quad \varpi_{26}\succ\varpi_{30}.
$$
Thus, writing the sign vector  as in Figure \ref{Fig:SV}, we have the following  rule:
\begin{enumerate}
\item The  negative sign  flows as the arrows of Figure \ref{Fig:SV}. 
\item 
The forms $\varpi_{30}\cdot \phi$ and $\varpi_{29}\cdot \phi$ cannot both be positive at the same time.
\end{enumerate}
For example, if  $\varpi_{19}\cdot \phi$ is negative, the same is true of all the $\varpi_{i}\cdot \phi$ with $i=\{20,23,26,29,32,30\}$. The second rule is justified by the fact that  $\varpi_{30}+\varpi_{29}=-\alpha_6$ and we are restricted to the  dual Weyl chamber.
To define a chamber, we just need to name which one of the $\varpi_i\cdot \phi$ is the first negative one with respect to the order given above. If both $\varpi_{26}\cdot \phi$ and  $\varpi_{30}\cdot \phi$  are positive, 
then  $\varpi_{19}\cdot \phi$,  $\varpi_{20}\cdot \phi$, and  $\varpi_{23}\cdot \phi$ are all positive, 
 $\varpi_{29}\cdot \phi$ is necessarily negative (since  $\varpi_{30}\cdot \phi$ is positive), which forces $\varpi_{32}\cdot \phi$ to also be negative. 
There are exactly  eight possibilities satisfying these two rules.  They are listed in Table \ref{Table:Ch} and we check explicitly by solving inequalities that they all occur.  
Their adjacency graph is a Dynkin diagram of type E$_8$  illustrated in Figure \ref{fig:IE756} where we label the chambers by the corresponding simple roots of E$_8$.
\end{proof}

\section{Minimal models $Y_4$, $Y_5$, $Y_6$, and $Y_8$}

In this section, we  construct four distinct relative minimal models  over the Weierstrass model of an E$_7$-model as crepant resolutions of the Weierstrass model in equation \eqref{eq:E7}. We also study the flops between these distinct crepant resolutions. These relative minimal models correspond to  chambers Ch$_4$, Ch$_5$, Ch$_6$, and Ch$_8$ of the hyperplane arrangement I(E$_7$, $\bf{56}$)  discussed in section \ref{sec:IE756}. 
Accordingly, we denote these mimimal models by Y$_4$, Y$_5$, Y$_6$, and Y$_8$. We can tell them apart by identifying the extreme rays of de crepant resolution over the Weierstrass model. 
To each extreme ray $C$, we associate a unique weight of the representation $\bf{56}$ of E$_7$. 
The identification is given by computing   minus the intersection of the curve  $C$  with the fibral divisors $D_i$ ($i=1,\ldots, 7$) not touching the section of the elliptic fibration.
 The flops between the four minimal models Y$_4$, Y$_5$, Y$_6$, and Y$_8$ form a  Dynkin diagram of type D$_4$ as illustrated in Figure \ref{fig:IE756}.

\subsection{Overview of the sequence of blowups defining the resolutions}
Each of the models Y$_4$, Y$_5$, Y$_6$, and Y$_8$ can be obtained by numerous sequences of blowups. 
We give two distinct resolutions for Y$_4$ and Y$_5$ to ease the description of the flops. 
We will give two trees of blowups defining respectively crepant resolutions of the triples (Y$_4$, Y$_5$, Y$_8$) and (Y$_4$, Y$_6$,Y$_8$). 
The flops within each of these triples will be modeled by the flops of a suspended pinch point. 
The graph of flops of a suspended pinch point is a Dynkin diagram of type A$_3$ and gluing two such Dynkin diagram along two consecutive nodes gives a D$_4$-Dynkin diagram. For us, the two consecutive nodes will correspond to Y$_4$ and Y$_5$ and this explains why we give two distinct resolutions for these two minimal models.

\noindent{\bf Conventions for blowups.}
 Each crepant resolution is an embedded resolution defined by a sequence of blowups with smooth centers. 
We denote the blowup $X_{i+1}\to X_i$ along the ideal $(f_1,f_2,\ldots,f_n)$ with exceptional divisor $E$ as:
$$\begin{tikzcd}[column sep=2.4cm]X_i \arrow[leftarrow]{r} {\displaystyle (f_1,\ldots, f_n|E)}  & X_{i+1}\end{tikzcd},$$
where $X_0$ is the projective bundle in which the Weierstrass model is defined. 

 We abuse notations and call the proper transforms of the variables involved in the blowup by the same name. 
That means for example that we implement $(u_1, \ldots, u_n| e_1)$ by the birational transformation $(u_1, \ldots, u_n)\mapsto (u_1 e_1, \dots, u_n e_1)$ with the understanding that we introduce at the same time projective coordinates $[u_1:\ldots:u_n]$ which are the projective coordinates of the fiber of a $\mathbb{P}^{n-1}$-bundle introduced by the blowup.  In particular, after the blowup, $(u_1,\ldots, u_n)$ cannot vanish simultaneously as they are projective coordinates of a $\mathbb{P}^{n-1}$. 
If we blowup a regular sequence $(u_1,\ldots, u_n)$ and the defining equation has multiplicity $n-1$ along $V(u_1, \ldots, u_n)$, then by adjunction, it is easy to see that the embedded blowup defines a crepant map.

The sequence of blowups that we consider for the resolutions  Y$_4$ and  Y$_5$,  and to understand the flops between Y$_4$, Y$_5$, and Y$_8$, are the following
\begin{equation}\label{Ch458}
\begin{tikzpicture}[baseline= (a).base]
\node[scale=.9] (a) at (0,0) {
\begin{tikzcd}[column sep=1.4cm, ampersand replacement=\&]
\& \& \& \& \& \& \&  \text{X}_{7}^+ 
\\
X_0 \arrow[leftarrow]{r} {\displaystyle (x,y,s|e_1)} \& X_1 \arrow[leftarrow]{r} {\displaystyle (x,y,e_1|e_2)} \&  X_2\arrow[leftarrow]{r} {\displaystyle (y,e_1,e_2|e_3)} \&  X_3  \arrow[leftarrow]{r} {\displaystyle (y,e_2|e_4)} \&  
X_4 \arrow[leftarrow]{r} {\displaystyle (e_2,e_4|e_5)} \&  X_5  \arrow[leftarrow]{r} {\displaystyle (y,e_3|e_6)}  \arrow[leftarrow]{ddr}[left]{\displaystyle (e_3,e_4|e_6)}\&  X_6 
\arrow[leftarrow,sloped]{ru}[above]{\displaystyle (e_3,e_6|e_7)}  
\arrow[leftarrow,sloped]{rd}[below]{\displaystyle (e_3,e_4|e_7)}
\& \\
\& \& \& \& \& \& \& \text{X}_7^-\\
\& \& \& \& \& \&\text{X}_6' \arrow[leftarrow]{r}[above]{\displaystyle (y,e_3|e_7)} \& \text{X}_7'
\end{tikzcd} 
};
\end{tikzpicture}
\end{equation}
where $X_0$ is the projective bundle in which the Weierstrass model is defined. 
We work with embedded resolutions. 
The minimal models Y$_4$, Y$_5$, and Y$_8$ are  obtained as the proper transform of the Weierstrass model in $X_7^+$, $X_7^-$, and $X_7'$,     respectively.

To understand the flops between Y$_4$, Y$_5$, and Y$_6$, we consider the following tree of blowups:

\begin{equation}\label{Ch456}
\begin{tikzpicture}[baseline= (a).base]
\node[scale=.9] (a) at (0,0) {
\begin{tikzcd}[column sep=1.4cm, ampersand replacement=\&]
 \& \& \& \& \& \&  \text{X}_{6}^+ 
\\
X_0 \arrow[leftarrow]{r} {\displaystyle (x,y,s|e_1)} \& X_1
 \arrow[leftarrow]{r} {\displaystyle (y,e_1|e_2)} 
  \&  X_2
  \arrow[leftarrow]{r} {\displaystyle (x,y,e_2|e_3)} 
  \&  X_3 
   \arrow[leftarrow]{r} {\displaystyle (x,e_2,e_3|e_4)} \& X_4 \arrow[leftarrow]{r} {\displaystyle (e_2,e_4|e_5)}
   \arrow[leftarrow,sloped]{rdd}[below] {\displaystyle (e_2,e_3|e_5)}
    \&  X_5
\arrow[leftarrow,sloped]{ru}[above]{\displaystyle (e_2,e_5|e_6)}  
\arrow[leftarrow,sloped]{rd}[below]{\displaystyle (e_2,e_3|e_6)}
\& \\ 
\& \& \& \& \& \& \text{X}_6^-\\
 \& \& \& \& \&\text{X}_5' \arrow[leftarrow]{r}[above]{\displaystyle (e_2,e_4|e_6)} \& \text{X}_6'
\end{tikzcd} 
};
\end{tikzpicture}
\end{equation}
where 
Y$_4$, Y$_5$, and  Y$_6$ are obtained as the proper transform of the Weierstrass model in $X_6^+$ , $X_6^-$, and $X'_6$, respectively.

\noindent{\bf Conventions for Cartan divisors and fibers.}
We denote the fibral divisors by $D_m$ where $m=0,1, \ldots, 7$ and D$_0$ is the divisor touching the zero section of the elliptic fibration.  By definition, each $D_m$ is a fibration of a rational curve over the divisor $S$ in the base. We denote the generic curve of $D_m$ by  $C_m$. 
 The intersection numbers $\int_Y {D}_m\cdot {C}_n$ are minus the Cartan matrix of the affine Dynkin diagram of  type $\widetilde{\text{E}}_7$. The fibral divisors not touching the section are denoted $D_i$ with $i=1,2,\ldots, 7$. 
 We recall that the weight of a vertical curve $C$ is by definition the vector of minus its intersection number with the fibral divisors: 
 $$
 (-\int_Y D_0\cdot C, -\int_Y D_1\cdot C, \ldots, -\int_Y D_7\cdot C).
 $$
They are not independent, as the linear combination with coefficients  $(1,2,3,4,3,2,1,2)$ gives zero.\footnote{
These coefficients are the multiplicities of the node of the fiber III$^*$ and also correspond to the Dynkin labels of the highest root of E$_7$ with one given for the extra node.}
For that reason, it is enough to compute only the intersection with the D$_i$, which gives a vector in the weight lattice of E$_7$. 

\subsection{The geometry of  $Y_4$}
We study the minimal model $Y_4$ using the resolution discussed in equation \eqref{Ch458}. 
$Y_4$ is then the proper transform of the Weierstrass model of equation 
\eqref{eq:E7} after the blowups leading to X$_7^+$ in equation 
\eqref{Ch458}. The result is:
\begin{equation}
\text{Y}_4: \quad  e_4 e_6 y^2-e_1 e_2 e_3 (e_2 e_4 e_5^2  x^3+a e_1 s^3 x+b e_1^2 e_3 e_6 e_7^2 s^5)=0 .
\end{equation}
The relative projective coordinates are:
\begin{align}
\begin{aligned}
[e_2 e_3 e_4 e_5^2 e_6 e_7^2 x: e_2 e_3^2 e_4^2 e_5^3 e_6^3 e_7^5 y: s][x: 
  e_3 e_4 e_5 e_6^2 e_7^3 y: e_1 e_3 e_6 e_7^2]\\
   [e_4 e_5 e_6 e_7 y: e_1: 
  e_2 e_4 e_5^2][e_6 e_7 y: e_2 e_5][e_2: e_4][y: e_3 e_7] [e_3: e_6] .
\end{aligned}
\end{align}
The divisor for the special fiber is $se_1 e_2 e_3^2 e_4 e_5^2 e_6^2 e_7^4 $. The fibral divisors are:
\begin{equation}
\begin{cases}
1\   D_0: &\quad s= e_6 y^2- e_1 e_2^2 e_3 e_5^2 x^3 =0\\
2\  D_{1}: &\quad  e_1=y=0\\
3\  D_{2}: &\quad e_1=e_6=0\\
4\  D_{3}: &\quad e_7=e_4 e_6 y^2 -e_1 e_2^2 e_3 e_4 e_5^2 x^3 -a e_1^2 e_2 e_3 s^3 x =0\\
3\  D_4: &\quad  e_3=e_4=0\\
2\  D_{5}: & \quad e_1=e_4=0\\
    1\    D_6 : &\quad e_4= b e_1 e_3 e_6 e_7^2 s^2 + a x=0\\
       2\   D_7 :&\quad  
e_6=  e_2 e_4 e_5^2 x^2+a e_1 s^3=0 
\end{cases}
\end{equation}
We denote by $C_a$ the generic fiber of the fibral divisor $D_a$. 
All the fibral divisors are $\mathbb{P}^1$-bundles with the exception of $D_3$ and $D_6$. 

\begin{rem}
One might think that the curve $C_7$ degenerates at $a=0$. However, that is not the case because when $a=0$, the defining equation for $C_7$ gives $e_6=e_2 e_4 e_5^2 x^2=0$. Since $e_6$ and $e_2 e_4 x $ cannot vanish simultaneously, as is clear by looking at the projective coordinates, we deduce that over $V(a)\cap S$, the curve $C_7$ simplifies to  $C_{7}: a=e_6=e_4=0$ but does not degenerate. It follows that $D_0$, $D_1$, $D_2$, D$_4$, and $D_5$ are $\mathbb{P}^1$-projective bundles. 
\end{rem}
The degeneration at $a=0$ gives:  
\begin{align}
\begin{cases}
C_3\longrightarrow C_{36}+C'_{3}, \\
C_6 \longrightarrow 2C_{36}+C_{7}+C_4,\\
\end{cases}
\end{align}
where 
\begin{equation}
C_3':\quad a= e_7= e_6 y^2 -e_1 e_2^2 e_3  e_5^2 x^3=0 
 , \quad C_{36}:\  a=e_4=e_7=0.
\end{equation}
Only $C_{36}$ and $C_3'$ are new rational curves produced by the degeneration at $V(s,a)$. 
They are extreme rays. 
 The geometric weights of these curves are: 
 \begin{align}
\begin{cases}
C_{36} =\frac{1}{2}(C_6-C_4-C_7)  \to  \boxed{\ 0 \ 0\ 0  \ $-1$  \ 1 \ 0 \ $-1$ \ 1 } = \varpi_{26}, \\
C_{3}'  =\frac{1}{2}(2C_3-C_6+C_4+C_7)  \to \boxed{\ 0 \ 0\ 1  \ $-1$  \ 0 \ 0 \ 1 \ 0 } = -\varpi_{23}.
\end{cases}
\end{align}
The two weights $\varpi_{26}$ and $\varpi_{23}$ are weights of the representation $\bf{56}$ and uniquely characterize the chamber Ch$_4$ of the hyperplane arrangement I(E$_7$,$\mathbf{56}$).

Taking into account the multiplicities of the curves $C_4$, $C_6$, and $C_7$, we have 
\begin{equation}
4C_3 +C_6 \longrightarrow 6 C_{36} +4 C_3'+C_4+C_{7}.
\end{equation}
Together with the remaining curves,  we get   a fiber of type II$^*$ (with dual fiber the affine Dynkin diagram $\widetilde{\text{E}}_8$) but with the node  of multiplicity 5 (the node $\alpha_4$ of  $\widetilde{\text{E}}_8$) contracted to a point as illustrated on Figure \ref{Fig:Ch4Fib}.

\begin{rem}\label{Rem:Jum}
We also note that the fiber $C_6$ over $V(s,a,b)$ jumps in dimension and becomes a quadric surface. Thus the resolution of the E$_7$-fiber does not give a flat elliptic fibration when the base is a threefold and $a$ and $b$ can vanish simultaneously on $S$. No other fiber component jumps in dimension. 
\end{rem}

\begin{figure}[htb]
\begin{center}
\scalebox{.8}{
\begin{tikzpicture}
				\node[draw,circle,thick,scale=1,fill=black,label=below:{\scalebox{1.2}{ $\alpha_0$}}] (0) at (0,0){$1$};
				\node[draw,circle,thick,scale=1,label=below:{\scalebox{1.2}{$\alpha_1$}}] (1) at (1.2,0){$2$};
				\node[draw,circle,thick,scale=1,label=below:{\scalebox{1.2}{$\alpha_2$}}] (2) at (2.4,0){$3$};
				\node[draw,circle,thick,scale=1,label=below:{\scalebox{1.2}{$\alpha_3$}}] (3) at (3.6,0){$4$};
				\node[draw,circle,thick,scale=1,label=below:{\scalebox{1.2}{$\alpha_5$}}] (5) at (6,0){$6$};
				\node[draw,circle,thick,scale=1,label=below:{\scalebox{1.2}{$\alpha_6$}}] (6) at (7.2,0){$4$};
				\node[draw,circle,thick,scale=1, label=below:{\scalebox{1.2}{$\alpha_7$}}] (7) at (8.4,0){$2$};
				\node[draw,circle,thick,scale=1, label=above:{\scalebox{1.2}{$\alpha_8$}}] (8) at (6,1.6){$3$};
				\draw[thick] (0)--(1)--(2)--(3)--(5)--(6)--(7);
				\draw[thick]  (5)--(8);
					\end{tikzpicture}	
			}
\end{center}
\caption{The resolution in Chamber 4 has a fiber \~{E}$_7$ that degenerates to a fiber of type 
II$^*$ with the node $\alpha_4$ contracted to a point.
\label{Fig:Ch4Fib}}
\end{figure}

\subsection{The geometry of $Y_5$}

We study the minimal model Y$_5$ using the resolution discussed in equation \eqref{Ch458}. 
Y$_5$ is the proper transform of the Weierstrass model of equation 
\eqref{eq:E7} after the blowups leading to X$_7^-$ in equation 
\eqref{Ch458}. The result is:
\begin{equation}
Y_{5}: \quad  e_4 e_6 y^2= e_1 e_2 e_3 (e_2 e_4 e_5^2 e_7 x^3+a e_1 s^3 x+b e_1^2 e_3 e_6 e_7 s^5 ).
\end{equation}
The relative projective coordinates due to the successive blowups are 
\begin{equation}
\begin{aligned}
[e_2 e_3 e_4 e_5^2 e_6 e_7^2 x: e_2 e_3^2 e_4^2 e_5^3 e_6^3 e_7^4 y: s] 
[x: e_3 e_4 e_5 e_6^2 e_7^2 y: e_1 e_3 e_6 e_7] \\
[e_4 e_5 e_6 e_7 y: e_1: e_2 e_4 e_5^2 e_7]
[e_6 y: e_2 e_5]
[e_2: e_4 e_7 ]
[y: e_3 e_7] [e_3:e_4].
\end{aligned}
\end{equation}
The divisor for the special fiber is $s e_1 e_2 e_3^2 e_4 e_5^2 e_6^2 e_7^3$ with irreducible components:
\begin{equation}
\begin{aligned}
\begin{cases}
1\ D_0:\quad  & s=e_1 e_2^2 e_3 e_5^2 e_7 x^3 - e_6 y^2=0 \\
2\ D_1: \quad &e_1=y=0\\ 
3\ D_2: \quad &e_1=e_6=0 \\ 
4\ D_3: \quad &e_3=e_6=0 \\ 
3\ D_4: \quad &   e_7=e_4 e_6 y^2-a e_1^2 e_2 e_3 s^3 x=0 \\ 
2\ D_5: \quad  &e_5=e_4 e_6 y^2- a e_1^2 e_2 e_3 s^3 x-b e_1^3 e_2 e_3^2 e_6 e_7 s^5 =0\\
1\ D_6: \quad & e_4=a x+b e_1 e_3 e_6 e_7 s^2 =0  \\ 
2\ D_7: \quad & e_6=e_2 e_4 e_5^2 e_7 x^2+a e_1 s^3 =0
\end{cases}
\end{aligned}
\end{equation}
All the fibral divisors are $\mathbb{P}^1$-bundles except for $D_4$, $D_6$ and $D_7$ which  degenerate over $V(s,a)$:
\begin{align}
\begin{cases}
C_4\longrightarrow C_{46}+C_{47}, \\
C_6 \longrightarrow C_{46}+C_{67},\\
C_{7}\longrightarrow C_{47}+C_{67}
\end{cases}
\end{align}
where:
\begin{equation}
C_{46}:\quad a= e_4= e_7=0, \quad C_{47}:\quad a= e_6= e_7=0, \quad C_{67}:\quad a= e_4= e_6=0.
\end{equation}
The naming is chosen to reflect the intersection of the original curves rather than our choice of blowup (over a point of $V(s,a),$  $C_{ij}$   is the intersection of $D_i$ and $D_j$). Taking into account the multiplicities of the curves $C_4$, $C_6$, and $C_7$, we have 
\begin{equation}
3 C_4+ 2C_7+ C_6\longrightarrow 5C_{47}+4 C_{46}+3C_{67}.
\end{equation}
The three new curves ($ C_{46}, C_{47}$, and $C_{67}$) all intersect at the same point $a=e_4=e_6=e_7=0$. 
The curve $C_{67}$ does not intersect any other curves since $C_{67}$ is only visible in the patch 
$e_5e_3e_2 x s \neq 0$. 
The curve $C_{47}$ intersects transversally the curve $C_3$ and the curve $C_{46}$ intersects transversally the curve $C_5$. 
Altogether, the curves form  a fiber of type II$^*$ with the nodes  of multiplicity 6 ($\alpha_5$) contracted to  a point as illustrated in Figure \ref{Fig:Ch5Fib}.

\begin{figure}[htb]
\begin{center}
\scalebox{.8}{
\begin{tikzpicture}
				\node[draw,circle,thick,scale=1,fill=black,label=below:{\scalebox{1.2}{ $\alpha_0$}}] (0) at (0,0){$1$};
				\node[draw,circle,thick,scale=1,label=below:{\scalebox{1.2}{$\alpha_1$}}] (1) at (1.2,0){$2$};
				\node[draw,circle,thick,scale=1,label=below:{\scalebox{1.2}{$\alpha_2$}}] (2) at (2.4,0){$3$};
				\node[draw,circle,thick,scale=1,label=below:{\scalebox{1.2}{$\alpha_3$}}] (3) at (3.6,0){$4$};
				\node[draw,circle,thick,scale=1,label=below:{\scalebox{1.2}{$\alpha_4$}}] (4) at (4.8,0){$5$};
				\node[draw,circle,thick,scale=1,label=below:{\scalebox{1.2}{$\alpha_6$}}] (6) at (7.2,0){$4$};
				\node[draw,circle,thick,scale=1, label=below:{\scalebox{1.2}{$\alpha_7$}}] (7) at (8.4,0){$2$};
				\node[draw,circle,thick,scale=1, label=above:{\scalebox{1.2}{$\alpha_8$}}] (8) at (6,1.6){$3$};
				\draw[thick] (0)--(1)--(2)--(3)--(4)--(6)--(7);
				\draw[thick]  (6,0)--(8);
					\end{tikzpicture}	
					}
					\end{center}
\caption{
The resolution Y$_5$ (corresponding to Chamber 5) has a fiber $\widetilde{\text{E}}_7$ that degenerates to a fiber of type 
II$^*$ (with dual graph $\widetilde{\text{E}}_8$) with the node $\alpha_4$ contracted to a point.
\label{Fig:Ch5Fib}
}
\end{figure}
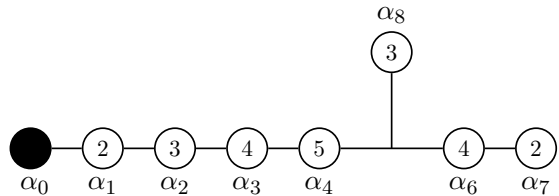   
The corresponding weights are 
\begin{align}
\begin{cases}
& C_{46}=\frac{1}{2}(C_4+C_6-C_7)\to \boxed{\ 0 \ 0\ 0 \ 0 \ $-1$  \ 1 \ $-1$ \ 1 } = \varpi_{29}, \\
& C_{47}=\frac{1}{2}(C_4+C_7-C_6)\to  \boxed{\ 0 \ 0\ 0 \ 1 \ $-1$  \ 0 \ 1 \ $-1$ } = -\varpi_{26},\\
& C_{67}=\frac{1}{2}(C_6+C_7-C_4)\to  \boxed{\ 0 \ 0\ 0 \ 0 \ 1  \ 0 \ $-1$ \ $-1$ } = \varpi_{30}.
\end{cases}
\end{align}
After removing the first component (corresponding to $C_0$), these vectors become weights in the representation $\bf{56}$ of E$_7$. The Weyl orbit of each of these weights produces the full representation $\mathbf{56}$ of E$_7$ since it is a minuscule representation. 
The hyperplanes $\varpi_{26}^\bot$, $\varpi_{29}^\bot$, $\varpi_{30}^\bot$ are the interior walls of the chamber Ch$_5$ (see Figure \ref{fig:IE756}).
As in Remark \ref{Rem:Jum}, the node C$_6$ becomes a rational surface over $V(s,a,b)$.

\subsection{The geometry of  $Y_6$}\label{sec:Y6}

The minimal model Y$_6$ is discussed using the resolution in \eqref{Ch456}. 
The proper transform  is
\begin{equation}\label{Y6Res}
Y_6:\quad e_2 e_3  e_5y^2-e_1  (e_3 e_6 e_4^2   x^3+a e_1 e_2  s^3 x+b e_1^2 e_2^2 e_5  e_6s^5)=0.
\end{equation}
The projective coordinates are
\begin{equation}\label{Y6projcoord}
\begin{aligned}
  [e_3 e_4^2 e_5e_6^2  x:e_2 e_3^2 e_4^3 e_5^3 e_6^4 y:s] 
    [e_3 e_4 e_5  e_6 y:e_1]  [e_4  e_6 x:y:e_2 e_4 e_5 e_6^2]
      [ x:e_2 e_5e_6 :e_3 e_5] \\
      [e_2e_6 :e_3][e_2:e_4] .  
\end{aligned}
\end{equation}
The total transform of $s$ is 
$s e_1 e_2 e_3 e_4^2 e_5^2 e_6^3 $ and the fibral divisors  are:

\begin{equation}
\begin{cases}
1\   D_0: &\quad s= e_2   e_5y^2-e_1  e_6 e_4^2   x^3=0\\
2\  D_{1}: &\quad e_1=e_2=0\\
3\  D_2: &\quad  e_2=x=0\\
4\  D_{3}: &\quad e_2=e_6=0
\\
3\  D_{4}: &\quad  e_6=  e_3  e_5y^2-a e_1^2   s^3 x=0 \\
2\  D_{5}: & \quad  e_4=e_3  e_5y^2-e_1^2 s^3  (a x+b e_1 e_2 e_5  e_6s^2)=0\\
    1\    D_6 : &\quad e_3=a x+b e_1 e_2 e_5e_6 s^2=0 \\
           2\   D_7 :&\quad  e_5=e_3 e_4^2e_6   x^2+a e_1 e_2s^3=0
\end{cases}
\end{equation}
The resolution Y$_6$ (corresponding to Chamber 6) has a fiber III$^*$ (with dual graph $\widetilde{\text{E}}_7$) over the generic point of $S$. But  over $V(s,a)$, the fiber III$^*$ degenerates as follows:
\begin{align}
\begin{cases}
 C_5\longrightarrow C_{57}+C_5',\\
  C_7\longrightarrow C_{4}+ C_{6}+2C_{57} .
\end{cases}
\end{align}
We have 
\begin{align}
2C_5+2C_7\longrightarrow 2C_{4}+6 C_{57}+2C_5'+2 C_{6},
\end{align}
where
\begin{equation}
\begin{aligned}
&C_{57}:a=e_4=e_5=0, 
\quad C_5': e_4=e_3  y^2-b e_1^3 e_2 e_6s^5=0.
\end{aligned}
\end{equation}
The resulting fiber is a fiber of type  II$^*$ (with dual graph $\widetilde{\text{E}}_8$) with the node $\alpha_6$ contracted to a point as illustrated in Figure \ref{Fig:Ch6Fib}.
The weights of these curves can be computed as follows
\begin{align}
\begin{cases}
C_{57}=\frac{1}{2} (C_7-C_4-C_6) \rightarrow   \boxed{0\ 0\   0\  1\  $-1$ \  1\  $-1$ \  0} = -\varpi_{29},\\
C_5' = \frac{1}{2}(2C_5-C_7+C_4+C_6) \rightarrow  \boxed{0\ 0\   0\  0\  0\  $-1$ \  0\  1} = \varpi_{32}.
 \end{cases}
 \end{align}
 These are weights of the representation $\bf{56}$ of E$_7$. 
 The hyperplanes $\varpi_{29}^\bot$ and $\varpi_{32}^\bot$ are walls of the chamber $\mathrm{Ch}_6$ of the hyperplane arrangement I(E$_7$, $\bf{56}$). 

As for Y$_4$ and Y$_5$, when the base has at least dimension three, the curve $C_6$ becomes a quadric surface over $V(s,a,b)$. 

\begin{figure}[htb]
\begin{center}
	\scalebox{.8}{
		\begin{tikzpicture}
				\node[draw,circle,thick,scale=1,fill=black,label=below:{\scalebox{1.2}{ $\alpha_0$}}] (0) at (0,0){$1$};
				\node[draw,circle,thick,scale=1,label=below:{\scalebox{1.2}{$\alpha_1$}}] (1) at (1.2,0){$2$};
				\node[draw,circle,thick,scale=1,label=below:{\scalebox{1.2}{$\alpha_2$}}] (2) at (2.4,0){$3$};
				\node[draw,circle,thick,scale=1,label=below:{\scalebox{1.2}{$\alpha_3$}}] (3) at (3.6,0){$4$};
				\node[draw,circle,thick,scale=1,label=below:{\scalebox{1.2}{$\alpha_4$}}] (4) at (4.8,0){$5$};
				\node[draw,circle,thick,scale=1,label=below:{\scalebox{1.2}{$\alpha_5$}}] (5) at (6,0){$6$};
							\node[draw,circle,thick,scale=1, label=below:{\scalebox{1.2}{$\alpha_7$}}] (7) at (8.4,0){$2$};
				\node[draw,circle,thick,scale=1, label=above:{\scalebox{1.2}{$\alpha_8$}}] (8) at (6,1.6){$3$};
				\draw[thick] (0)--(1)--(2)--(3)--(4)--(5)--(7);
				\draw[thick]  (5)--(8);
					\end{tikzpicture}}
\end{center}
\caption{The resolution in Chamber 6 has a fiber $\widetilde{\text{E}}_7$ that degenerates to a fiber of type 
II$^*$ with the node $\alpha_6$ contracted to a point.
\label{Fig:Ch6Fib}}
\end{figure}

\subsection{The geometry of  $Y_8$}
We study the minimal model Y$_8$ using the resolution discussed in equation \eqref{Ch458}. 
Y$_5$ is the proper transform of the Weierstrass model of equation 
\eqref{eq:E7} after the blowups leading to X$_7^-$ in equation 
\eqref{Ch458}. The result is:
\begin{equation}\label{eq:Y8}
Y_8:\quad\quad
e_4 e_7 y^2-e_1 e_2 e_3 (e_2 e_4 e_5^2 e_6 x^3+a e_1 s^3 x+b e_1^2 e_3 e_6 e_7 s^5)=0.
\end{equation}
where the relative projective coordinates are
\begin{align}\label{eq:Y8projcoord}
\begin{aligned}
[e_2 e_3 e_4 e_5^2 e_6^2 e_7 x : e_2 e_3^2 e_4^2 e_5^3 e_6^4 e_7^3 y : s]\   [x :
  e_3 e_4 e_5 e_6^2 e_7^2 y : e_1 e_3 e_6 e_7]\\
   [e_4 e_5 e_6 e_7 y : e_1 : e_2 e_4 e_5^2 e_6]\  
   [e_7 y : e_2 e_5]\   [e_2 : e_4 e_6]\  [e_4 : e_3 e_7] \  [y : e_3].
  \end{aligned}
\end{align}
The total transform of $s$ is 
$se_1 e_2 e_3^2 e_4 e_5^2 e_6^3 e_7^2 $ and we have the following fibral divisors 
\begin{align}
\begin{cases}
1\ D_0:\quad  & s=e_1 e_2^2 e_3 e_5^2 e_6 x^3 - e_7 y^2=0 \\
2\ D_1: \quad &e_1=y=0\\ 
3\ D_2: \quad &e_1=e_7=0 \\ 
4\ D_3: \quad &e_3=e_7=0 \\ 
3\ D_4: \quad &   e_6=e_4 e_7 y^2-a e_1^2 e_2 e_3 s^3 x=0 \\ 
2\ D_5: \quad  &e_5=e_4 e_7 y^2- a e_1^2 e_2 e_3 s^3 x-b e_1^3 e_2 e_3^2 e_6 e_7 s^5 =0\\
1\ D_6: \quad & e_4=a x+b e_1 e_3 e_6 e_7 s^2 =0  \\ 
2\ D_7: \quad & e_7=e_2 e_4 e_5^2 e_6 x^2+a e_1 s^3 =0
\end{cases}
\end{align}
At $V(s,a)$, we have:
\begin{align}
C_4  \longrightarrow   C_{6}+C_{7} +2 C'_{4},     
\end{align}
where
\begin{equation}
C'_4: a=e_6=y=0.
\end{equation}
Only C$'_4$ contributes a weight that is not in the root lattice: 
\begin{equation}
 C'_{4}\longrightarrow
 \frac{1}{2}(C_4-C_6-C_7)\longrightarrow   \boxed{0\  \ 0\   0\   0\  -1\   0\  1\ 1}= -\varpi_{30}. 
\end{equation}
The weight $\varpi_{30}$ is in the representation $\bf{56}$ of E$_7$. 
The hyperplane $\varpi_{30}^\bot$ perpendicular  to $\varpi_{30}$ is the interior wall of chamber Ch$_8$ of the hyperplane arrangement  I(E$_7$,$\bf{56}$). 
The fiber obtained at the degeneration $V(s,a)$ is a one-chain  with the following multiplicities:
\begin{equation}
C_0-2C_1-3C_2-4C_3-5C_{47}-6 C'_{4}-4C_{46}-2 C_5 .
\end{equation}
This chain corresponds to an affine Dynkin diagram of type \~E$_8$ with the node $\alpha_8$ removed as illustrated on Figure \ref{Fig:Ch8Fib}. 
\begin{figure}[htb]
\begin{center}
					\scalebox{.8}{
					\begin{tikzpicture}
				\node[draw,circle,thick,scale=1,fill=black,label=below:{\scalebox{1.2}{ $\alpha_0$}}] (0) at (0,0){$1$};
				\node[draw,circle,thick,scale=1,label=below:{\scalebox{1.2}{$\alpha_1$}}] (1) at (1.2,0){$2$};
				\node[draw,circle,thick,scale=1,label=below:{\scalebox{1.2}{$\alpha_2$}}] (2) at (2.4,0){$3$};
				\node[draw,circle,thick,scale=1,label=below:{\scalebox{1.2}{$\alpha_3$}}] (3) at (3.6,0){$4$};
				\node[draw,circle,thick,scale=1,label=below:{\scalebox{1.2}{$\alpha_4$}}] (4) at (4.8,0){$5$};
				\node[draw,circle,thick,scale=1,label=below:{\scalebox{1.2}{$\alpha_5$}}] (5) at (6,0){$6$};
				\node[draw,circle,thick,scale=1,label=below:{\scalebox{1.2}{$\alpha_6$}}] (6) at (7.2,0){$4$};
				\node[draw,circle,thick,scale=1, label=below:{\scalebox{1.2}{$\alpha_7$}}] (7) at (8.4,0){$2$};
				\draw[thick] (0)--(1)--(2)--(3)--(4)--(5)--(6)--(7);
					\end{tikzpicture}}			
\end{center}
\caption{The resolution in Chamber 8 has a fiber $\widetilde{\text{E}}_7$ that degenerates to a fiber of type 
II$^*$ with the node $\alpha_8$ contracted to a point.
\label{Fig:Ch8Fib}}
\end{figure}

\subsection{SPP (suspended pinch point) flops\label{sec:SSP}}

  Binomial hypersurfaces are simple algebraic varieties. They can be  instrumental in our understanding of flops between  minimal models over a Weierstrass model. 
For example, the binomial variety in $\mathbb{C}^5$ defined by the hypersurface
$$
u_1 u_2 -w_1 w_2 w_3=0,
$$  
has six crepant resolutions whose graph of flops  is a hexagon (an affine Dynkin diagram of type $\widetilde{\text{A}}_5$). 
The flop diagram of this binomial variety matches those of the Spin($8$)-model \cite{G2} and also defines the hexagon of flops of the SU($5$) model 
 \cite{EY}. 
 
The {\em pinch point }(also called the {\em Whitney umbrella}) is the singular surface defined by the following binomial equation in $\mathbb{C}^3$:
$$
 x^2-y^2 z=0.
$$
The Whitney umbrella is not a normal surface as it has singularities in codimension-one. The Whitney umbrella plays an important role in the geometry of weak coupling limits \cite{CDE,AE1,AE2,Esole:2012tf}.

The suspended pinch  point appears in other areas of string geometry \cite{Morrison:1998cs}.   The {\em suspended pinch point} is a threefold defined as the double cover of $\mathbb{C}^3$ branched along a Whitney umbrella. Its defining equation in $\mathbb{C}^4$ is $$u^2=x^2-y^2 z.$$ 
After a change of variables $(u, x, y, z)\to (u_1=z, u_2=u+x, u_3=u-x, u_4=y)$, the suspended pinch point becomes  the following binomial hypersurface in $\mathbb{C}^4$:
$$
Z_0: u_1 u_4^2 = u_2 u_3. 
$$
The singularities of the suspended pinch point are in codimension-two, more precisely, the singular locus is the line $u_2=u_3=u_4=0$ and the singularity worsens at the origin. The variety is  normal and has three distinct crepant resolutions whose graph of flops is a Dynkin diagram of type A$_3$. 
The three crepant resolutions of the suspended pinch points are presented algebraically in Figure \ref{Fig:SSPRes}. 
We also give a toric description of all the crepant partial resolutions in Figure \ref{Figure:SSPToric}.
The algebraic crepant resolutions given in Figure  \ref{Fig:SSPRes} should be compared to the right tails of the trees of resolutions presented in 
equations \eqref{Ch458} and \eqref{Ch456}. This explains why the flops between the minimal models $Y_4$, $Y_5$, $Y_6$, and $Y_8$ are inspired by those of the suspended pinch point.

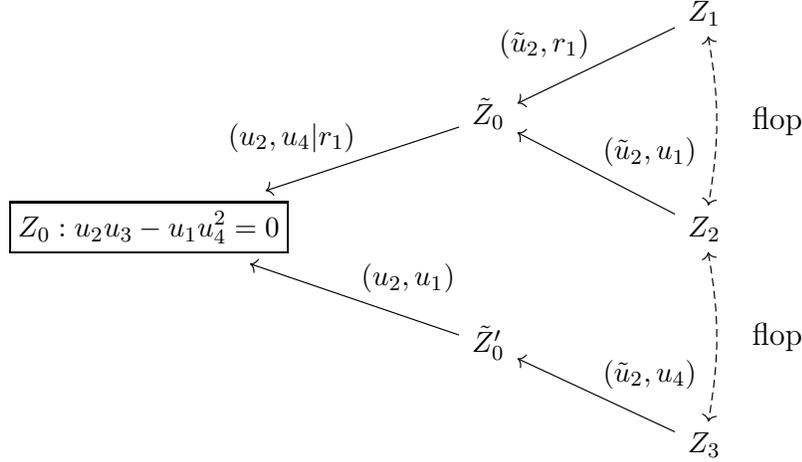
\begin{figure}[htb]
\begin{center}
\begin{tikzcd}[column sep=60pt]
    &  & Z_1  
         \arrow[bend left=10, leftrightarrow, dashed]{dd}[right] {\quad \text{\large flop}}
    &\\ 
    & {
    \tilde{Z}_0}\arrow[leftarrow]{ru} {\displaystyle (\tilde{u}_2,r_1)}  \arrow[leftarrow]{rd} {\displaystyle (\tilde{u}_2,u_1)}& &\\
  \boxed{{Z}_0:u_2 u_3-u_1 u_4^2=0}\arrow[leftarrow]{ru} {\displaystyle (u_2,u_4|r_1)}   \arrow[leftarrow]{rd} {\displaystyle (u_2,u_1)}&  & Z_2  \arrow[bend left=10, leftrightarrow, dashed]{dd}[right] {\quad \text{\large flop}}&\\
     &
   {\tilde{Z}'_0}   \arrow[leftarrow]{rd} {\displaystyle (\tilde{u}_2,u_4)}&  \\
   & & Z_3
\end{tikzcd} 
\end{center}
\caption{The three crepant resolutions of the suspended pinch point $Z_0:u_2 u_3-u_1 u_4^2=0$. 
In the blowup defining $Z_1$,  $r_1=0$ is the exceptional locus of the previous blowup defining $\tilde{Z}_0$, i.e. $r_1$ is the exceptional locus of the blowup of $Z_0$  centered at $u_2=u_4=0$. 
The flops between $Z_1$, $Z_2$, and $Z_3$ defines a Dynkin diagram of type A$_3$ and are all Atiyah flops as shown in Figure \ref{Figure:SSPToric}.
\label{Fig:SSPRes}}
\end{figure}

Blowing up $(u_1, u_2)$ by  $(u_1, u_2)\to (r_1 \tilde{u}_1, r_1 \tilde{u}_2)$ with exceptional divisor $r_1=0$, the proper transform of $Z_0$ is 
$$
\tilde{Z}_0: \tilde{u}_2 u_3-\tilde{u}_1 u_4^2=0.
$$
The variables  $[\tilde{u}_1 :\tilde{u}_2]$ are the projective coordinates of a $\mathbb{P}^1$,  in particular, $\tilde{u}_1$ and $\tilde{u}_2$ cannot vanish at the same time. 
It follows that $\tilde{Z}_0$  is singular  at $u_4=\tilde{u}_2= u_3=0$, which is defined in the patch $\tilde{u}_1\neq 0$. In that patch,  $\tilde{Z}_0$ is isomorphic to the cylindrical quadric cone  $$(\mathbb{C}^2/\mathbb{Z}_2)\times\mathbb{C},$$
where the  $\mathbb{Z}_2$ involution is generated by minus the identity of $\mathbb{C}^2$. As a hypersurface in $\mathbb{C}^4$, it is given by   the zero locus of 
$$x^2- y z=0,$$
where   $\mathbb{C}^4$ is parametrized by $(r_1, x=u_4, y=\frac{\tilde{u}_2}{\tilde{u}_1},z=u_3)$. 
This cone   has a unique crepant resolution obtained, for example, by blowing up $(x,y)$. For $\tilde{Z}_0$, we blowup  $(u_4,u_2)$ using the birational map  $(u_4,u_2)\to (\tilde{u}_4 r_2,\tilde{u}_2 r_2)$,   
where $r_2=0$ is the exceptional divisor. The proper transform is  the crepant resolution
$$
Z_3:    r_2 \tilde{u}_1 \tilde{u}_4^2-\tilde{u}'_2 u_3=0.
$$
We get the other two crepant  resolutions of $Z_0$ by first blowing up $(u_2,u_4)$ with the birational map 
$(u_4,u_2)\to (\tilde{u}_4 r_1,\tilde{u}_2 r_1)$, which gives the proper transform  
$$
\tilde{Z}'_0:   {u}_1 r_1 \tilde{u}_4^2-\tilde{u}_2 {u}_3=0.
$$ 
The variables  $[\tilde{u}_4:\tilde{u}_2]$ are the projective coordinates of a  $\mathbb{P}^1$, in particular,  $\tilde{u}_4$ and $\tilde{u}_2$ cannot vanish at the same time. 
The singular locus of $Z'_0$ is $u_1= r_1=u_2=u_3=0$. This singularity is defined in the patch $\tilde{u}_4\neq 0$  and is a quadric cone    $$k[x,y,z,t]/(x y- z t),$$ 
where $x=u_1$, $y=r_1$, $z=\tilde{u}_2 /\tilde{u}_4$, and $\tilde{u}_3\tilde{u}_4$. 
Such a quadric cone has two  crepant resolutions connected by an Atiyah flop and obtained by blowing up  $(x,z)$ and $(x,t)$,  respectively. 
For $\tilde{Z}'_0$, we obtain the  two crepant resolutions $Z_1$ and $Z_2$ by blowing up $\tilde{Z}'_0$ at $(\tilde{u}_2,r_1)$  and $(\tilde{u}_2,u_1)$,   respectively.

The suspended pinch point  can also be described as a
 particular partial resolution of the $\mathbb{Z}_2\times \mathbb{Z}_2$ orbifold quotient of $\mathbb{C}^3$ \cite{Morrison:1998cs}:
$$
\mathbb{C}^3/(\mathbb{Z}_2\times\mathbb{Z}_2):\quad u_4^2- u_1 u_2 u_3=0.
$$
where the Klein group $\mathbb{Z}_2\times\mathbb{Z}_2$ is generated by the two involution $(z_1, z_2, z_3)\mapsto (-z_1, -z_2, z_3)$  and $(z_1, z_2, z_3)\mapsto (z_1,- z_2, -z_3)$.
Consider the  blowup centered at $(u_4,u_1)$.  In one patch, we have $(u_4,u_1)\to (u_4 u_1, u_1)$ which gives the proper transform 
$
u_1 u_4^2 = u_2 u_3.
$
 The diagram of flops of the crepant resolutions of this orbifold is also a D$_4$-Dynkin diagram as seen in Figure \ref{Fig:C322}.
 
 \clearpage  
 \begin{figure}[htb]
\begin{center}
			\scalebox{.8}{\begin{tikzpicture}
			\draw[dashed, latex-latex, line width=1 pt] (1.2,0)--(3.2,0);  \draw[dashed, latex-latex, line width=1 pt]  (6.2,0)--(8.2,0);
						\draw [-latex, line width=1 pt] (.2,-1)--+(-45:1.8);
												\draw[->,->=stealth, ,line width=1 pt] (2.5,-4.1)--+(-45:1.7);
												\draw [-latex, line width=1 pt](7,-4.1)--+(225:1.7);
												\draw [-latex, line width=1 pt](9.5,-1.1)--+(225:1.7);
												\draw [-latex, line width=1 pt](4,-1.1)--+(225:1.7);\draw[-latex, line width=1 pt] (6,-1.1)--+(-45:1.7);
												\draw [-latex, line width=1 pt](12,-4.1)--+(200:5);
												\draw [-latex, line width=1 pt](-2,-4.1)--+(-20:5);
												\draw [-latex, line width=1 pt](-.5,-1.1)--+(225:1.7);
												\draw [-latex, line width=1 pt](10.5,-1.1)--+(-45:1.7);

					\node (1) at (0,0)	{\begin{tikzpicture}
				\node[draw,circle,thick,scale=1,fill=black,label=above:{}] (1) at (0,0){};
				\node[draw,circle,thick,scale=1,fill=black,label=above:{}] (2) at (1,0){};
				\node[draw,circle,thick,scale=1,fill=black,label=above:{}] (3) at (2,0){};
				\node[draw,circle,thick,scale=1,fill=black,label=above:{}] (5) at (0,1){};
				\node[draw,circle,thick,scale=1,fill=black,label=above:{}] (4) at (1,1){};
				\draw[thick] (1) to (2) to (3) to (4) to (5)  to (1);
				\draw[thick] (1) to (4) to (2);
			\end{tikzpicture}};
					\node (1) at (5,0)	{\begin{tikzpicture}
				\node[draw,circle,thick,scale=1,fill=black,label=above:{}] (1) at (0,0){};
				\node[draw,circle,thick,scale=1,fill=black,label=above:{}] (2) at (1,0){};
				\node[draw,circle,thick,scale=1,fill=black,label=above:{}] (3) at (2,0){};
				\node[draw,circle,thick,scale=1,fill=black,label=above:{}] (5) at (0,1){};
				\node[draw,circle,thick,scale=1,fill=black,label=above:{}] (4) at (1,1){};
				\draw[thick] (1) to (2) to (3) to (4) to (5)  to (1);
				\draw[thick] (2) to (4);   \draw[thick] (5) to (2);
			\end{tikzpicture}};
			\node (3) at (10,0) 
			{
			\begin{tikzpicture}
				\node[draw,circle,thick,scale=1,fill=black,label=above:{}] (1) at (0,0){};
				\node[draw,circle,thick,scale=1,fill=black,label=above:{}] (2) at (1,0){};
				\node[draw,circle,thick,scale=1,fill=black,label=above:{}] (3) at (2,0){};
				\node[draw,circle,thick,scale=1,fill=black,label=above:{}] (5) at (0,1){};
				\node[draw,circle,thick,scale=1,fill=black,label=above:{}] (4) at (1,1){};
				\draw[thick] (1) to (2) to (3) to (4) to (5)  to (1);
				\draw[thick] (5) to (3);
				\draw[thick] (5) to (2);
			\end{tikzpicture}};

			\node (4) at (2,-3) {
			\begin{tikzpicture}
				\node[draw,circle,thick,scale=1,fill=black,label=above:{}] (1) at (0,0){};
				\node[draw,circle,thick,scale=1,fill=black,label=above:{}] (2) at (1,0){};
				\node[draw,circle,thick,scale=1,fill=black,label=above:{}] (3) at (2,0){};
				\node[draw,circle,thick,scale=1,fill=black,label=above:{}] (5) at (0,1){};
				\node[draw,circle,thick,scale=1,fill=black,label=above:{}] (4) at (1,1){};
				\draw[thick] (1) to (2) to (3) to (4) to (5)  to (1);
				\draw[thick] (2) to (4);
			\end{tikzpicture}};
			
			\node (5) at (8,-3) {
						\begin{tikzpicture}
				\node[draw,circle,thick,scale=1,fill=black,label=above:{}] (1) at (0,0){};
				\node[draw,circle,thick,scale=1,fill=black,label=above:{}] (2) at (1,0){};
				\node[draw,circle,thick,scale=1,fill=black,label=above:{}] (3) at (2,0){};
				\node[draw,circle,thick,scale=1,fill=black,label=above:{}] (5) at (0,1){};
				\node[draw,circle,thick,scale=1,fill=black,label=above:{}] (4) at (1,1){};
				\draw[thick] (1) to (2) to (3) to (4) to (5)  to (1);
				\draw[thick] (2) to (5);
			\end{tikzpicture}};
			
			\node (6) at (5,-6){
			\begin{tikzpicture}
				\node[draw,circle,thick,scale=1,fill=black,label=above:{}] (1) at (0,0){};
				\node[draw,circle,thick,scale=1,fill=black,label=above:{}] (2) at (1,0){};
				\node[draw,circle,thick,scale=1,fill=black,label=above:{}] (3) at (2,0){};
				\node[draw,circle,thick,scale=1,fill=black,label=above:{}] (5) at (0,1){};
				\node[draw,circle,thick,scale=1,fill=black,label=above:{}] (4) at (1,1){};
				\draw[thick] (1) to (2) to (3) to (4) to (5)  to (1);				
			\end{tikzpicture}
			};

			\node (7) at (12,-3) {
						\begin{tikzpicture}
				\node[draw,circle,thick,scale=1,fill=black,label=above:{}] (1) at (0,0){};
				\node[draw,circle,thick,scale=1,fill=black,label=above:{}] (2) at (1,0){};
				\node[draw,circle,thick,scale=1,fill=black,label=above:{}] (3) at (2,0){};
				\node[draw,circle,thick,scale=1,fill=black,label=above:{}] (5) at (0,1){};
				\node[draw,circle,thick,scale=1,fill=black,label=above:{}] (4) at (1,1){};
				\draw[thick] (1) to (2) to (3) to (4) to (5)  to (1);
				\draw[thick] (3) to (5);
			\end{tikzpicture}};

			\node (8) at (-2,-3) {
						\begin{tikzpicture}
				\node[draw,circle,thick,scale=1,fill=black,label=above:{}] (1) at (0,0){};
				\node[draw,circle,thick,scale=1,fill=black,label=above:{}] (2) at (1,0){};
				\node[draw,circle,thick,scale=1,fill=black,label=above:{}] (3) at (2,0){};
				\node[draw,circle,thick,scale=1,fill=black,label=above:{}] (5) at (0,1){};
				\node[draw,circle,thick,scale=1,fill=black,label=above:{}] (4) at (1,1){};
				\draw[thick] (1) to (2) to (3) to (4) to (5)  to (1);
				\draw[thick] (1) to (4);
			\end{tikzpicture}};

\end{tikzpicture}
}
\end{center}
\caption{Flops between the three crepant resolutions of the suspended pinch point $z u^2=   x y $ and all its toric partial resolutions.
The suspended pinch point is at the bottom row. The three crepant resolutions are on the  top row.  The four partial resolutions are in the middle row. 
The external varieties of the middle row have the singularities of a cylindrical quadric cone $k[x,y,z,t]/(z^2 - x y)$ while the others have singularities of a quadric threefold with a double point  $k[x,y,z,t]/(xy-zt)$. 
\label{Figure:SSPToric}
}

\begin{center}
\scalebox{.8}{
\begin{tikzpicture}
			\draw[dashed, latex-latex, line width=1 pt] (1.2,0)--(3.2,0);  \draw[dashed, latex-latex, line width=1 pt]  (6.2,0)--(8.2,0); \draw[dashed, latex-latex, line width=1 pt] (4.3,3.5)--(4.3,1.5);
					
					\node (1) at (0,0)	{\begin{tikzpicture}
				\node[draw,circle,thick,scale=1,fill=black,label=above:{}] (6) at (0,2){};
				\node[draw,circle,thick,scale=1,fill=black,label=above:{}] (1) at (0,0){};
				\node[draw,circle,thick,scale=1,fill=black,label=above:{}] (2) at (1,0){};
				\node[draw,circle,thick,scale=1,fill=black,label=above:{}] (3) at (2,0){};
				\node[draw,circle,thick,scale=1,fill=black,label=above:{}] (5) at (0,1){};
				\node[draw,circle,thick,scale=1,fill=black,label=above:{}] (4) at (1,1){};
				\draw[thick] (1) to (2) to (3) to (4) to (5)  to (1) to (6) to (4);
				\draw[thick] (1) to (4) to (2);
			\end{tikzpicture}};
					\node (2) at (5,0)	{\begin{tikzpicture}
				\node[draw,circle,thick,scale=1,fill=black,label=above:{}] (6) at (0,2){};
				\node[draw,circle,thick,scale=1,fill=black,label=above:{}] (1) at (0,0){};
				\node[draw,circle,thick,scale=1,fill=black,label=above:{}] (2) at (1,0){};
				\node[draw,circle,thick,scale=1,fill=black,label=above:{}] (3) at (2,0){};
				\node[draw,circle,thick,scale=1,fill=black,label=above:{}] (5) at (0,1){};
				\node[draw,circle,thick,scale=1,fill=black,label=above:{}] (4) at (1,1){};
				\draw[thick] (1) to (2) to (3) to (4) to (5)  to (1) to (6) to (4);
				\draw[thick] (2) to (4);   \draw[thick] (5) to (2);
			\end{tikzpicture}};
			\node (3) at (10,0) 
			{
			\begin{tikzpicture}
				\node[draw,circle,thick,scale=1,fill=black,label=above:{}] (6) at (0,2){};
				\node[draw,circle,thick,scale=1,fill=black,label=above:{}] (1) at (0,0){};
				\node[draw,circle,thick,scale=1,fill=black,label=above:{}] (2) at (1,0){};
				\node[draw,circle,thick,scale=1,fill=black,label=above:{}] (3) at (2,0){};
				\node[draw,circle,thick,scale=1,fill=black,label=above:{}] (5) at (0,1){};
				\node[draw,circle,thick,scale=1,fill=black,label=above:{}] (4) at (1,1){};
				\draw[thick] (1) to (2) to (3) to (4) to (5)  to (1) to (6) to (4);
				\draw[thick] (5) to (3);
				\draw[thick] (5) to (2);
			\end{tikzpicture}};
			\node (4) at (5,5)	{\begin{tikzpicture}
				\node[draw,circle,thick,scale=1,fill=black,label=above:{}] (6) at (0,2){};
				\node[draw,circle,thick,scale=1,fill=black,label=above:{}] (1) at (0,0){};
				\node[draw,circle,thick,scale=1,fill=black,label=above:{}] (2) at (1,0){};
				\node[draw,circle,thick,scale=1,fill=black,label=above:{}] (3) at (2,0){};
				\node[draw,circle,thick,scale=1,fill=black,label=above:{}] (5) at (0,1){};
				\node[draw,circle,thick,scale=1,fill=black,label=above:{}] (4) at (1,1){};
				\draw[thick] (1) to (2);
				\draw[thick] (2) to (3);
				\draw[thick]  (3) to (4);
				\draw[thick]  (4) to (2);
				
							\draw[thick]  (1) to (6) to (4);
				\draw[thick] (2) to (5);   \draw[thick] (6) to (2);
			\end{tikzpicture}};
			\end{tikzpicture}}
			\end{center}
			\caption{Flops between the four crepant resolutions of the singularity $k[x,y,z,t]/(t^2- x y z )$. 
			This binomial variety is the double cover of $\mathbb{A}^3$ branched along the three coordinate axes.
			This singular variety is a $\mathbb{Z}_2\times\mathbb{Z}_2$ orbifold of $\mathbb{C}^3$. \label{Fig:C322}
			 }
			\end{figure}

			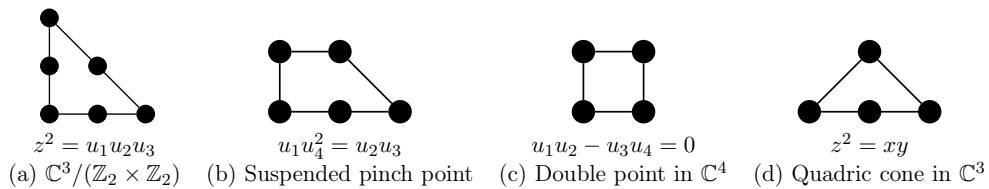
\begin{figure}[hbt]
 \begin{center}
 \scalebox{.8}{
 \begin{tabular}{cccc}
\scalebox{.8}{
 \begin{tikzpicture}
				\node[draw,circle,thick,scale=1,fill=black,label=above:{}] (6) at (0,2){};
				\node[draw,circle,thick,scale=1,fill=black,label=above:{}] (1) at (0,0){};
				\node[draw,circle,thick,scale=1,fill=black,label=above:{}] (2) at (1,0){};
				\node[draw,circle,thick,scale=1,fill=black,label=above:{}] (3) at (2,0){};
				\node[draw,circle,thick,scale=1,fill=black,label=above:{}] (5) at (0,1){};
				\node[draw,circle,thick,scale=1,fill=black,label=above:{}] (4) at (1,1){};
							\draw[thick] (1) to (3) to (6) to (1);
			\end{tikzpicture}}
 &
 
 \begin{tikzpicture}
				\node[draw,circle,thick,scale=1,fill=black,label=above:{}] (1) at (0,0){};
				\node[draw,circle,thick,scale=1,fill=black,label=above:{}] (2) at (1,0){};
				\node[draw,circle,thick,scale=1,fill=black,label=above:{}] (3) at (2,0){};
				\node[draw,circle,thick,scale=1,fill=black,label=above:{}] (5) at (0,1){};
				\node[draw,circle,thick,scale=1,fill=black,label=above:{}] (4) at (1,1){};
				\draw[thick] (1) to (3) to (4) to (5) to (1);
			\end{tikzpicture}
 &
 \begin{tikzpicture}
				\node[draw,circle,thick,scale=1,fill=black,label=above:{}] (1) at (0,0){};
				\node[draw,circle,thick,scale=1,fill=black,label=above:{}] (2) at (1,0){};
				\node[draw,circle,thick,scale=1,fill=black,label=above:{}] (5) at (0,1){};
				\node[draw,circle,thick,scale=1,fill=black,label=above:{}] (4) at (1,1){};
				\draw[thick] (1) to (2) to (4) to (5) to  (1);
			\end{tikzpicture}
 & 
 \begin{tikzpicture}
				
				\node[draw,circle,thick,scale=1,fill=black,label=above:{}] (1) at (0,0){};
				\node[draw,circle,thick,scale=1,fill=black,label=above:{}] (2) at (1,0){};
				\node[draw,circle,thick,scale=1,fill=black,label=above:{}] (3) at (2,0){};
				\node[draw,circle,thick,scale=1,fill=black,label=above:{}] (4) at (1,1){};
				\draw[thick] (1) to (3) to (4) to (1);
			\end{tikzpicture}\\
			 $z^2=u_1 u_2 u_3$ & $u_1 u_4^2 =u_2 u_3$ & $u_1 u_2 -u_3 u_4=0$ & $z^2 = xy $\\
			 (a)   $\mathbb{C}^3/(\mathbb{Z}_2\times\mathbb{Z}_2)$& (b) Suspended pinch point & (c) Double point in $\mathbb{C}^4$ & (d)  Quadric cone in $\mathbb{C}^3$ \\
 \end{tabular}
 }
 \end{center}
\caption{Some of the singular binomial varieties encountered in this paper.  
}
\end{figure}

\clearpage

\subsection{SPP flopping between Y$_4$, Y$_5$, Y$_6$, and Y$_8$}

The  flops connecting  Y$_4$, Y$_5$, and Y$_6$ can be easily understood using the tree of blowups given in equation 
\eqref{Ch456}.  Consider the proper transform $Y$ of the Weierstrass equation after the four blowups defining $X_4$ in equation \eqref{Ch456}. 
By definition, $Y$ is a common blowdown of Y$_4$, Y$_5$, and Y$_6$. It can be written in the following suggestive form  (see equation \eqref{Y6Res}):
\begin{equation}
Y:\quad e_2( e_3 y^2-a e_1  s^3 x+b e_1^2 e_2^2 s^5)- e_3  e_4^2  (e_1  x^3)=0.
\end{equation}
In the patch $e_1 x ys\neq 0$, it has the following structure 
\begin{equation}
Y :\quad e_2  q=e_3 e_4^2 , \quad q=(e_3 y^2-a e_1  s^3 x+b e_1^2 e_2^2 s^5)/(e_1 x^3).
\end{equation}
We recognize this as a suspended pinch point singularity at the origin $e_2 =q=e_3=e_4=0$. Such a singularity admits three crepant resolutions described by the tail of the tree of blowups in \eqref{Ch456} and mimic the blowups in Figure \ref{Fig:SSPRes}. 

The same story hold for the flops between  the minimal models Y$_8$, Y$_5$, and Y$_4$ by considering the proper transform of the Weierstrass model after the first five blowups defined in the tree of equation 
\eqref{Ch458}. After these five blowups, the proper transform of the Weierstrass model of an E$_7$-model takes the following form (see equation \eqref{eq:Y8}: 
\begin{equation}
e_4  y^2-e_1 e_2 e_3 (e_2 e_4 e_5^2  x^3+a e_1 s^3 x+b e_1^2 e_3  s^5)=0.
\end{equation}
Since when $e_4=y=0$, $e_1e_2$ is nonzero by the projective coordinates in equation \eqref{eq:Y8projcoord}. Thus, we again have an equation of the form 
\begin{equation}
e_4 y^2 =e_3 A.
\end{equation}
We recognize the form of the binomial variety of a suspended pinch point. We then derive $Y_4$, $Y_5$, and $Y_5$ as its three crepant resolutions following the discussion of section  \ref{sec:SSP}.
The last three blowups in each branch of the tree presented in equation \eqref{Ch458} mimic the resolutions of the suspended pinch points as given in section  \ref{sec:SSP}.

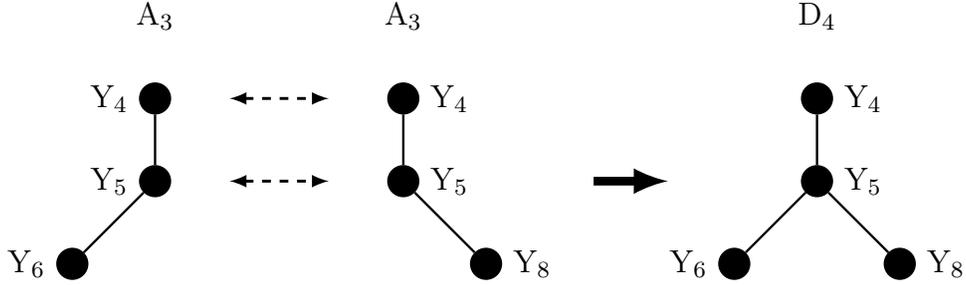
\begin{figure}[h!]
\begin{center}
			\scalebox{1.1}{\begin{tikzpicture}
					\node (1) at (0,0)	{\begin{tikzpicture}
				\node[draw,circle,thick,scale=1,fill=black,label=left:{Y$_6$}] (1) at (-2,0){};
				\node[draw,circle,thick,scale=1,fill=black,label=left:{Y$_5$}] (2) at (-1,1){};
				\node[draw,circle,thick,scale=1,fill=black,label=left:{Y$_4$}] (3) at (-1,2){};
				\draw[thick] (1) to (2) to (3);
				\node[draw,circle,thick,scale=1,fill=black,label=right:{Y$_8$}] (4) at (3,0){};
				\node[draw,circle,thick,scale=1,fill=black,label=right:{Y$_5$}] (5) at (2,1){};
				\node[draw,circle,thick,scale=1,fill=black,label=right:{Y$_4$}] (6) at (2,2){};
				\draw[thick] (4) to (5) to (6);
				\node[draw,circle,thick,scale=1,fill=black,label=left:{Y$_6$}] (7) at (6,0){};
				\node[draw,circle,thick,scale=1,fill=black,label=right:{Y$_5$}] (8) at (7,1){};
				\node[draw,circle,thick,scale=1,fill=black,label=right:{Y$_4$}] (9) at (7,2){};
				\node[draw,circle,thick,scale=1,fill=black,label=right:{Y$_8$}] (10) at (8,0){};
				\draw[thick] (7) to (8) to (9); \draw[thick] (8) to (10);	
				\draw[dashed, latex-latex, line width=1 pt] (-.1,1)--(1.1,1);
								\draw[dashed, latex-latex, line width=1 pt] (-.1,2)--(1.1,2);
								\draw[ -latex, line width=3 pt] (4.3,1)--(5.2,1);
								\node (11) at (-1,3){A$_3$};
								\node (13) at (2,3){A$_3$};
																\node (14) at (7,3){D$_4$};

			\end{tikzpicture}};		
\end{tikzpicture}}
\end{center}
\caption{
The triples ($Y_4, Y_5,Y_8$) and ($Y_4, Y_5,Y_6$) both form an A$_3$ Dynkin diagram where the nodes are the varieties and two nodes are connected if they are related by a flop as shown on the left and middle pictures. For each of the triples, the flops mirror those of a suspended pinch point.  By identifying $Y_4$ and $Y_5$ of the two triples with themselves, we end up gluing the two A$_3$ to form a D$_4$ Dynkin diagram representing all the flops between $Y_4$, $Y_5$, $Y_6$, and $Y_8$.
\label{Fig:A3A4D4}
}
\end{figure}

\section{ Application to $\mathcal{N}=1$ five-dimensional theories}
M-theory compactified on a Calabi-Yau threefold results in a five-dimensional supersymmetric gauge theory with eight supercharges that we denote 
${\cal N}=1$ five-dimensional supergravity. 
Such a theory contains a gravitational multiplet, n$_T$ tensor multiplets, n$_V$ vector multiplets, and n$_H$ hypermultiplets. 
In five-dimensional spacetime, a massless tensor  multiplet is dual to a massless vector. 
In what follows, we assume that all tensors are massless and are dualized to vectors.  
Each vector multiplet contains a real scalar field $\phi$ and each hypermultiplet contains a quaternion (four real fields). 
 The kinetic terms of all the vector multiplets and the graviphoton as well as the Chern-Simons terms are determined by a real function of the scalar fields called the prepotential $\mathscr{F}(\phi)$.

 In the Coulomb branch of an $\mathcal{N}=1$ supergravity theory in five dimensions, 
 the scalar fields of the vector multiplets take values in the Cartan sub-algebra of the Lie algebra and the Lie group is broken to a product of  Abelian factors. This implies that the charge of a hypermultiplet is simply a  
 weight of the representation under which it transforms \cite{IMS}.   In the presence of hypermultiplets charged under a representation $\mathbf{R}$ of the gauge group, the Coulomb phase of the theory is characterized by a  one-loop quantum correction to the  superpotential derived by integrating out massive hypermultiplets.  
  The full quantum superpotential $\mathcal{F}(\varphi)$ is  protected  from further corrections by supersymmetry and is  a piecewise cubic polynomial  of the scalar fields. The metric of the scalar fields of the vector multiplets is the matrix of second derivatives of $\mathcal{F}(\varphi)$ and it is differentiable  in open regions that define distinct Coulomb branches separated by walls on which some of the massive hypermultiplets become massless and should be added to the low energy description of the theory. 
The Intrilligator-Morrison-Seiberg (IMS) prepotential is the quantum contribution to the prepotential of a five-dimensional gauge theory after integrating out all massive fields. 

Let $\phi$ be in the Cartan subalgebra of a Lie algebra $\mathfrak{g}$. We denote by  $\mathbf{R}_i$ the representations under which the hypermultiplets transform. 
The  weights are in the dual space of the Cartan subalgebra.  We denote the evaluation of a  weight on a coroot vector $\phi$ as a scalar product $\langle \mu,\phi \rangle$.  Denoting the roots by $\alpha$ and the weights of $\mathbf{R}_i$ by $\varpi$ we have \cite{IMS}
\begin{align}
6\mathscr{F}_{\text{IMS}}(\phi) =&\frac{1}{2} \left(
\sum_{\alpha} |\langle \alpha, \phi \rangle|^3-\sum_{i} \sum_{\varpi\in \mathbf{R}_i} n_{\mathbf{R}_i} |\langle \varpi, \phi\rangle|^3 
\right).\label{Eq:IMS}
\end{align}
For all simple groups with the exception of SU$(N)$ with $N\geq 3$, this is the full purely cubic sector of the prepotential as there are no non-trivial third Casimir invariants. 

The {\em open dual fundamental Weyl chamber} is defined as the cone $ \langle \alpha, \phi \rangle>0$, where $\alpha$ runs through the set of all simple positive roots. 
This choice makes it possible to remove the absolute values in the sum over the roots. 
 For a given choice of a group $G$ and representations $\mathbf{R}_i$, 
 we then consider the  hyperplane arrangement  $\langle \varpi, \phi\rangle=0$, where $\varpi$ 
  runs through all the weights of all the representations $\mathbf{R}_i$ and $\phi$ is an element of the coroot space. 
 If none of these hyperplanes intersect the interior of the dual Weyl chamber of $\mathfrak{g}$, we can safely remove the absolute values in the sum over the weights. 
 Otherwise, we have hyperplanes partitioning the dual fundamental Weyl chamber into subchambers. Each of these subchambers is defined by the signs of the linear forms $\langle \varpi, \phi\rangle$ and corresponds to a specific sector of the Coulomb branch. 
 Two such subchambers are adjacent when they differ by the sign of a unique linear form.  
 Within each of these subchambers, the prepotential is a cubic polynomial. 
 But as we go from one subchamber to an adjacent one, we have to go through one of the walls defined by the weights. 
 The transition from one chamber to an adjacent chamber is a phase transition.

\subsection{Prepotential and Coulomb phases}   
 It is immediate to compute the prepotential for a gauge theory with gauge group E$_7$ coupled to  $n_A$ hypermultiplets tranforming in the adjoint representation and $n_F$ hypermultiplets transforming in the fundamental representation $\bf{56}$.

 First we recall that the open dual fundamental  Weyl chamber is in our conventions given by the seven inequalities  $\langle \alpha_i, \phi\rangle$ for $i=1,\ldots, 7$:
 \begin{equation}
 \begin{cases}
 2 \phi _1-\phi _2>0, \quad 
 -\phi _1+2 \phi _2-\phi _3>0,\quad
 -\phi _2+2 \phi _3-\phi _4-\phi _7>0,\quad
 -\phi _3+2 \phi _4-\phi _5>0,\\
 -\phi _4+2 \phi _5-\phi _6>0,\quad 
 2 \phi _6-\phi _5>0,\quad
 2 \phi _7-\phi _3>0.
 \end{cases}
 \end{equation}
Each of the eight chambers of I(E$_7$, $\bf{56})$ is uniquely defined by the signs taken by the seven linear functions $\langle \varpi_i, \phi\rangle$ for $i=\{19, 20,23,26,29,32,30\}$. 
These are classified in Table \ref{Table:Ch}. Imposing these signs together with the condition of being inside the open dual Weyl chamber defines each chamber. 
We can then compute the prepotential for each chamber:  as by their very definition  they resolve the absolute values in the definition of $\mathscr{F}(\phi)$, we end up with polynomials.   We illustrate the process for Chamber 5. 
The signs defining Chamber 5 are  $(+,+,+, +,-,-,-)$. 
\begin{equation}
\begin{cases}
\phi _1-\phi _6>0\quad 
-\phi _1+\phi _2-\phi _6>0\quad
-\phi _2+\phi _3-\phi _6>0\quad
-\phi _3+\phi _4-\phi _6+\phi _7>0\\
-\phi _4+\phi _5-\phi _6+\phi _7<0\quad
\phi _7-\phi _5<0\quad
\phi _4-\phi _6-\phi _7<0
\end{cases}
\end{equation} 
We can then immediately compute the  prepotential $\mathscr{F}_5(\phi)$:
 \begin{equation}
 \begin{aligned}
6\mathscr{F}_{5}(\phi) &=
8 (1-n_A) (\phi _1^3+ \phi _2^3+\phi _3^3+\phi _5^3)+(8 (1-n_A)-2n_F)( \phi _4^3+\phi _6^3+ \phi _7^3 )\\
& +6 (1-n_A-n_F)(\phi _2 \phi _1^2- \phi _2 \phi _3^2)+6(2 n_A+n_F-2)  \phi _2^2 \phi _1 \\
&+ 6(n_A-n_F-1) \left(\phi _3^2\phi _4
+ \phi _3^2\phi _7-2 \phi_5 \phi_6^2\right)+6 \phi _5^2 \phi _6 (3 n_A-2 n_F-3)\\
&+6 (1-n_A) \phi _4 \left(\phi_5^2 -2\phi_4 \phi _5\right)-6n_F\phi _4^2 \left(  \phi _6+  \phi _7\right) +12 n_F \phi_4\phi _6 \phi _7\\
&-6 n_F \left(\phi _3 \phi _2^2-\phi _3 \phi _4^2-\phi _3 \phi _7^2+\phi _6 \phi _7^2+\phi _6^2 \phi _7-2 \phi _3 \phi _4 \phi _7   -2\phi_4 \phi _6 \phi _5+\phi_4 \phi _6^2+\phi_4 \phi _7^2
\right).
\end{aligned}
 \end{equation}

\subsection{Counting charged hypermultiplets in $5d$ using triple intersection numbers}
We can compute the number of charged hypermultiplets by comparing the triple intersection numbers with $6 \mathscr{F}_{IMS}(\varphi)$. 

We illustrate the process in Chamber 5. We can compute the number of multiplets in the adjoint and the fundamental representations by just computing the intersection numbers $D_1^3$ and $D_4^3$ as the coefficients of $\phi_1^3$ and $\phi_4^3$ depends on linearly on $n_A$ and $n_F$ and are supposed to match the intersection numbers $D_1^3$ and $D_4^3$. Since $D_1$ is a ruled surface over a curve of genus $g$ and $D_4$ is the blowup at $S\cdot V(a)$ of a ruled surface over a curve of genus $g$  we have $D_1^3= 8(1-g)$ and $D_4^3= 8(1-g)-S\cdot V(a)$. 
  Thus 
\begin{equation}
n_{A}=g, \quad n_F= \frac{1}{2}S\cdot V(a).
\end{equation}
Since the class of $a$ is  $[a]=-4K-3S$, and $KS=2g-2 -S^2$, we conclude that 
\begin{equation}\label{Eq.NAF}
n_{A}=g, \quad n_F= \frac{1}{2}\Big(8(1-g)+S^2\Big).
\end{equation}
We can apply the same technique with the same result in the other chambers as the number of multiplets does not change between different phases of the  Coulomb branch. 

In the next section, we will see that the same numbers $n_A$ and $n_F$ are required to cancel anomalies of a ${\cal N}=(1,0)$ six-dimensional theory obtained from a compactification of  F-theory on a  Calabi-Yau variety that is a crepant resolution of the Weierstrass model of an E$_7$-model. 
\section{Application to $\mathcal{N}=(1,0)$ six-dimensional theories}
Local anomalies in six-dimensional theories can be used to constrain the number of charged hypermultiplets. In the best cases, anomaly conditions can even completely fix the matter content of a given six-dimensional theory. 
This was already  brilliantly illustrated by Seiberg just after the first string revolution in a paper in which the absence of anomalies was used to derive the number of matter multiplets in a six-dimensional superconformal field theory~\cite{Seiberg:1988pf}. 

\subsection{Anomaly cancellations in $\mathcal{N}=(1,0)$ six-dimensional theories}

Here, we consider a $(1,0)$ supersymmetric gauged supergravity theory. Such theories have chiral  spinors and chiral tensors and have potential  
gravitational, gauged, and mixed local anomalies. These local anomalies can be cancelled by the Green-Schwarz mechanism. 
 F-theory compactified on a Calabi-Yau threefold gives a $(1,0)$ supersymmetric six-dimensional gauged supergravity for which the Green-Schwarz mechanism was studied by Sadov~\cite{Sadov:1996zm}, see also~\cite{GM1,Park,  Monnier:2017oqd}. 
 In what follows, $R$ represents the Riemann curvature of spacetime and $\tr R^n$ are computed with respect to the fundamental representation of SO(5,1). We denote by $n_T$, n$_V^{(6)}$, and n$_H$ the number of tensor, vector, and hypermultiplets, respectively. The gauge group is $G$ and charged hypermultiplets transform in representations $\bf{R}_i$ of the gauge group. 
 We can distinguish two types of hypermultiplets: those that are neutral and those that transform under some representation $\bf{R}_i$ of the gauge group. 
 In our conventions, hypermultiplets that are charged under a zero weight of the representation are considered neutral.

The relevant trace identities for E$_7$ are \cite{Erler}
\begin{equation}
\tr_{\bf{133}} F^2=3\tr_{\bf{56}} F^2, \quad \tr_{\bf{133}} F^4=\frac{1}{6}(\tr_{\bf{56}} F^2)^2, \quad 
\tr_{\bf{56}} F^4=\frac{1}{24}(\tr_{\bf{56}} F^2)^2,
\end{equation}
where we note that
 E$_7$ does not have an independent fourth Casimir invariant. 
The trace identities for a representation $\mathbf{R}_{i}$ of a simple group $G$ are
\begin{equation}
\tr_{\bf{R}_{i}} F^2_a=A_{\bf{R}_{i}} \tr_{\bf{F}} F^2 , \quad \tr_{\bf{R}_{i}} F^4=B_{\bf{R}_{i}} \tr_{\bf{F}} F^4+C_{\bf{R}_{i}} (\tr_{\bf{F}} F^2 )^2,
\end{equation}
with respect to a  reference representation $\bf{F}$ that we can freely choose. We take it to be the fundamental representation. 
These  group theoretical coefficients are listed in \cite{Erler}.

The anomalies are canceled by the Green-Schwarz mechanism when I$_8$ factorizes \cite{Green:1984bx,Sagnotti:1992qw,Schwarz:1995zw}. 
The modification of the field strength $H$ of  the antisymmetric tensor $B$ is 
\begin{equation}
H= dB + \frac{1}{2} K \omega_{3L}+ \frac{2}{\lambda}S\omega_{a,3Y}, 
\end{equation}
where  $\omega_{3L}$ and $\omega_{a,3Y}$ are respectively the gravitational and Yang-Mills Chern-Simons  terms. 
 If I$_8$ factors as 
 \begin{equation}
 \text{I}_8= X\cdot  X,
 \end{equation}
  then the anomaly is canceled by adding the following Green-Schwarz counter-term 
\begin{equation}
\Delta L_{GS}\propto \frac{1}{2} B \wedge X.
\end{equation}
Following Sadov \cite{Sadov:1996zm}, to cancel the local anomalies, we consider
\begin{equation}
X= \frac{1}{2} K \tr R^2 + \frac{2}{\lambda} S\tr F^2,
\label{eq:Xfactor}
\end{equation} where  $\lambda$ is a constant normalization factor chosen such that the  smallest
topological charge of an embedded SU($2$) instanton in the gauge group G is one \cite{Kumar:2010ru, Park, Bernard}. 
This forces $\lambda$ to be the Dynkin index of the fundamental representation of  $G$ as listed in Table \ref{tb:normalization} \cite{Park}. 

If the simple group $G$ is supported on a divisor $S$, the local anomaly cancellation conditions are  the following equations \cite{Sadov:1996zm,Kumar:2010ru, Park, Bernard}
\begin{subequations}
\begin{align}
n_T&=9-K^2 , \\
n_H-n_V^{(6)}+29n_T-273 &=0,\\
\left(B_{\bf{adj}}-\sum_{i}n_{\bf{R}_{i}}B_{\bf{R}_{i}}\right)& = 0, \\
\lambda  \left(A_{\bf{adj}}-\sum_{i}n_{\bf{R}_{i}}A_{\bf{R}_{i}}\right) & =6  K\cdot S, \\
\lambda^2 \left(C_{\bf{adj}}-\sum_{i}n_{\bf{R}_{i}}C_{\bf{R}_{i}}\right) & =-3 S ^2.
\end{align}
\end{subequations}
The first equation gives us the number of tensor multiplets $n_T$. 
The second equation is the vanishing of the coefficient of $\tr R^4$ and will be checked at the end.
The third equation is automatically satisfied since  E$_7$  has no independent quartic Casimir (and therefore all the coefficients $B_{adj}$ and $B_i$ are zero). 

\begin{table}[htb]
\begin{center}
\begin{tabular}{|c|c|c|c|c|c|c|c|c|c|}
\hline
 $\mathfrak{g}$ & A$_n$ & B$_n$ & C$_n$ & D$_n$ & E$_8$ & E$_7$ & E$_6$&  F$_4$ & G$_2$ \\
 \hline
 $\lambda$ & $1$ & $2$  & $1$ & $2$ & $60$ & $12$ & $6$ & $6$ & $2$ \\
 \hline  
\end{tabular}
\caption{The normalization factors for each simple gauge algebra. See \cite{Kumar:2010ru}.}
\label{tb:normalization}
\end{center}
\end{table}

The total number of hypermultiplets is the sum of the neutral hypermultiplets and the charged hypermultiplets. 
For a compactification on a Calabi-Yau threefold $Y$, the  number of  neutral hypermultiplets is  $n_H^0=h^{2,1}(Y)+1$ \cite{Cadavid:1995bk}. The number of each multiplet is \cite{Cadavid:1995bk,GM1}
\begin{align}
n_V^{(6)}&=\dim{G}, \quad n_T=h^{1,1}(B)-1=9-K^2 , \\
n_H&=n_H^0+n_H^{ch}=h^{2,1}(Y)+1+\sum_{i} n_{\bf{R}_i} \left( \dim{\bf{R}_i}-\dim{\bf{R_{i}^{(0)}}} \right),
\end{align}
where the (elliptically-fibered) base $B$ is a rational surface with canonical class $K$. Here,   n$_{\bf{R}_i}$ is the number of hypermultiplets transforming non-trivially in the representation $\bf{R}_i$ of the gauge group, $\dim{\bf{R_{i}^{(0)}}}$ is the number of zero weights in the representation $\bf{R}_i$.  

\subsection{Counting hypermultiplets in $6d$ using anomaly cancellation conditions} 

With the gauge group E$_7$, we have $n_V^{(6)}=133$ and $h^{2,1}(Y)$ was computed in \cite{Euler} for any crepant resolution of a Weierstrass model of an E$_7$-model: 
\begin{equation}\label{eq:Hodge}
n_V^{(6)}=133, \quad h^{2,1}(Y)=18 + 29K^2 + 49K\cdot S + 21S^2.
\end{equation}
For E$_7$, the normalization factor $\lambda$ is $12$. 
Since we have a unique component in the gauge group, the anomaly equations are:
\begin{equation}
12 (3-3n_{\bf{133}}-n_{\bf{56}})= 6 K \cdot S, \quad 
12^2 (\frac{1}{6}-\frac{1}{6} n_{\bf{133}}-\frac{1}{24} n_{\bf{56}})=-3 S^2.
\end{equation}
Using the identity $2-2g=-K\cdot S -S^2$, where $g$ is the arithmetic genus of the curve $S$, we find:
\begin{equation}\label{eq:NH6}
n_{\bf{133}}=g, \quad n_{\bf{56}}=\frac{1}{2}(8-8g+ S^2),
\end{equation}
{which matches what we found by comparing the triple intersection numbers with the prepotential in the ${\cal N}=1$ five-dimensional theory obtained by compactifiying M-theory on the same elliptic fibration.}

 The total number of hypermultiplets is then 
  \begin{equation}\label{eq:nH}
n_H=(18 + 29K^2 + 49K\cdot S + 21S^2)+1+ (133-7) n_{\bf{133}}+56 n_{\bf{56}}= 29(5+K^2).
 \end{equation}
Using equations \eqref{eq:NH6} and \eqref{eq:nH}, it is a simple arithmetic computation to check that the coefficient of $\tr R^4$ vanishes identically \cite{Salam}:
\begin{equation}
n_H-n_V^{(6)}+29n_T-273=0.
\end{equation}

\begin{rem}
As explained in the introduction, the value of $n_{\bf{56}}$ computed in \eqref{eq:NH6} equals one-half of the intersection number $S\cdot V(a)$, 
which is the number of points over which the fiber III$^*$ degenerates to an incomplete fiber of type II$^*$. 
The number is a half-integer when $S$ has odd self-intersection. 
If $S$ is a rational curve, we see that requesting $n_{\bf{56}}$ to be non-negative forces the self-intersection number $S^2$ to be greater or equal to $-8$. 
A rational curve with self-intersection $-n$ $(n>0)$ is called a $-n$-curve. 
When $S$ is a $-8$-curve, $S$ does not intersect $V(a)$ and there is no matter in the representation $\bf{56}$. 
When $S$ is a $-7$-curve, $S$ intersects $V(a)$ at exactly one point and we get one half-hypermultiplet in the representation $\bf{56}$. 
\end{rem}

\section*{Acknowledgements}
 M.E. is supported in part by the National Science Foundation (NSF) grant DMS-1701635 ``Elliptic Fibrations and String Theory.''  S.P. is supported by the National Science Foundation through a Graduate Research Fellowship under grant DGE-1144152 and by the Hertz Foundation through a Harold and Ruth Newman Fellowship.

\appendix
\section{Relevant Definitions}
\begin{defn}
An {\em elliptic fibration} is a projective surjective morphism $\varphi: X\longrightarrow B$ between normal varieties such that $\varphi$  is proper  and endowed with  a rational section\footnote{Given a surjective morphism $\varphi:X\to B$, a rational section is a rational map $\sigma:B\longrightarrow X$ such that 
$\varphi\circ \sigma$ is the identity away from a Zariski closed set of $X$.}
and the  fiber  over the generic point of $B$ is a regular projective curve of genus one. 
\end{defn}

\begin{defn}
The {\em discriminant locus} of an elliptic fibration  $\varphi: X\longrightarrow B$ is the locus of points $p$ such that the fiber over $p$ is a singular fiber. 
Under mild assumptions, the discriminant locus is a divisor in the base. 
The prime components of the pre-image of prime components of the discriminant locus are called {\em fibral divisors}. 
\end{defn}

\begin{defn}[$\mathbb{Q}$-Gorenstein, $\mathbb{Q}$-factorial, nef-divisors]
\quad 
\begin{enumerate}
\item A prime Weil divisor $D$ is said to be  $\mathbb{Q}$-Cartier   when some integral multiple of $D$ is a Cartier divisor. 
\item A variety $Y$ is said to be {\em $\mathbb{Q}$-Gorenstein} if its canonical class is $\mathbb{Q}$-Cartier. 
\item A normal variety $Y$ is said to be {\em $\mathbb{Q}$-factorial} if every prime Weil divisor $D$ on $X$ is  $\mathbb{Q}$-Cartier. \item Let  $f:{Y}\longrightarrow X$ be a  projective morphism  between normal varieties. A divisor $D$ is said to be {\em $f$-nef  (or relatively nef over $S$)} when $\int_Y D\cdot C \geq 0$ for any curve $C$ projected to a point by $f$.
\end{enumerate}
\end{defn}

\begin{defn}[Minimal models]
A {\em minimal model} of  a projective morphism $f:Y\longrightarrow S$ between quasi-projective normal varieties is a projective morphism $f': Y'\longrightarrow S$ such that 
\begin{enumerate}
\item There exists a birational map $\alpha: Y'\dasharrow Y$ such that $f'=f\circ \alpha$.
\item $Y'$ has only $\mathbb{Q}$-factorial terminal singularities. 
\item $K_{Y'}$ is $f'$-nef.
\end{enumerate}
\end{defn}
\begin{defn}[Modifications, (crepant) resolutions of singularities]\quad
\begin{enumerate}
\item A {\em modification} of a variety $X$ is a proper birational  morphism $f:\widetilde{X}\longrightarrow X$. 
\item A  {\em resolution} of a singularity is a modification $f:\widetilde{X}\longrightarrow X$ such that $\widetilde{X}$ is nonsingular. 
\item A resolution $f:\widetilde{X}\longrightarrow X$  is said to be {\em crepant} when $X$ is  $\mathbb{Q}$-Gorenstein and $f$ preserves the canonical class of $X$ ($K_{\widetilde{X}}=\varphi^* K_X$). 
\item A {\em strong resolution} is a resolution that is an isomorphism away from the singular locus, that is, a resolution $f:\widetilde{X}\longrightarrow X$ such that the restriction of $f$ to $\widetilde{X}\backslash f^{-1}(Sing(X))$  is an isomorphism onto  $X\backslash Sing(X)$. 

\end{enumerate}
\end{defn}
In this paper we only consider strong resolutions and simply call them {\em resolutions}. 
\begin{rem}
 A crepant resolution of $Y$ is automatically a nonsingular minimal model over $Y$.  
Distinct crepant resolutions are connected by a chain of flops if we assume the existence and the termination of flops. 
Kawamata has shown~\cite{Kawamata} that the seminal results of 
Birkar-Cascini-Hacon-McKernan~\cite{BCHM} together with the boundedness
of length of extremal rays imply that different minimal models are  connected by a sequence of flops. 

\end{rem}

 A variety can be $\mathbb{Q}$-factorial in the projective  category but fail to be  $\mathbb{Q}$-factorial in the analytic  category. 
An important consequence of the following lemma is that terminal $\mathbb{Q}$-factorial singularities are obstructions to the existence of a crepant resolution. 
\begin{lem}[{See \cite[Corollary 2.63]{KM}}] 
Let $f:Y\to X$ be a birational morphism between normal varieties. If $X$ is $\mathbb{Q}$-factorial, then the exceptional locus of $f$ is of pure codimension one. 
\end{lem}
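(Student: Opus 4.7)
The plan is to argue by contradiction, exploiting the tension between a putative codimension-$\geq 2$ exceptional component and the divisor-theoretic rigidity forced by $\mathbb{Q}$-factoriality of $X$. Suppose that there is an irreducible component $W$ of $\mathrm{Exc}(f)$ with $\mathrm{codim}_Y W \geq 2$. Since $\mathrm{Exc}(f)$ has only finitely many irreducible components, one may pass to an open neighborhood $V \subseteq Y$ of a general point $y \in W$ in which $\mathrm{Exc}(f|_V) = W \cap V$; thus the entire exceptional locus on $V$ has codimension $\geq 2$ in $V$.

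First, I would produce a proper irreducible curve $C \subseteq V$ through $y$ that is contracted by $f$. This uses Zariski's Main Theorem: since $X$ is normal and $y$ lies in the exceptional locus, $f$ cannot be quasi-finite at $y$, so the fiber $f^{-1}(f(y))$ is positive-dimensional at $y$. Slicing this fiber by sufficiently general hyperplanes through $y$ yields the desired contracted curve $C$. Next I would choose a Cartier divisor $D$ on $V$ passing through $y$ and meeting $C$ properly---for example, the zero locus of a general element of the maximal ideal $\mathfrak{m}_y$ which does not vanish identically on $C$---so that $D \cdot C > 0$.

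The crux is now the $\mathbb{Q}$-factoriality of $X$. The pushforward $f_* D$ is a Weil divisor on $f(V)$, and by hypothesis some positive multiple $n f_* D$ is Cartier. Therefore $f^{*}(n f_* D)$ is a well-defined Cartier divisor on $V$. The difference
\[
f^{*}(n f_* D) - n D
\]
is an $f$-exceptional Weil divisor on $V$, hence supported in $\mathrm{Exc}(f|_V) = W \cap V$. But every (nonzero) Weil divisor is of pure codimension one, while $W \cap V$ has codimension $\geq 2$; hence this difference must vanish, giving $f^{*}(n f_* D) = n D$. Intersecting with $C$ and using that $C$ is contracted---so any divisor pulled back from $X$ restricts to a degree-zero line bundle on $C$---yields
\[
0 = f^{*}(n f_* D) \cdot C = n\, D \cdot C,
\]
contradicting $D \cdot C > 0$. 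This contradiction shows that no component $W$ of $\mathrm{Exc}(f)$ can have codimension $\geq 2$, completing the proof.

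The main obstacle I anticipate is bookkeeping around the pullback and pushforward of Weil versus Cartier divisors: in particular, justifying the identity $f^{*}(f_* D) \equiv D$ modulo exceptional divisors when $D$ is Cartier but $f_* D$ is only $\mathbb{Q}$-Cartier, and verifying that the intersection $f^{*}(n f_* D)\cdot C$ is really zero on a possibly singular $Y$ (for which one uses that $C$ lies in the smooth locus of $Y$ after shrinking, or that the intersection is computed via a Cartier representative). A secondary technical point is the genericity arrangement of $y$ and the existence of the contracted curve $C$, both of which rely on the properness implicit in the notion of ``birational morphism'' used here.
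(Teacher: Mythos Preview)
The paper does not supply its own proof of this lemma; it simply cites \cite[Corollary~2.63]{KM}. Your argument is essentially the standard one found there: produce a contracted curve $C$ through a point of a putative small exceptional component $W$, push a Cartier divisor $D$ with $D\cdot C>0$ forward, use $\mathbb{Q}$-factoriality to make the pushforward $\mathbb{Q}$-Cartier, pull back, and obtain a contradiction from the projection formula.

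One point worth tightening is the localization step. You pass to an open $V\subseteq Y$ in which the exceptional locus is exactly $W\cap V$, but removing the other components of $\mathrm{Exc}(f)$ directly from $Y$ destroys properness of $f|_V$, so the contracted curve $C$ through $y$ need not be complete nor contained in $V$. The clean fix is to shrink on the target: replace $X$ by an open $U$ obtained by deleting the images of the other components of $\mathrm{Exc}(f)$ (closed in $X$ since $f$ is proper), and set $V=f^{-1}(U)$. Then $f|_V:V\to U$ is still proper, $U$ is still $\mathbb{Q}$-factorial, and for $y$ general in $W$ the full fiber through $y$ lies in $V$, so $C\subseteq V$ is a genuine complete curve. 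With this adjustment (which you already anticipate in your closing caveats), the rest of your argument goes through verbatim and matches the reference.
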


\end{document}